\definecolor{purple1}{RGB}{100, 85, 255}
\definecolor{orange1}{RGB}{233, 146, 0}
\DeclareMathOperator*{\argmax}{arg\,max}
\DeclareMathOperator*{\argmin}{arg\,min}
\DeclareMathOperator*{\support}{supp}
\newtheorem{thm}{Theorem}
\newtheorem{lemma}{Lemma}
\newtheorem{corollary}{Corollary}
\newtheorem{prop}{Proposition}
\newtheorem{definition}{Definition}
\newcommand*{\QEDB}{\hfill\ensuremath{\blacksquare}}%
\begin{document}

\title{Sampling Requirements for Stable Autoregressive Estimation}

\author{Abbas~Kazemipour,~\IEEEmembership{Student Member,~IEEE,
}~Sina~Miran,~\IEEEmembership{Student Member,~IEEE,
}~Piya~Pal,~\IEEEmembership{Member,~IEEE,
}~Behtash~Babadi,~\IEEEmembership{Member,~IEEE,
}~and~Min~Wu,~\IEEEmembership{Fellow,~IEEE
}
\thanks{A. Kazemipour, S. Miran, B. Babadi and M. Wu are with the Department of Electrical and Computer Engineering (ECE), University of Maryland, College Park, MD 20742 USA (e-mails: kaazemi@umd.edu; smiran@umd.edu; behtash@umd.edu; minwu@umd.edu). P. Pal is with the Department of ECE, University of California, San Diego, La Jolla, CA 92093 (e-mail: pipal@ucsd.edu).}
\thanks{This work has been presented in part at the 50th Annual Conference on Information Sciences and Systems, 2016 \cite{kazemipour-ciss}.}
\thanks{Corresponding author: Behtash Babadi (e-mail: behtash@umd.edu).}
\thanks{Copyright (c) 2017 IEEE. Personal use of this material is permitted. However, permission to use this material for any other purposes must be obtained from the IEEE by sending an email to pubs-permissions@ieee.org.}}

\maketitle

\begin{abstract}
We consider the problem of estimating the parameters of a linear univariate autoregressive model with sub-Gaussian innovations from a limited sequence of consecutive observations. Assuming that the parameters are compressible, we analyze the performance of the $\ell_1$-regularized least squares as well as a greedy estimator of the parameters and characterize the sampling trade-offs required for stable recovery in the non-asymptotic regime. In particular, we show that {for a fixed sparsity level,} stable recovery of AR parameters is possible when the number of samples scale \emph{sub-linearly} with the AR order. Our results improve over existing sampling complexity requirements in AR estimation using the LASSO, when the sparsity level scales faster than the square root of the model order. We further derive sufficient conditions on the sparsity level that guarantee the minimax optimality of the $\ell_1$-regularized least squares estimate. Applying these techniques to simulated data as well as real-world datasets from crude oil prices and traffic speed data confirm our predicted theoretical performance gains in terms of estimation accuracy and model selection.
\end{abstract}

\begin{IEEEkeywords} linear autoregressive processes, sparse estimation, compressive sensing, sampling. \end{IEEEkeywords}

%
\IEEEpeerreviewmaketitle



%
\IEEEpeerreviewmaketitle

\section{Introduction}
Autoregressive (AR) models are among the most fundamental tools in analyzing time series. {Applications include financial time series analysis \cite{sang2015simultaneous} and traffic modeling \cite{farokhi2014vehicular,ahmed1982application,ahmed1979analysis,barcelo2010travel,clark2003traffic,robinson2003time}.} Due to their well-known approximation property, these models are commonly used to represent stationary processes in a parametric fashion and thereby preserve  the underlying structure of these processes \cite{akaike1969fitting}. In order to leverage the approximation property of AR models, often times parameter sets of very large order are required \cite{poskitt2007autoregressive}. For instance, any autoregressive moving average (ARMA) process can be represented by an AR process of infinite order. Statistical {inference} using these models is usually performed through fitting a long-order AR model to the data, which can be viewed as a truncation of the infinite-order representation \cite{shibata1980asymptotically, galbraith1997some, galbraith2001autoregression, ing2005order}. In general, the ubiquitous long-range dependencies in real-world time series, such as financial data, results in AR model fits with large orders \cite{sang2015simultaneous}.

In various applications of interest, the AR parameters fit to the data exhibit sparsity, that is, only a small number of the parameters are non-zero. Examples include autoregressive communication channel models, quasi-oscillatory data tuned around specific frequencies and financial time series \cite{baddour2005autoregressive, mann1999oscillatory, robinson2003time}. The non-zero AR parameters in these models correspond to significant time lags at which the underlying dynamics operate. {Traditional AR order selection criteria such as the Final Prediction Error (FPE) \cite{akaike1973maximum}, Akaike Information Criterion (AIC) \cite{akaike1970statistical} and Bayesian Information Criterion (BIC) \cite{schwarz1978estimating}, are based on asymptotic lower bounds on the mean squared prediction error.  Although there exist several improvements over these traditional results aiming at exploiting sparsity \cite{ing2005order,shibata1980asymptotically,wang2007regression}, the resulting criteria pertain to the asymptotic regimes and their finite sample behavior is not well understood  \cite{goldenshluger2001nonasymptotic}. Non-asymptotic results for AR estimation, such as \cite{goldenshluger2001nonasymptotic,nardi2011autoregressive}, do not fully exploit the sparsity of the underlying parameters in favor of reducing the sample complexity. In particular, for an AR process of order $p$, sufficient sampling requirements of $n \sim \mathcal{O}(p^4) \gg p$ and $n \sim \mathcal{O}(p^5) \gg p$ are established in \cite{goldenshluger2001nonasymptotic} and \cite{nardi2011autoregressive}, respectively.}  

{A relatively recent line of research employs the theory of compressed sensing (CS) for studying non-asymptotic sampling-complexity trade-offs for regularized M-estimators.} In recent years, the CS theory has become the standard framework for measuring and estimating sparse statistical models \cite{donoho2006compressed, candes2006compressive, candes2008introduction}. The theoretical guarantees of CS imply that when the number of incoherent measurements are roughly proportional to the sparsity level, then stable recovery of the model parameters is possible. {A key underlying assumption in many existing theoretical analyses of linear models is the independence and identical distribution (i.i.d.) of the covariates' structure. The matrix of covariates is either formed by fully i.i.d. elements \cite{rudelson2008sparse,baraniuk2008simple}, is based on row-i.i.d. correlated designs \cite{zhao2006model,raskutti2010restricted}, is Toeplitz-i.i.d. \cite{Toeplitz}, {or circulant i.i.d. \cite{rauhut2012restricted}}, where the design is extrinsic, fixed in advance and is independent of the underlying sparse signal. The matrix of covariates formed from the observations of an AR process does not fit into any of these categories, as the intrinsic history of the process plays the role of the covariates. Hence the underlying interdependence in the model hinders a straightforward application of existing CS results to AR estimation.} {Recent non-asymptotic results on the estimation of multi-variate AR (MVAR) processes have been relatively successful in utilizing sparsity for such dependent structures. For Gaussian and low-rank MVAR models, respectively, sub-linear sampling requirements have been established in \cite{loh2012high,han2013transition} and \cite{negahban2011estimation}, using regularized LS estimators, under bounded operator norm assumptions on the transition matrix. These assumptions are shown to be restrictive for MVAR processes with lags larger than 1 \cite{wong2016regularized}. By relaxing these boundedness assumptions for Gaussian, sub-Gaussian and heavy-tailed MVAR processes, respectively, sampling requirements of $n \sim \mathcal{O}(s \log p)$ and $\mathcal{O}((s \log p)^2)$ have been established in \cite{basu2015regularized} and \cite{ wong2016regularized, wu2016performance}. However, the quadratic scaling requirement in the sparsity level for the case of sub-Gaussian and heavy-tailed innovations incurs a significant gap with respect to the optimal guarantees of CS (with linear scaling in sparsity), particularly when the sparsity level $s$ is allowed to scale with $p$.}

{In this paper, we consider two of the widely-used estimators in CS, namely the $\ell_1$-regularized Least Squares (LS) or the LASSO and the Orthogonal Matching Pursuit (OMP) estimator, and extend the non-asymptotic recovery guarantees of the CS theory to the estimation of univariate AR processes with compressible parameters using these estimators. In particular, we improve the aforementioned gap between non-asymptotic sampling requirements for AR estimation and those promised by compressed sensing by providing sharper sampling-complexity trade-offs which improve over existing results when the sparsity grows faster than the square root of $p$. Our focus on the analysis of univariate AR processes is motivated by the application areas of interest in this paper which correspond to one-dimensional time series. Existing results in the literature \cite{loh2012high,han2013transition,negahban2011estimation,basu2015regularized,wong2016regularized, wu2016performance}, however, consider the MVAR case and thus are broader in scope. We will therefore compare our results to the univariate specialization of the aforementioned results.} Our main contributions can be summarized as follows:

First, we establish that {for a univariate AR process with sub-Gaussian innovations} when the number of measurements scales \emph{sub-linearly} with the product of the ambient dimension $p$ and the sparsity level $s$, i.e., $n \sim \mathcal{O}(s  (p \log p)^{1/2}) \ll p$, then stable recovery of the underlying AR parameters is possible using the LASSO and the OMP estimators, even though the covariates are highly interdependent and solely based on the history of the process. In particular, when $s \propto p^{\frac{1}{2} + \delta}$ for some $\delta \ge 0$ and the LASSO is used, our results improve upon those of \cite{wong2016regularized, wu2016performance}, when specialized to the univariate AR case, by a factor of $p^{\delta} (\log p)^{{3}/{2}}$. For the special case of Gaussian AR processes, stronger results are available which require a scaling of $n \sim \mathcal{O}(s \log p)$ \cite{basu2015regularized}. Moreover, our results provide a theory-driven choice of the number of iterations for stable estimation using the OMP algorithm, which has a significantly lower computational complexity than the LASSO.

Second, in the course of our analysis, we establish the Restricted {Eigenvalue} (RE) condition \cite{bickel2009simultaneous} for $n \times p$ design matrices formed from a realization of an AR process in a Toeplitz fashion, when $n \sim \mathcal{O}(s  (p \log p)^{1/2}) \ll p$. To this end, we invoke appropriate concentration inequalities for sums of \emph{dependent} random variables in order to capture and control the high interdependence of the design matrix. {In the special case of a white noise sub-Gaussian process, i.e., a sub-Gaussian i.i.d. Toeplitz measurement matrix, we show that our result can be strengthened from $n \sim \mathcal{O}(s (p \log p)^{1/2})$ to $n \sim \mathcal{O}(s (\log p)^2)$, which improves by a factor of $s / \log p$ over the results of \cite{Toeplitz} requiring $n \sim \mathcal{O} (s^2 \log p)$.}

Third, we establish sufficient conditions on the sparsity level which result in the minimax optimality of the $\ell_1$-regularized LS estimator. Finally, we provide simulation results as well as application to oil price and traffic data which reveal that the sparse estimates significantly outperform traditional techniques such as the Yule-Walker based estimators \cite{stoica1997introduction}. We have employed statistical tests in time and frequency domains to compare the performance of these estimators.

The rest of the paper is organized as follows. In Section \ref{sec:ar_notations}, we will introduce the notations and problem formulation. In Section \ref{sec:ar_theory}, we will describe several methods for the estimation of the parameters of an AR process, present the main theoretical results of this paper on robust estimation of AR parameters, and establish the minimax optimality of the $\ell_1$-regularized LS estimator. Section \ref{sec:ar_sim} includes our simulation results on simulated data as well as the real-world financial and traffic data, followed by concluding remarks in Section \ref{sec:conc}.

\section{Notations and Problem Formulation}
\label{sec:ar_notations}
Throughout the paper we will use the following notations. We will use the notation $\mathbf{x}_i^j$ to denote the vector $[x_i,\cdots,x_j]^T$. We will denote the estimated values by $\widehat{(.)}$ and the biased estimates with the superscript $(.)^b$.  Throughout the proofs, $c_i$'s express absolute constants which may change from line to line where there is no ambiguity. By $c_\eta$ we mean an absolute constant which only depends on a positive constant $\eta$.

Consider a univariate AR($p$) process defined by
\begin{equation}
\label{AR_def}
x_k = \theta_1 x_{k-1} + \theta_2 x_{k-2} + \cdots + \theta_p x_{k-p} + w_k =  \boldsymbol{\theta}^T \mathbf{x}_{k-p}^{k-1} +w_k,
\end{equation}
where $\{w_k\}_{k=-\infty}^{\infty}$ is an i.i.d sub-Gaussian innovation sequence with zero mean and variance $\sigma^2_{\sf w}$. This process can be considered as the output of an LTI system with transfer function
\begin{equation}
\label{eq:ar_tf}
H(z) = \frac{\sigma^2_{\sf w}}{1-\sum_{\ell=1}^p \theta_\ell z^{-\ell}}.
\end{equation}

Throughout the paper we will assume $\|\boldsymbol{\theta}\|_1 \leq 1-\eta <1$ to enforce the stability of the filter. We will refer to this assumption as \emph{the sufficient stability assumption}, since an AR process with poles within the unit circle does not necessarily satisfy $\| \boldsymbol{\theta} \|_1 < 1$. However, beyond second-order AR processes, it is not straightforward to state the stability of the process in terms of its parameters in a closed algebraic form, which in turn makes both the analysis and optimization procedures intractable.  {As we will show later, the only major requirement of our results is the boundedness of the spectral spread (also referred to as condition number) of the AR process. Although the sufficient stability condition is more restrictive, it will significantly simplify the spectral constants appearing in the analysis and clarifies the various trade-offs in the sampling bounds (See, for example, Corollary  \ref{cor:eig_conv}). }

The AR($p$) process given by $\{x_k\}_{k=-\infty}^\infty$ in (\ref{AR_def}) is stationary in the strict sense. Also by (\ref{eq:ar_tf}) the power spectral density of the process equals
\begin{equation}
\label{psd}
S(\omega)= \frac{\sigma_{\sf w}^2}{|1-\sum_{\ell=1}^p \theta_\ell e^{-j \ell \omega}|^2}.
\end{equation}

The sufficient stability assumption implies boundedness of the spectral spread of the process defined as 
\begin{equation*}
\label{def:ar_condition_nr}
\rho = \sup_{\omega} S(\omega) \Big \slash \inf_{\omega} S(\omega).
\end{equation*}
We will discuss how this assumption can be further relaxed in Appendix \ref{app:ar_main}. The spectral spread of stationary processes in general is a measure of how quickly the process reaches its ergodic state \cite{goldenshluger2001nonasymptotic}. An important property that we will use later in this paper is that the spectral spread is an upper bound on the eigenvalue spread of the covariance matrix of the process of arbitrary size \cite{haykin2008adaptive}.

We will also assume that the parameter vector $\boldsymbol{\theta}$ is compressible (to be defined more precisely later), and  can be well approximated by an $s$-sparse vector where $s \ll p$. We observe $n$ consecutive snapshots of length $p$ (a total of $n+p-1$ samples) from this process given by $\{x_k\}_{k=-p+1}^n$ and aim to estimate $\boldsymbol{\theta}$ by exploiting its sparsity; to this end, we aim at addressing the following questions {in the non-asymptotic regime}:
\begin{itemize}
\item Are the conventional LASSO-type and greedy techniques suitable for estimating $\boldsymbol{\theta}$?
\item What are the sufficient conditions on $n$ in terms of $p$ and $s$, to guarantee stable recovery?
\item Given these sufficient conditions, how do these estimators perform compared to conventional AR estimation techniques?
\end{itemize}

Traditionally, the Yule-Walker (YW) equations or least squares formulations are used to fit AR models. Since these methods
do not utilize the sparse structure of the parameters, they usually require $n \gg p$ samples in order to achieve satisfactory performance. The YW equations can be expressed as
\begin{equation}
\label{yule}
\mathbf{R} \boldsymbol{\theta} = \mathbf{r}_{-p}^{-1}, \quad r_0 = \boldsymbol{\theta}^T \mathbf{r}_{-p}^{-1} + \sigma^2_{\sf w},
\end{equation}
where $\mathbf{R}: = \mathbf{R}_{p\times p} = \mathbb{E}[\mathbf{x}_1^p \mathbf{x}_1^{pT}]$ is the $p \times p$ covariance matrix of the process and $r_k = \mathbb{E}[{x_ix_{i+k}}]$ is the autocorrelation of the process at lag $k$. The covariance matrix $R$ and autocorrelation vector $\mathbf{r}_{-p}^{-1}$ are typically replaced by their sample counterparts. Estimation of the AR($p$) parameters from the YW equations can be efficiently carried out using the Burg's method \cite{burg1967maximum}. Other estimation techniques include LS regression and maximum likelihood (ML) estimation. In this paper, we will consider the Burg's method and LS solutions as comparison benchmarks. When $n$ is comparable to $p$, these two methods are known to exhibit substantial performance differences \cite{marple1987digital}.


When fitted to the real-world data, the parameter vector $\boldsymbol{\theta}$ usually exhibits a degree of sparsity. That is, only certain lags in the history have a significant contribution in determining the statistics of the process. These lags can be thought of as the intrinsic delays in the underlying dynamics. To be more precise, for a sparsity level $s < p$, we denote by $S \subset \{1,2,\cdots,p \}$ the support of the $s$ largest elements of $\boldsymbol{\theta}$ in absolute value, and by $\boldsymbol{\theta}_s$ the best $s$-term approximation to $\boldsymbol{\theta}$. We also define
\begin{equation}
\sigma_s(\boldsymbol{\theta}) := \|\boldsymbol{\theta}-\boldsymbol{\theta}_s\|_1~\text{and}~\varsigma_s(\boldsymbol{\theta}) := \|\boldsymbol{\theta}-\boldsymbol{\theta}_s\|_2,
\end{equation}
which capture the compressibility of the parameter vector $\boldsymbol{\theta}$ in the $\ell_1$ and $\ell_2$ sense, respectively. Note that by definition 
$\varsigma_s(\boldsymbol{\theta}) \leq \sigma_s(\boldsymbol{\theta})$. For a fixed $\xi \in (0,1)$, we say that $\boldsymbol{\theta}$ is \emph{$(s,\xi)$-compressible} if $\sigma_s(\boldsymbol{\theta}) = \mathcal{O}(s^{1-\frac{1}{\xi}})$ \cite{needell2009cosamp} and \emph{$(s,\xi,2)$-compressible} if $\varsigma_s(\boldsymbol{\theta}) = \mathcal{O}(s^{1-\frac{1}{\xi}})$. Note that \emph{$(s,\xi,2)$-compressibility} is a weaker condition than \emph{$(s,\xi)$-compressibility} and when $\xi = 0$, the parameter vector $\boldsymbol{\theta}$ is exactly $s$-sparse.

Finally, in this paper, we are concerned with the compressed sensing regime where $n \ll p$, i.e., the observed data has a much smaller length than the ambient dimension of the parameter vector. The main estimation problem of this paper can be summarized as follows: \emph{given observations $\mathbf{x}_{-p+1}^n$ from an AR process with sub-Gaussian innovations and bounded spectral spread, the goal is to estimate the unknown $p$-dimensional $(s,\xi,2)$-compressible AR parameters $\boldsymbol{\theta}$ in a stable fashion (where the estimation error is controlled) when $n \ll p$.} 

\section{Theoretical Results}\label{theoretical}
\label{sec:ar_theory}

In this section, we will describe the estimation procedures and present the main theoretical results of this paper.

\subsection{$\ell_1$-regularized least squares estimation}
Given the sequence of observations $\mathbf{x}_{-p+1}^n$ and an estimate $\widehat{\boldsymbol{\theta}}$, the normalized estimation error can be expressed as:
\begin{equation}
\label{def:ar_L}
\mathfrak{L}\Big(\widehat{\boldsymbol{\theta}}\Big) :=  \frac{1}{n} \left \|\mathbf{x}_1^n-\mathbf{X}\widehat{\boldsymbol{\theta}}\right \|_2^2,
\end{equation}
\vspace{-.2cm}
where
\vspace{-.2cm}
\begin{equation}
\mathbf{X} = \left[ \begin{array}{cccc}
x_{n-1} & x_{n-2} & \cdots & x_{n-p}  \\
x_{n-2} & x_{n-3} & \cdots & x_{n-p-1} \\
\vdots & \vdots & \ddots & \vdots \\
x_{0} & x_{-1} & \cdots & x_{-p+1} \end{array} \right].
\end{equation}

Note that the matrix of covariates $\mathbf{X}$ is Toeplitz with highly interdependent elements. The LS solution is thus given by:

\begin{equation}
\label{eq:ar_LS}
\widehat{\boldsymbol{\theta}}_{{\sf LS}}=\argmin\limits_{\boldsymbol{\theta}\in \boldsymbol{\Theta} } \mathfrak{L}(\boldsymbol{\theta}),
\end{equation}
where 
\vspace{-.2cm}
\[\boldsymbol{\Theta} := \left \{ \boldsymbol{\theta} \in \mathbb{R}^p | \; \|\boldsymbol{\theta}\|_1 < 1- \eta \right \}\]
is the convex feasible region for which the stability of the process is guaranteed. {Note that the sufficient constraint of $\| \boldsymbol{\theta} \|_1 < 1- \eta$ is by no means necessary for stability. However, the set of all $\boldsymbol{\theta}$ resulting in stability is in general not convex. We have thus chosen to cast the LS estimator of Eq. (\ref{eq:ar_LS}) --as well as its $\ell_1$-regularized version that follows-- over a convex subset $\boldsymbol{\Theta}$, for which fast solvers exist. In addition, as we will show later, this assumption significantly clarifies the various constants appearing in our theoretical analysis. In practice, the Yule-Walker estimate is obtained without this constraint, and is guaranteed to result in a stable AR process. Similarly, for the LS estimate, this condition is relaxed by obtaining the unconstrained LS estimate and checking \emph{post hoc} for stability \cite{percival1993spectral}.} 

{Consistency of the LS estimator given by (\ref{eq:ar_LS}) was shown in \cite{sang2015simultaneous} when $n \rightarrow\infty$ for Gaussian innovations. In the case of Gaussian innovations the LS estimates correspond to conditional ML estimation and are asymptotically unbiased under mild conditions, and with $p$ fixed, the solution converges to the true parameter vector as $n \rightarrow \infty$. For fixed $p$, the estimation error is of the order $\mathcal{O}(\sqrt{p/n})$ in general \cite{Toeplitz}. However, when $p$ is allowed to scale with $n$, the convergence rate of the estimation error is not known in general.} 

In the regime of interest in this paper, where $n \ll p$, the LS estimator is ill-posed and is typically regularized with a smooth norm. In order to capture the compressibility of the parameters, we consider the $\ell_1$-regularized LS estimator:

\begin{equation}
\label{eq:ar_lasso}
\widehat{\boldsymbol{\theta}}_{{\ell_1}}:=\argmin\limits_{\boldsymbol{\theta}\in \boldsymbol{\Theta}} \quad \mathfrak{L}(\boldsymbol{\theta})+ \gamma_n\|\boldsymbol{\theta}\|_1,
\end{equation}
where $\gamma_n > 0$ is a regularization parameter. {This estimator, deemed as the Lagrangian form of the LASSO \cite{tibshirani1996regression}, has been comprehensively studied in the sparse recovery literature \cite{wainwright2009sharp,knight2000asymptotics,meinshausen2006high} as well as AR estimation \cite{knight2000asymptotics,zhao2006model,nardi2011autoregressive,wang2007regression}. A general asymptotic consistency result for LASSO-type estimators was established in \cite{knight2000asymptotics}. Asymptotic consistency of LASSO-type estimators for AR estimation was shown in \cite{zhao2006model,wang2007regression}. For sparse models, non-asymptotic analysis of the LASSO with covariate matrices from row-i.i.d. correlated design has been established in \cite{zhao2006model,wainwright2009sharp}.}

In many applications of interest, the data correlations are exponentially decaying and negligible beyond a certain lag, and hence for large enough $p$, autoregressive models fit the data very well in the prediction error sense. {An important question is thus how many measurements are required for estimation stability? In the \emph{overdetermined} regime of $n \gg p$, the non-asymptotic properties of LASSO for model selection of AR processes has been studied in \cite{nardi2011autoregressive}, where a sampling requirement of $ n \sim \mathcal{O}(p^5)$ is established. {Recovery guarantees for LASSO-type estimators of multivariate AR parameters in the \emph{compressive} regime of $n \ll p$ are studied in \cite{loh2012high,han2013transition,negahban2011estimation,wong2016regularized,basu2015regularized,wu2016performance}. In particular, sub-linear scaling of $n$ with respect to the ambient dimension is established in \cite{loh2012high,han2013transition} for Gaussian MVAR processes and in \cite{negahban2011estimation} for low-rank MVAR processes, respectively, under the assumption of bounded operator norm of the transition matrix. In \cite{basu2015regularized} and \cite{wong2016regularized,wu2016performance}, the latter assumption is relaxed for Gaussian, sub-Gaussian, and heavy-tailed MVAR processes, respectively.} These results have significant practical implications as they will reveal sufficient conditions on $n$ with respect to $p$ as well as a criterion to choose $\gamma_n$, which result in stable estimation of $\boldsymbol{\theta}$ from a considerably short sequence of observations. The latter is indeed the setting that we consider in this paper, where the ambient dimension $p$ is fixed and the goal is to derive sufficient conditions on $n \ll p$ resulting in stable estimation.}

It is easy to verify that {the objective function and constraints in Eq.} (\ref{eq:ar_lasso}) are convex in $\boldsymbol{\theta}$ and hence $\widehat{\boldsymbol{\theta}}_{\ell_1}$ can be obtained using standard numerical solvers. Note that the solution to (\ref{eq:ar_lasso}) might not be unique. However, we will provide error bounds that hold for all possible solutions of (\ref{eq:ar_lasso}), with high probability. 

Recall that, the Yule-Walker solution is given by
\vspace{-0.5mm}
\begin{equation}
\label{eq:ar_yw}
\widehat{\boldsymbol{\theta}}_{{\sf yw}}:=\argmin\limits_{\boldsymbol{\theta}\in \boldsymbol{\Theta}} \quad \mathfrak{J}(\boldsymbol{\theta}) = \widehat{\mathbf{R}}^{-1}\widehat{\mathbf{r}}_{-p}^{-1},
\end{equation}
\vspace{-2.5mm}

\noindent where $\mathfrak{J}(\boldsymbol{\theta}):= \|\widehat{\mathbf{R}} \boldsymbol{\theta}- \widehat{\mathbf{r}}_{-p}^{-1}\|_{{2}}$. We further consider two other sparse estimators for $\boldsymbol{\theta}$ by penalizing the Yule-Walker equations. The $\ell_1$-regularized Yule-Walker estimator is defined as:
\begin{equation}
\label{eq:ar_ywl21}
\widehat{\boldsymbol{\theta}}_{{\sf yw,\ell_{2,1}}}:=\argmin\limits_{\boldsymbol{\theta}\in \boldsymbol{\Theta}} \quad \mathfrak{J}(\boldsymbol{\theta})+ \gamma_n\|\boldsymbol{\theta}\|_1,
\end{equation}
where $\gamma_n > 0$ is a regularization parameter.
Similarly, using the robust statistics instead of the Gaussian statistics, the estimation error can be re-defined as:
\begin{equation*}
\label{def:ar_J1}
\mathfrak{J}_1(\boldsymbol{\theta}):= \|\widehat{\mathbf{R}} \boldsymbol{\theta}- \widehat{\mathbf{r}}_{-p}^{-1}\|_{{1}},
\end{equation*}
we define the $\ell_1$-regularized estimates as
\begin{equation}
\label{eq:ar_ywl11}
\widehat{\boldsymbol{\theta}}_{{\sf yw,\ell_{1,1}}}:=\argmin\limits_{\boldsymbol{\theta}\in \boldsymbol{\Theta}} \quad \mathfrak{J}_1(\boldsymbol{\theta})+ \gamma_n\|\boldsymbol{\theta}\|_1.
\end{equation}


\vspace{-.3cm}
\subsection{Greedy estimation} \label{ar:subsec_greedy}
Although there exist fast solvers for the convex problems of the type given by (\ref{eq:ar_lasso}), (\ref{eq:ar_ywl21}) and (\ref{eq:ar_ywl11}), these algorithms are polynomial time in $n$ and $p$, and may not scale well with the dimension of data. This motivates us to consider greedy solutions for the estimation of $\boldsymbol{\theta}$. In particular, we will consider and study the performance of a generalized Orthogonal Matching Pursuit (OMP) algorithm \cite{OMP, zhang_omp}. A flowchart of this algorithm is given in Table \ref{tab:aromp} for completeness. At each iteration, a new component of $\boldsymbol{\theta}$ for which the gradient of the error metric $\mathfrak{f}(\boldsymbol{\theta})$ is the largest in absolute value is chosen and added to the current support. The algorithm proceeds for a total of $s^\star = \mathcal{O}(s\log s)$ steps, resulting in an estimate with $s^\star$ components. When the error metric $\mathfrak{L}(\boldsymbol{\theta})$ is chosen, the generalized OMP corresponds to the original OMP algorithm. For the choice of the YW error metric $\mathfrak{J}(\boldsymbol{\theta})$, we denote the resulting greedy algorithm by {\sf yw}OMP.
\begin{table}[h]
\vspace{-.1cm}
\begin{framed}
\vspace{-.1cm}
\small $\begin{array}{l}
\text{Input: } \mathfrak{f}(\boldsymbol{\theta}) , s^\star\\
\text{Output: } \widehat{\boldsymbol{\theta}}_{\sf OMP}=\widehat{\boldsymbol{\theta}}_{\sf OMP}^{(s^\star)}\\
\text{Initialization:}\Big\{\begin{array}{l}
\text{Start with the index set } S^{(0)}=\emptyset\\
\text{and the initial estimate }\widehat{\boldsymbol{\theta}}^{(0)}_{{\sf OMP}} = 0
\end{array}\\
\textbf{for } k=1,2,\cdots,s^\star\\
\text{  }\begin{array}{l}
j = \argmax \limits_i \left| \left( \nabla \mathfrak{f} \; \left(\widehat{\boldsymbol{\theta}}_{{\sf OMP}}^{(k-1)}\right) \right)_i\right|\\
S^{(k)}=S^{(k-1)}\cup \{j\}\\
\widehat{\boldsymbol{\theta}}_{{\sf OMP}}^{(k)} = \argmin \limits_{\support (\boldsymbol{\theta}) \subset S^{(k)}} \mathfrak{f}(\boldsymbol{\theta})
\end{array}\\
\textbf{end }\\
\end{array}$
\vspace{-.2cm}
\end{framed}
\caption{Generalized Orthogonal Matching Pursuit (OMP)}
\label{tab:aromp}
\vspace{-2mm}
\end{table}

\subsection{Estimation performance guarantees} 

The main theoretical result regarding the estimation performance of the $\ell_1$-regularized LS estimator is given by the following theorem:

\begin{thm}
\label{thm:ar_1}
If $\sigma_s(\boldsymbol{\theta}) = \mathcal{O}(\sqrt{s})$, there exist positive constants ${d_0}, d_1,d_2,d_3$ and $d_4$ such that for $n > s \max \{d_0 (\log p)^2, d_1 (p \log p)^{1/2}\}$ and a choice of regularization parameter $\gamma_n = d_2 \sqrt{\frac{\log p}{n}}$, any solution $\widehat{\boldsymbol{\theta}}_{{\ell_1}}$ to (\ref{eq:ar_lasso}) satisfies the bound
\begin{equation}
\label{eq:ar1error}
\left \|\widehat{\boldsymbol{\theta}}_{{\ell_1}}-\boldsymbol{\theta}\right \|_2 \leq d_3 \sqrt{\frac{s \log p}{n}}+ \sqrt{d_3\sigma_s(\boldsymbol{\theta})}\sqrt[4]{\frac{\log p}{n}},
\end{equation}
with probability greater than $1-\mathcal{O}(\frac{1}{n^{d_4}})$. The constants {depend on the spectral spread of the process and are} explicitly given in the proof.
\end{thm}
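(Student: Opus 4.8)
The plan is to follow the standard primal analysis of the LASSO, deferring all the probabilistic work to two AR-specific concentration statements about the Toeplitz design $\mathbf{X}$. First I would set $\mathbf{h} := \widehat{\boldsymbol{\theta}}_{\ell_1} - \boldsymbol{\theta}$ and let $\mathbf{w} := \mathbf{x}_1^n - \mathbf{X}\boldsymbol{\theta}$ be the vector of innovations driving the observed equations. Starting from the optimality of $\widehat{\boldsymbol{\theta}}_{\ell_1}$ in (\ref{eq:ar_lasso}), expanding $\mathfrak{L}$ and cancelling the common $\frac{1}{n}\|\mathbf{w}\|_2^2$ term gives the basic inequality
\[
\frac{1}{n}\|\mathbf{X}\mathbf{h}\|_2^2 \le \frac{2}{n}\mathbf{h}^T\mathbf{X}^T\mathbf{w} + \gamma_n\big(\|\boldsymbol{\theta}\|_1 - \|\widehat{\boldsymbol{\theta}}_{\ell_1}\|_1\big).
\]
Bounding $\frac{2}{n}\mathbf{h}^T\mathbf{X}^T\mathbf{w} \le \frac{2}{n}\|\mathbf{X}^T\mathbf{w}\|_\infty\|\mathbf{h}\|_1$ and decomposing both $\ell_1$ norms over the support $S$ of the best $s$-term approximation $\boldsymbol{\theta}_s$, I would obtain, on the event $\{\frac{2}{n}\|\mathbf{X}^T\mathbf{w}\|_\infty \le \gamma_n/2\}$, the cone constraint $\|\mathbf{h}_{S^c}\|_1 \le 3\|\mathbf{h}_S\|_1 + 4\sigma_s(\boldsymbol{\theta})$ together with $\frac{1}{n}\|\mathbf{X}\mathbf{h}\|_2^2 \le \frac{3\gamma_n}{2}\|\mathbf{h}_S\|_1 + 2\gamma_n\sigma_s(\boldsymbol{\theta})$.

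The first AR-specific ingredient is the gradient (noise) bound that both fixes $\gamma_n$ and realizes the cone event. Each coordinate $\frac{1}{n}(\mathbf{X}^T\mathbf{w})_j = \frac{1}{n}\sum_k x_{k-j} w_k$ is a sum whose $k$-th summand has $w_k$ independent of the past covariate $x_{k-j}$, giving a martingale-difference structure adapted to the natural filtration of the process. Exploiting the sub-Gaussianity of $w_k$ together with the stable infinite-order moving-average representation of $\{x_k\}$ — whose coefficient decay, and hence whose covariate tails, are controlled by the sufficient stability assumption $\|\boldsymbol{\theta}\|_1 \le 1-\eta$ and the bounded spectral spread $\rho$ — I would apply a Bernstein-type concentration inequality for such dependent sums, followed by a union bound over the $p$ coordinates. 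This yields $\frac{1}{n}\|\mathbf{X}^T\mathbf{w}\|_\infty \lesssim \sqrt{\log p / n}$ with probability at least $1 - \mathcal{O}(n^{-d_4})$, justifying the choice $\gamma_n = d_2\sqrt{\log p / n}$.

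The crux — and the main obstacle — is the Restricted Eigenvalue condition $\frac{1}{n}\|\mathbf{X}\mathbf{h}\|_2^2 \ge \kappa\|\mathbf{h}\|_2^2$ holding uniformly over the cone, with $\kappa$ bounded below in terms of the smallest eigenvalue of $\mathbf{R}$ (and hence of $\rho$). Since the columns of $\mathbf{X}$ are shifted copies of a single realization rather than independent vectors, no off-the-shelf CS argument applies; instead I would prove that the sample Gram matrix $\frac{1}{n}\mathbf{X}^T\mathbf{X}$ concentrates around the population covariance $\mathbf{R}$ in the restricted sense, i.e. that $\sup_{\mathbf{h}}|\frac{1}{n}\|\mathbf{X}\mathbf{h}\|_2^2 - \mathbf{h}^T\mathbf{R}\mathbf{h}|$ over the cone is a small fraction of the minimal eigenvalue. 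The strategy is to reduce the supremum to a finite $\epsilon$-net of the cone — of effective cardinality governed by $s\log p$ — and to control each quadratic form $\frac{1}{n}\sum_k(\mathbf{h}^T\mathbf{x}_{k-p}^{k-1})^2$, a sum of dependent sub-exponential products, via a Hanson--Wright-type inequality for the AR sequence. The weak dependence, quantified through $\rho$, inflates the effective variance relative to the i.i.d. case; balancing the resulting variance term against the heavier sub-exponential tail is precisely what produces the two competing sample-size requirements $s(\log p)^2$ and $s(p\log p)^{1/2}$, the latter dominating for large $p$. Handling the intrinsic correlation of the design while keeping the sample count proportional to $s$ is the delicate part of the whole argument.

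Finally, on the intersection of the noise and RE events I would combine the two displays from the basic-inequality step: using $\|\mathbf{h}_S\|_1 \le \sqrt{s}\|\mathbf{h}\|_2$ and the RE lower bound gives the scalar inequality $\kappa\|\mathbf{h}\|_2^2 \le \frac{3\gamma_n}{2}\sqrt{s}\,\|\mathbf{h}\|_2 + 2\gamma_n\sigma_s(\boldsymbol{\theta})$. Solving this quadratic in $\|\mathbf{h}\|_2$ and using $\sqrt{a+b}\le\sqrt{a}+\sqrt{b}$ yields $\|\mathbf{h}\|_2 \lesssim \frac{\gamma_n\sqrt{s}}{\kappa} + \sqrt{\frac{\gamma_n\sigma_s(\boldsymbol{\theta})}{\kappa}}$; substituting $\gamma_n = d_2\sqrt{\log p/n}$ reproduces exactly the two-term bound (\ref{eq:ar1error}), the leading term $d_3\sqrt{s\log p/n}$ arising from the sparse part and the fourth-root term $\sqrt{d_3\sigma_s(\boldsymbol{\theta})}\,(\log p/n)^{1/4}$ from the compressibility residual. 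Tracking $\kappa$, the noise constant, and the net cardinality through $\rho$ and $\eta$ then renders all constants $d_0,\dots,d_4$ explicit functions of the spectral spread, as claimed.
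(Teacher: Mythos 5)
Your overall architecture tracks the paper's proof closely in three of its four steps: the cone (vase) condition for $\mathbf{h}=\widehat{\boldsymbol{\theta}}_{\ell_1}-\boldsymbol{\theta}$ under $\gamma_n \ge \|\nabla \mathfrak{L}(\boldsymbol{\theta})\|_\infty$, the coordinatewise martingale structure of $\frac{1}{n}\mathbf{X}^T\mathbf{w}$ controlled by a sub-Gaussian martingale concentration bound plus a union bound over $p$ coordinates (the paper uses van de Geer's inequality, Lemma \ref{hoeff_dep}), and the final quadratic solve, which is exactly Theorem 1 of Negahban et al.\ (Lemma \ref{RSC_thm}) that the paper invokes. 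The genuine divergence --- and the genuine gap --- is in the RE/RSC step, which is the heart of the theorem and which you yourself flag as ``the delicate part.'' The paper does not use an $\epsilon$-net over the cone or a Hanson--Wright inequality. It proves \emph{entrywise} concentration $\max_{i,j}|\widehat{R}_{ij}-R_{ij}|\le \tau$ using Rudzkis's Bernstein-type tail for biased sample autocorrelations of stationary linear processes (Lemma \ref{conc_biased}), converts this to an RE constant by the elementary bound $|\mathbf{h}^T(\widehat{\mathbf{R}}-\mathbf{R})\mathbf{h}| \le \tau \|\mathbf{h}\|_1^2 \le \tau s_\star \|\mathbf{h}\|_2^2$ with $\tau \sim \lambda_{\sf min}/s_\star$ (Lemmas \ref{lem:re} and \ref{lem:re2}), and then transfers RE of order $(m+1)s$ to RSC over the cone via Lemma 4.1 of Bickel--Ritov--Tsybakov (Lemma \ref{RE_RSC}). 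Crucially, the two sampling conditions $n \gtrsim s(\log p)^2$ and $n \gtrsim s(p\log p)^{1/2}$ fall out of the explicit form of Rudzkis's exponent --- the $(n+k)$ prefactor and the $t^{3/2}\sqrt{n+k}$ term in its denominator --- not from a variance-versus-tail balance in a chaining argument. Your proposal \emph{asserts} that a net-plus-Hanson--Wright computation ``produces'' exactly these two requirements, but you never exhibit the quadratic-form tail for the dependent Toeplitz design or carry out the balance. Making Hanson--Wright applicable here requires writing $\|\mathbf{X}\mathbf{h}\|_2^2$ as a quadratic form in the i.i.d.\ innovations through an infinite (to-be-truncated) filter matrix and bounding its Frobenius and operator norms uniformly over the net; none of this is sketched, and there is no reason the resulting exponent would reproduce the $(p\log p)^{1/2}$ scaling --- arguments of this genre in the literature yield $n\sim s\log p$ for Gaussian innovations and $n \sim s^2(\log p)^2$ for sub-Gaussian ones, not the rates claimed in Theorem \ref{thm:ar_1}. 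So the hardest step of the theorem is left as an assertion, and the assertion is not even the natural output of the tool you propose.

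A secondary issue: your argument never uses the hypothesis $\sigma_s(\boldsymbol{\theta}) = \mathcal{O}(\sqrt{s})$. The set over which restricted strong convexity must hold is not the exact cone $\{\|\mathbf{h}_{S^c}\|_1 \le 3\|\mathbf{h}_S\|_1\}$ but the enlarged set $\mathbb{V}$ containing the compressibility slack $4\|\boldsymbol{\theta}_{S^c}\|_1 = 4\sigma_s(\boldsymbol{\theta})$; vectors in $\mathbb{V}$ need not satisfy $\|\mathbf{h}\|_1 \lesssim \sqrt{s}\,\|\mathbf{h}\|_2$ unless the slack is of order $\sqrt{s}$, which is precisely where that hypothesis earns its keep. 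Your $\epsilon$-net ``over the cone'' silently ignores these slack directions, so even granting your concentration step, the uniformity claim would not cover all error vectors that the basic inequality actually produces.
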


Similarly, the following theorem characterizes the estimation performance bounds for the OMP algorithm:
\begin{thm}
\label{thm_OMP}
If $\boldsymbol{\theta}$ is $(s,\xi,2)$-compressible for some $\xi < 1/2$, there exist positive constants ${d'_0}, d'_1,d'_{2},d'_{3}$ and $d'_4$ such that for {{$n > s \log s \max \{d'_0 (\log p)^2, d'_1 {(p \log p)^{1/2}}\}$}}, the OMP estimate satisfies the bound
\begin{equation}
\label{eq:aromp}
\left \|\widehat{\boldsymbol{\theta}}_{{\sf OMP}}-\boldsymbol{\theta}\right\|_2 \leq d'_{2} \sqrt{\frac{s \log s \log p}{n}} + d'_{3} \frac{\log s}{s^{{\frac{1}{\xi}-2}}}
\end{equation}
after $s^\star ={ 4{\rho}s \log {20 \rho s}}$ iterations with probability greater than $1-\mathcal{O}\left(\frac{1}{n^{d'_4}}\right)$. The constants {depend on the spectral spread of the process and are} explicitly given in the proof.
\end{thm}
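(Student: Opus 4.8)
The plan is to decouple the argument into a deterministic convergence analysis of the generalized OMP recursion and a set of probabilistic guarantees on the AR design that feed into it, paralleling the split used for Theorem~\ref{thm:ar_1} but now demanding a uniform restricted-isometry-type control rather than the weaker cone-restricted eigenvalue bound. Writing $\mathbf{x}_1^n = \mathbf{X}\boldsymbol{\theta} + \mathbf{w}$, the OMP selection rule with error metric $\mathfrak{L}$ reduces to a matching pursuit on the normalized Gram matrix $\frac{1}{n}\mathbf{X}^T\mathbf{X}$ against the correlation vector $\frac{1}{n}\mathbf{X}^T\mathbf{w}$. I would therefore first state a deterministic guarantee, in the spirit of Zhang's RIP analysis of OMP \cite{zhang_omp}, asserting that whenever $\frac{1}{n}\mathbf{X}^T\mathbf{X}$ satisfies a restricted isometry property of order $s^\star + s$ with a constant sufficiently below the threshold required by that analysis, and the noise correlation $\|\frac{1}{n}\mathbf{X}^T\mathbf{w}\|_\infty$ is small, then after $s^\star = 4\rho s \log(20\rho s)$ greedy steps the selected support captures all coordinates of $\boldsymbol{\theta}_s$ whose magnitude exceeds the noise floor, and the resulting least-squares fit obeys an $\ell_2$ error of the advertised two-term form.

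The two probabilistic inputs are then supplied by the AR-specific machinery already developed for the LASSO. For the isometry constant I would invoke the eigenvalue concentration result (Corollary~\ref{cor:eig_conv}), which shows that $\frac{1}{n}\mathbf{X}^T\mathbf{X}$ is uniformly close to the Toeplitz covariance $\mathbf{R}$ on all index sets of cardinality $s^\star+s$ once $n \gtrsim (s^\star+s)(p\log p)^{1/2}$; since the eigenvalue spread of $\mathbf{R}$ is bounded by the spectral spread $\rho$, this yields a restricted isometry constant that is a function of $\rho$ alone. Because $s^\star = \mathcal{O}(s\log s)$, substituting $s^\star+s \asymp s\log s$ into the sample-size threshold is precisely what produces the extra $\log s$ factor in $n > s\log s\max\{d'_0(\log p)^2, d'_1(p\log p)^{1/2}\}$ relative to Theorem~\ref{thm:ar_1}. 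For the noise term I would bound $\|\frac{1}{n}\mathbf{X}^T\mathbf{w}\|_\infty$ by a union bound over the $p$ coordinates, each of the form $\frac{1}{n}\sum_k x_{k-i}w_k$, a sum of products of the dependent process history with the independent innovations; the same concentration inequality for dependent sub-Gaussian sequences used for the LASSO gives $\|\frac{1}{n}\mathbf{X}^T\mathbf{w}\|_\infty = \mathcal{O}(\sqrt{\log p / n})$ with probability $1 - \mathcal{O}(n^{-d'_4})$.

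Finally I would absorb the compressibility through the $(s,\xi,2)$ assumption: the tail $\varsigma_s(\boldsymbol{\theta}) = \mathcal{O}(s^{1-1/\xi})$ enters both the residual the greedy fit cannot explain and the dynamic-range bound fixing the iteration count, and combining it with the $\log s$ factor from the number of iterations yields the second error term $d'_3 \log s / s^{1/\xi - 2}$; the condition $\xi < 1/2$ is exactly what keeps this tail controlled against the iteration count so that the bound remains meaningful. The main obstacle is establishing the uniform restricted-isometry control at the inflated order $s^\star + s \asymp s\log s$ for the highly interdependent Toeplitz AR design: unlike the i.i.d. or row-independent settings, the columns of $\mathbf{X}$ are all functions of a single trajectory, so the requisite concentration of the quadratic forms $\mathbf{v}^T(\frac{1}{n}\mathbf{X}^T\mathbf{X} - \mathbf{R})\mathbf{v}$ must be extracted from inequalities for sums of dependent variables while keeping the constants transparent in $\rho$. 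The delicate point is that OMP needs this to hold simultaneously over all supports of size $s^\star + s$, not merely over a cone, which tightens both the order and the constant relative to the restricted-eigenvalue bound that sufficed for Theorem~\ref{thm:ar_1}.
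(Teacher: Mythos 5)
Your proposal follows essentially the same route as the paper's proof: a deterministic guarantee from Zhang's OMP analysis (the paper's Lemma~\ref{prop_omp}, phrased via restricted strong convexity, with iteration count $s^\star = 4\rho s\log(20\rho s)$ and error controlled by $\varepsilon_{s^\star} \le \sqrt{s^\star+s}\,\|\nabla\mathfrak{L}(\boldsymbol{\theta}_S)\|_\infty$), sample-covariance concentration at the inflated order via the dependent-sum inequality of Rudzkis (Lemmas~\ref{conc_biased}--\ref{lem:re2} — not Corollary~\ref{cor:eig_conv}, which only bounds the spectrum of the true $\mathbf{R}$), a martingale bound on the gradient evaluated at $\boldsymbol{\theta}_S$ (whose mean carries the bias $\mathbf{R}(\boldsymbol{\theta}-\boldsymbol{\theta}_S)$, controlled by $\varsigma_s(\boldsymbol{\theta})$), and the compressibility tail yielding the second error term under $\xi < 1/2$. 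The one point you misjudge is your ``main obstacle'': the paper's restricted-eigenvalue bound is already uniform over all supports, since the entrywise control gives $|\mathbf{v}^T(\widehat{\mathbf{R}}-\mathbf{R})\mathbf{v}| \le t\,\|\mathbf{v}\|_1^2 \le t\,s_\star\|\mathbf{v}\|_2^2$ for every $s_\star$-sparse $\mathbf{v}$, so Theorem~\ref{thm_OMP} requires no qualitatively stronger (RIP-type) uniformity than Theorem~\ref{thm:ar_1} — only the same machinery applied at order $s^\star + s \le 4s\log s/\eta^2$, which is precisely where the extra $\log s$ in the sampling requirement arises, as you correctly identify.
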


The results of Theorems \ref{thm:ar_1} and \ref{thm_OMP} suggest that under suitable compressibility assumptions on the AR parameters, one can estimate the parameters reliably using the $\ell_1$-regularized LS and OMP estimators with much fewer measurements compared to those required by the Yule-Walker/LS based methods. To illustrate the significance of these results further, several remarks are in order:

{\noindent  \textit{\textbf{Remark 1.}} The sufficient stability assumption of $\|\boldsymbol{\theta}\|_1 \leq 1- \eta <1$ is restrictive compared to the class of stable AR models. In general, the set of parameters $\boldsymbol{\theta}$ which admit a stable AR process is not necessarily convex. This condition ensures that the resulting estimates of (\ref{eq:ar_lasso})-(\ref{eq:ar_ywl11}) pertain to stable AR processes and at the same time can be obtained by convex optimization techniques, for which fast solvers exist. A common practice in AR estimation, however, is to solve for the unconstrained problem and check for the stability of the resulting AR process \emph{post hoc}. In our numerical studies in Section \ref{sec:ar_sim}, this procedure resulted in a stable AR process in all cases. Nevertheless, the stability guarantees of Theorems \ref{thm:ar_1} and \ref{thm_OMP} hold for the larger class of stable AR processes, even though they may not necessarily be obtained using convex optimization techniques. We further discuss this generalization in Appendix \ref{app:ar_main}.
}

\noindent  \textit{\textbf{Remark 2.}} {When $\boldsymbol{\theta} = \mathbf{0}$, i.e., the process is a sub-Gaussian white noise and hence the matrix $\mathbf{X}$ is i.i.d. Toeplitz with sub-Gaussian elements, the constants $d_1$ and $d'_1$ in Theorems 1 and 2 vanish, and the measurement requirements strengthen to $n > d_0 s (\log p)^2$  and $n > d'_0 s \log s (\log p)^2$, respectively. Comparing this sufficient condition with that of \cite{Toeplitz} given by $n \sim \mathcal{O}(s^2 \log p)$ reveals an improvement of order $s (\log p)^{-1}$ by our results.}

\noindent  \textit{\textbf{Remark 3.}} {When $\boldsymbol{\theta} \neq \mathbf{0}$, the dominant measurement requirements are $n > d_1 s { (p \log p)^{1/2}}$ and $n > d'_1 s \log s {(p \log p)^{1/2}}$.} Comparing the sufficient condition $n \sim \mathcal{O}(s {(p \log p)^{1/2}})$ of Theorem \ref{thm:ar_1} with those of \cite{donoho2006compressed, candes2006compressive, candes2008introduction,wainwright2009sharp} for linear models with i.i.d. measurement matrices or row-i.i.d. correlated designs \cite{zhao2006model,raskutti2010restricted} given by $n \sim \mathcal{O}(s \log p)$  a loss of order $\mathcal{O}({(p/ \log p)^{1/2}})$ is incurred, although all these conditions require $n \ll p$. However, the loss seems to be natural as it stems from {a major difference of our setting as compared to traditional CS:} each row of the measurement matrix $\mathbf{X}$ highly depends on the entire observation sequence $\mathbf{x}_1^n$, whereas in traditional CS, each row of the measurement matrix is only related to the corresponding measurement. Hence, the aforementioned loss can be viewed as the price of self-averaging of the process accounting for the low-dimensional nature of the covariate sample space and the high inter-dependence of the covariates to the observation sequence. Recent results on M-estimation of sparse {MVAR} processes with sub-Gaussian and {heavy-tailed} innovations \cite{wu2016performance,wong2016regularized} require $n \sim \mathcal{O}(s^2 (\log p)^2)$ {when specialized to the univariate case,} which compared to our results improve the loss of $\mathcal{O}({(p/\log p)^{1/2}})$ to $(\log p)^2$ with the additional cost of quadratic requirement in the sparsity $s$. {However, in the over-determined regime of $s \propto p^{\frac{1}{2} + \delta}$ for some $\delta \ge 0$, our results imply $n \sim \mathcal{O} (p^{1 + \delta} (\log p)^{1/2})$, providing a saving of order ${p^{\delta} (\log p)^{3/2}}$ over those of \cite{wu2016performance,wong2016regularized}.}

\noindent \textit{\textbf{Remark 4.}} It can be shown that the estimation error for the LS method {in general} scales as $\sqrt{p/n}$ \cite{Toeplitz} which is not desirable when $n \ll p$. Our result, however, guarantees a much smaller error rate of the order $\sqrt{s \log p/n}$. Also, the sufficiency conditions of Theorem \ref{thm_OMP} require high compressibility of the parameter vector $\boldsymbol{\theta}$  ($\xi < 1/2$), whereas Theorem \ref{thm:ar_1} does not impose any extra restrictions on $\xi \in (0,1)$. Intuitively speaking, these two comparisons reveal the trade-off between computational complexity and measurement/compressibility requirements for convex optimization vs. greedy techniques, which are well-known for linear models \cite{bruckstein2009sparse}.

\noindent  \textit{\textbf{Remark 5.}} The condition $\sigma_s(\boldsymbol{\theta})=\mathcal{O}(\sqrt{s})$ in Theorem \ref{thm:ar_1} is not restricting for the processes of interest in this paper. This is due to the fact that the boundedness assumption on the spectral spread implies an exponential decay of the parameters (See Lemma 1 of \cite{goldenshluger2001nonasymptotic}). Finally, {the constants $d_1$, $d'_{1}$ are increasing with respect to the spectral spread of the process $\rho$. Intuitively speaking, the closer the roots of the filter given by (\ref{eq:ar_tf}) get to the unit circle (corresponding to larger $\rho$ and smaller $\eta$), the slower the convergence of the process will be to its ergodic state, and hence more measurements are required.} A similar dependence to the spectral spread has appeared in the results of \cite{goldenshluger2001nonasymptotic} for $\ell_2$-regularized least squares estimation of AR processes. 


\noindent  \textit{\textbf{Remark 6.}} The main ingredient in the proofs of Theorems \ref{thm:ar_1}  and \ref{thm_OMP} is to establish the restricted eigenvalue (RE) condition introduced in \cite{bickel2009simultaneous} for the covariates matrix $\mathbf{X}$. Establishing the RE condition for the covariates matrix $\mathbf{X}$ is a nontrivial problem due to the high interdependence of the matrix entries. We will indeed show that if the sufficient stability assumption holds, then with $n \sim \mathcal{O}\left (s \max \{d_0 (\log p)^2, d_1 {(p \log p)^{1/2}} \} \right)$ the sample covariance matrix is sharply concentrated around the true covariance matrix and hence the RE condition can be guaranteed. All constants appearing in Theorems \ref{thm:ar_1} and \ref{thm_OMP} are explicitly given in Appendix \ref{app:ar_main}. As a typical numerical example, for $\eta = 0.9$ and $\sigma_w^2 = 0.1$, the constants of Theorem \ref{thm:ar_1} can be chosen as $d_0 \approx 1000, {d_1 \approx 3 \times 10^{8}}, {d_2 \approx  0.15} , d_3 \approx 140$, and  $d_4 = 1$. The full proofs are given in Appendix \ref{app:ar_main}.

\subsection{Minimax optimality}
\label{sec:ar_minimax}
{In this section}, we establish the minimax optimality of the $\ell_1$-regularized LS estimator for AR processes with sparse parameters. To this end, we will focus on the class $\mathcal{H}$ of stationary processes which admit an AR($p$) representation with $s$-sparse parameter $\boldsymbol{\theta}$ such that $\|\boldsymbol{\theta}\|_1 \leq 1-\eta <1$. The theoretical results of this section are inspired by the results of \cite{goldenshluger2001nonasymptotic} on non-asymptotic order selection {via $\ell_2$-regularized LS estimation in the absence of sparsity}, and extend them by studying the $\ell_1$-regularized LS estimator of (\ref{eq:ar_lasso}). 

We define the maximal \textit{estimation} risk over $\mathcal{H}$ to be
\begin{equation}
\label{eq:ar_minimax_risk}
\mathcal{R}_{\sf est} (\widehat{\boldsymbol{\theta}}):= \sup_{\mathcal{H}} \left(\mathbb{E}\left[\|\widehat{\boldsymbol{\theta}}-\boldsymbol{\theta}\|_2^2\right]\right)^{1/2}.
\end{equation}
The minimax estimator is the one minimizing the maximal estimation risk, i.e.,
\begin{equation}
\label{eq:ar_minimax_est}
\widehat{\boldsymbol{\theta}}_{\sf minimax} := \argmin\limits_{\boldsymbol{\theta}\in \boldsymbol{\Theta}} \quad \mathcal{R}_{\sf est} (\widehat{\boldsymbol{\theta}}).
\end{equation}
Minimax estimator $\widehat{\boldsymbol{\theta}}_{\sf minimax}$, in general, cannot be constructed explicitly \cite{goldenshluger2001nonasymptotic}, and the common practice in non-parametric estimation is to construct an estimator $\widehat{\boldsymbol{\theta}}$ which is \emph{order optimal} as compared to the minimax estimator:
\begin{equation}
\label{eq:ar_minimax_optinorder}
\mathcal{R}_{\sf est} (\widehat{\boldsymbol{\theta}}) \leq L \mathcal{R}_{\sf est} (\widehat{\boldsymbol{\theta}}_{\sf minimax}).
\end{equation}
with $L \ge 1$ being a constant. One can also define the minimax \textit{prediction} risk by the maximal prediction error over all possible realizations of the process:
\begin{equation}
\label{eq:ar_minimax_pred_risk}
\mathcal{R}^2_{\sf pre} (\widehat{\boldsymbol{\theta}}) := \sup_{\mathcal{H}} \mathbb{E}\left[\left(x_{k}-\widehat{\boldsymbol{\theta}}'\mathbf{x}_{k-p}^{k-1}\right)^2 \right].
\end{equation}
In \cite{goldenshluger2001nonasymptotic}, it is shown that an $\ell_2$-regularized LS estimator with an {order} $p^\star = \mathcal{O} (\log n) $ is minimax optimal. This order pertains to the denoising regime where $n \gg p$. Hence, in order to capture long order lags of the process, one requires a sample size exponentially large in $p$, which may make the estimation problem computationally infeasible. For instance, consider a $2$-sparse parameter with only $\theta_1$ and $\theta_p$ being non-zero. Then, in order to achieve minimax optimality, $n \sim \mathcal{O}(2^p)$ measurements are required. In contrast, in the compressive regime where $s, n \ll p$, the goal, instead of selecting $p$,  is to find conditions on the sparsity level $s$, so that for a given $n$ and large enough $p$, the $\ell_1$-regularized estimator is minimax optimal without explicit knowledge of the value of $s$ (See for example, \cite{candes2006modern}).

In the following proposition, we establish the minimax optimality of the $\ell_1$-regularized estimator over the class of sparse AR processes with $\boldsymbol{\theta} \in \boldsymbol{\Theta}$:
\begin{prop}
\label{thm:ar_minimax}
Let $\mathbf{x}_1^n$ be samples of an AR process with $s$-sparse parameters satisfying $\|\boldsymbol{\theta}\|_1 \leq 1-\eta$ and $s  \le  \min \left \{\frac{1- \eta}{\sqrt{8\pi} \eta} \sqrt{\frac{n}{\log p}}\ , \frac{n}{d_1{(p \log p)^{1/2}}}, {\frac{n}{d_0 (\log p)^2}} \right\}$. Then, we have:
\begin{equation*}
\mathcal{R}_{\sf est} (\widehat{\boldsymbol{\theta}}_{\ell_1}) \leq L \mathcal{R}_{\sf est} (\widehat{\boldsymbol{\theta}}_{\sf minimax}).
\end{equation*}
where $L$ is a constant and is only a function of $\eta$ and $\sigma_{\sf w}^2$ and is explicitly given in the proof.
\end{prop}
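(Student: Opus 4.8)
The plan is to establish the order-optimality relation (\ref{eq:ar_minimax_optinorder}) by sandwiching both risks between constant multiples of the rate $\sqrt{s\log p/n}$: first an \emph{upper} bound $\mathcal{R}_{\sf est}(\widehat{\boldsymbol{\theta}}_{\ell_1}) \le c_1\sqrt{s\log p/n}$ obtained directly from Theorem \ref{thm:ar_1}, and then a matching \emph{lower} bound $\mathcal{R}_{\sf est}(\widehat{\boldsymbol{\theta}}_{\sf minimax}) \ge c_2\sqrt{s\log p/n}$ valid over the class $\mathcal{H}$. Dividing one by the other then yields the claim with $L = c_1/c_2$ depending only on $\eta$ and $\sigma_{\sf w}^2$.

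For the upper bound I would specialize Theorem \ref{thm:ar_1} to the exactly $s$-sparse case, for which $\sigma_s(\boldsymbol{\theta}) = 0$, so the second summand in (\ref{eq:ar1error}) vanishes and, on the high-probability event, $\|\widehat{\boldsymbol{\theta}}_{\ell_1}-\boldsymbol{\theta}\|_2 \le d_3\sqrt{s\log p/n}$. The two sparsity conditions $s \le n/(d_1(p\log p)^{1/2})$ and $s \le n/(d_0(\log p)^2)$ in the proposition are precisely the hypotheses $n > s\max\{d_0(\log p)^2, d_1(p\log p)^{1/2}\}$ required to invoke that theorem. To pass from the high-probability bound to the expected risk, I would split the expectation over the success event and its complement; on the complement both $\boldsymbol{\theta}$ and $\widehat{\boldsymbol{\theta}}_{\ell_1}$ lie in $\boldsymbol{\Theta}$, so $\|\widehat{\boldsymbol{\theta}}_{\ell_1}-\boldsymbol{\theta}\|_2 \le \|\widehat{\boldsymbol{\theta}}_{\ell_1}-\boldsymbol{\theta}\|_1 \le 2(1-\eta)$, and the complement carries probability $\mathcal{O}(n^{-d_4})$. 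Since $n^{-d_4} \lesssim s\log p/n$ in the regime of interest, the deterministic tail contributes only a lower-order term, giving $\mathbb{E}[\|\widehat{\boldsymbol{\theta}}_{\ell_1}-\boldsymbol{\theta}\|_2^2] \le c_1^2\, s\log p/n$ uniformly over $\mathcal{H}$, hence the desired upper bound on $\mathcal{R}_{\sf est}(\widehat{\boldsymbol{\theta}}_{\ell_1})$.

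The lower bound is the crux, and I would obtain it by a Fano-type argument restricted to Gaussian innovations (a sub-class of $\mathcal{H}$, so any lower bound there transfers). Using the Gilbert--Varshamov bound I would construct a packing $\{\boldsymbol{\theta}^{(1)},\dots,\boldsymbol{\theta}^{(M)}\} \subset \boldsymbol{\Theta}$ of $s$-sparse vectors whose nonzero entries share a common amplitude $a \asymp \eta\sqrt{\log p/n}$, with $\log M \gtrsim s\log p$ and pairwise separation $\|\boldsymbol{\theta}^{(i)}-\boldsymbol{\theta}^{(j)}\|_2 \gtrsim a\sqrt{s} \asymp \eta\sqrt{s\log p/n}$. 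The feasibility requirement $\|\boldsymbol{\theta}^{(i)}\|_1 = sa \le 1-\eta$ forces $s \lesssim \frac{1-\eta}{\eta}\sqrt{n/\log p}$, which is exactly the first sparsity constraint in the statement, the explicit factor $1/\sqrt{8\pi}$ arising from the precise amplitude calibration. To control the information content I would bound the KL divergence between the laws of two such $n$-sample AR processes; conditioning on the initial block and summing the per-step Gaussian prediction errors gives $D(P_{\boldsymbol{\theta}^{(i)}}\|P_{\boldsymbol{\theta}^{(j)}}) = \frac{n}{2\sigma_{\sf w}^2}(\boldsymbol{\theta}^{(i)}-\boldsymbol{\theta}^{(j)})^T \mathbf{R}^{(i)}(\boldsymbol{\theta}^{(i)}-\boldsymbol{\theta}^{(j)})$, where $\mathbf{R}^{(i)}$ is the covariance under $\boldsymbol{\theta}^{(i)}$. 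The stability assumption yields $\lambda_{\max}(\mathbf{R}^{(i)}) \le \sup_\omega S(\omega) \le \sigma_{\sf w}^2/\eta^2$, so that $D \le \frac{n}{2\eta^2}\|\boldsymbol{\theta}^{(i)}-\boldsymbol{\theta}^{(j)}\|_2^2 \lesssim nsa^2 \asymp s\log p$. Calibrating $a$ so the average divergence stays below a fixed fraction of $\log M$, Fano's inequality then delivers $\mathcal{R}_{\sf est}(\widehat{\boldsymbol{\theta}}_{\sf minimax}) \ge c_2\, a\sqrt{s} \asymp \eta\sqrt{s\log p/n}$.

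The main obstacle is this lower-bound construction, specifically the simultaneous satisfaction of three competing demands: the packing vectors must stay inside the convex stability ball $\boldsymbol{\Theta}$ (tying $a$ to $s$ through the $\ell_1$ budget and producing the constraint $s \le \frac{1-\eta}{\sqrt{8\pi}\,\eta}\sqrt{n/\log p}$), they must be well separated in $\ell_2$, and their induced process laws must be statistically close. Unlike classical compressed sensing, the measurements here are the process history itself, so the KL computation must account for the dependent, self-generated design; the key simplification is that the spectral-spread bound $\lambda_{\max}(\mathbf{R}^{(i)}) \le \sigma_{\sf w}^2/\eta^2$ cleanly decouples the divergence from the correlation structure, mirroring the role of the spectral spread in \cite{goldenshluger2001nonasymptotic}. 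Combining the two bounds, $\mathcal{R}_{\sf est}(\widehat{\boldsymbol{\theta}}_{\ell_1}) \le c_1\sqrt{s\log p/n} \le (c_1/c_2)\,\mathcal{R}_{\sf est}(\widehat{\boldsymbol{\theta}}_{\sf minimax})$, establishes (\ref{eq:ar_minimax_optinorder}) with $L = c_1/c_2$ a function of $\eta$ and $\sigma_{\sf w}^2$ alone.
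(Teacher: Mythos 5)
Your proposal is correct and follows essentially the same route as the paper: the upper bound comes from Theorem \ref{thm:ar_1} converted to an expected risk by trivially bounding $\|\widehat{\boldsymbol{\theta}}_{\ell_1}-\boldsymbol{\theta}\|_2$ by $2(1-\eta)$ on the low-probability bad event, and the lower bound comes from Fano's inequality applied to a Gilbert--Varshamov-type packing of $s$-sparse parameters with common amplitude $\asymp \eta\sqrt{\log p/n}$ (the paper realizes $\log M \gtrsim s\log p$ by combining a constant-weight code over supports, $A(p,s/4,s)$, with sign patterns within each support), with the KL divergence controlled exactly as you describe via the orthogonality principle and the spectral bound $\lambda_{\max}(\mathbf{R}) \le \sup_\omega S(\omega) \lesssim \sigma_{\sf w}^2/\eta^2$, the $\ell_1$-feasibility of the packing yielding the constraint $s \le \frac{1-\eta}{\sqrt{8\pi}\,\eta}\sqrt{n/\log p}$. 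Apart from bookkeeping of $2\pi$ normalization constants, your argument matches the paper's proof.
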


\noindent  \textit{\textbf{Remark 5.}} Proposition \ref{thm:ar_minimax} implies that $\ell_1$-regularized LS is minimax optimal in estimating the $s$-sparse parameter vector $\boldsymbol{\theta}$, for small enough $s$. The proof of the Proposition \ref{thm:ar_minimax} is given in Appendix \ref{prf:minimax}. This result can be extended to compressible $\boldsymbol{\theta}$ in a natural way with a bit more work, but we only present the proof for the case of $s$-sparse $\boldsymbol{\theta}$ for brevity. We also state the following proposition on the prediction performance of the $\ell_1$-regularized LS estimator:
\begin{prop}
\label{thm:ar_minimax2}
Let $\mathbf{x}_{-p+1}^n$ be samples of an AR process with $s$-sparse parameters and Gaussian innovations, then there exists  a positive constant $d_5$ such that for large enough $n,p$ and $s$ satisfying $n>d_1 s(p \log p)^{1/2}$, we have:
\begin{equation}
\mathcal{R}^2_{\sf pre}(\widehat{\boldsymbol{\theta}}_{\ell_1})  \leq d_5 \frac{s \log p}{n}+\sigma^2_{\sf w}.
\end{equation}
\end{prop}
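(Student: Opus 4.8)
The plan is to decompose the one-step prediction error evaluated at a fresh test index $k$ into an estimation-driven ``signal'' part and an irreducible innovation part, and then to convert the high-probability estimation bound of Theorem~\ref{thm:ar_1} into the expectation that defines $\mathcal{R}^2_{\sf pre}$. Writing $\mathbf{e} := \boldsymbol{\theta} - \widehat{\boldsymbol{\theta}}_{\ell_1}$, the model (\ref{AR_def}) gives
\[
x_k - \widehat{\boldsymbol{\theta}}_{\ell_1}^T \mathbf{x}_{k-p}^{k-1} = \mathbf{e}^T \mathbf{x}_{k-p}^{k-1} + w_k .
\]
First I would argue that the innovation $w_k$ at the test index is independent both of the past covariates $\mathbf{x}_{k-p}^{k-1}$ and of the training-based estimator $\widehat{\boldsymbol{\theta}}_{\ell_1}$ (hence of $\mathbf{e}$); since $\mathbb{E}[w_k] = 0$ and $\mathbb{E}[w_k^2] = \sigma_{\sf w}^2$, the cross term vanishes and
\[
\mathbb{E}\left[\left(x_k - \widehat{\boldsymbol{\theta}}_{\ell_1}^T \mathbf{x}_{k-p}^{k-1}\right)^2\right] = \mathbb{E}\left[\left(\mathbf{e}^T \mathbf{x}_{k-p}^{k-1}\right)^2\right] + \sigma_{\sf w}^2 .
\]

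Next, treating the test covariate $\mathbf{x}_{k-p}^{k-1}$ as an independent draw from the stationary law and conditioning on the training sample, the inner expectation factorizes as $\mathbb{E}\big[(\mathbf{e}^T \mathbf{x}_{k-p}^{k-1})^2 \mid \mathbf{e}\big] = \mathbf{e}^T \mathbf{R} \mathbf{e}$, which I bound by $\lambda_{\max}(\mathbf{R})\|\mathbf{e}\|_2^2$. Invoking the fact recalled in Section~\ref{sec:ar_notations} that the largest eigenvalue of $\mathbf{R}$ is dominated by the supremum of the power spectral density, $\lambda_{\max}(\mathbf{R}) \le \sup_\omega S(\omega)$, and that this quantity is uniformly bounded over $\mathcal{H}$ under the sufficient stability assumption, reduces the problem to controlling $\mathbb{E}[\|\mathbf{e}\|_2^2]$.

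To bound $\mathbb{E}[\|\mathbf{e}\|_2^2]$ I would split the probability space into the favorable event of Theorem~\ref{thm:ar_1} and its complement. On the favorable event, specializing that theorem to the exactly $s$-sparse case (so that $\sigma_s(\boldsymbol{\theta}) = 0$ and the second error term drops) yields $\|\mathbf{e}\|_2^2 \le d_3^2\, \tfrac{s\log p}{n}$, which holds with probability at least $1 - \mathcal{O}(n^{-d_4})$ once $n > d_1 s (p \log p)^{1/2}$ (the companion condition $n > d_0 s (\log p)^2$ being implied for large $p$). On the complement, the crucial observation is that both $\boldsymbol{\theta}$ and $\widehat{\boldsymbol{\theta}}_{\ell_1}$ lie in the feasible set $\boldsymbol{\Theta}$, so $\|\mathbf{e}\|_2 \le \|\mathbf{e}\|_1 \le \|\boldsymbol{\theta}\|_1 + \|\widehat{\boldsymbol{\theta}}_{\ell_1}\|_1 \le 2(1-\eta)$ \emph{deterministically}; hence the complementary event contributes at most $4(1-\eta)^2\,\mathcal{O}(n^{-d_4})$, which for $d_4 \ge 1$ is of order $\mathcal{O}(1/n)$ and is absorbed into the $\tfrac{s\log p}{n}$ term. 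Combining the two events gives $\mathbb{E}[\|\mathbf{e}\|_2^2] \le c\,\tfrac{s \log p}{n}$ for an absolute constant $c$, and assembling the pieces produces the claim with $d_5 = c\,\sup_\omega S(\omega)$ depending only on $\eta$ and $\sigma_{\sf w}^2$; since every bound is uniform over $\mathcal{H}$, taking the supremum is harmless.

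The hard part will be the bookkeeping around the independence and conditioning structure of the test point --- making precise that the innovation and the test covariates are independent of the training-based estimator so that both the cross-term cancellation and the factorization $\mathbf{e}^T\mathbf{R}\mathbf{e}$ are legitimate --- rather than any delicate concentration argument, since the boundedness of $\boldsymbol{\Theta}$ trivializes the usual passage from a high-probability bound to a bound in expectation. The role of the Gaussian-innovations hypothesis is mainly to render these moment computations exact; the remaining ingredients, namely the spectral eigenvalue bound and Theorem~\ref{thm:ar_1}, hold more generally.
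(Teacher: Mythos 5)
The paper never actually writes out a proof of Proposition~\ref{thm:ar_minimax2}: it states that the argument ``is similar to Theorem 3 of \cite{goldenshluger2001nonasymptotic}'' and skips it, so your proposal has to be judged on its own merits. Much of it is sound: the decomposition $x_k-\widehat{\boldsymbol{\theta}}_{\ell_1}^T\mathbf{x}_{k-p}^{k-1}=\mathbf{e}^T\mathbf{x}_{k-p}^{k-1}+w_k$, the cross-term cancellation (valid at the one-step-ahead index $k=n+1$, since $w_{n+1}$ is independent of everything measurable with respect to the data up to time $n$, including $\widehat{\boldsymbol{\theta}}_{\ell_1}$), the absorption of the condition $n>d_0 s(\log p)^2$ into $n>d_1 s(p\log p)^{1/2}$ for large $p$, and the passage from the high-probability bound of Theorem~\ref{thm:ar_1} to a bound in expectation using the deterministic bound $\|\mathbf{e}\|_2\le\|\boldsymbol{\theta}\|_1+\|\widehat{\boldsymbol{\theta}}_{\ell_1}\|_1\le 2(1-\eta)$ on the complementary event are all correct.

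The genuine gap is the step where you ``treat the test covariate $\mathbf{x}_{k-p}^{k-1}$ as an independent draw from the stationary law.'' In the setting of the proposition, prediction is performed on the same realization used for estimation: at the natural test index $k=n+1$ the covariate vector $\mathbf{x}_{n+1-p}^{n}$ consists of precisely the last $p$ training samples, so it is strongly dependent on $\mathbf{e}=\boldsymbol{\theta}-\widehat{\boldsymbol{\theta}}_{\ell_1}$, and the factorization $\mathbb{E}\big[(\mathbf{e}^T\mathbf{x}_{k-p}^{k-1})^2\mid\mathbf{e}\big]=\mathbf{e}^T\mathbf{R}\,\mathbf{e}$ is unjustified. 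This dependence is not mere bookkeeping, and it does not yield to the obvious repair: splitting on the favorable event of Theorem~\ref{thm:ar_1} and invoking the cone condition to write $(\mathbf{e}^T\mathbf{x})^2\le\|\mathbf{e}\|_1^2\|\mathbf{x}\|_\infty^2\le 16\,s\,\|\mathbf{e}\|_2^2\,\|\mathbf{x}\|_\infty^2$ produces a bound of order $s^2(\log p)^2/n$, losing a factor $s\log p$ relative to the claim. Controlling the dependent quadratic form $\mathbf{e}^T\big(\mathbf{x}\mathbf{x}^T-\mathbf{R}\big)\mathbf{e}$ is exactly the content of the Goldenshluger--Zeevi argument the paper points to, and it is also where the Gaussian-innovations hypothesis earns its keep --- not, as you suggest, merely to ``render moment computations exact.'' If one instead defines $\mathcal{R}^2_{\sf pre}$ with an independent test realization, your proof is essentially complete as written; but under that reading the proposition would be an immediate corollary of Theorem~\ref{thm:ar_1} with no need for Gaussianity, which strongly suggests it is not the intended one.
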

It can be readily observed that for {$n \gg s \log p$} the prediction error variance is very close to the variance of the innovations. The proof is similar to Theorem 3 of \cite{goldenshluger2001nonasymptotic} and is skipped in this paper for brevity.

%
\section{Application to Simulated and Real Data}
\label{sec:ar_sim}
In this section, we study and compare the performance of Yule-Walker based  estimation methods with those of the $\ell_1$-regularized and greedy estimators given in Section \ref{sec:ar_theory}. These methods are applied to simulated data as well as real data from crude oil price and traffic speed.

\subsection{Simulation studies}
In order to simulate an AR process, we filtered a Gaussian white noise process using an IIR filter with sparse parameters. Figure \ref{fig:sample_ar} shows a typical sample path of the simulated AR process used in our analysis. For the parameter vector  $\boldsymbol{\theta}$, we chose a length of $p=300$, and employed $n = 1500$ generated samples of the corresponding process for estimation. The parameter vector $\boldsymbol{\theta}$ is of sparsity level $s=3$ and $\eta = 1-\|\boldsymbol{\theta}\|_1=0.5$. A value of $\gamma_n = 0.1$ is used, which is slightly tuned around the theoretical estimate given by Theorem \ref{thm:ar_1}. The order of the process is assumed to be known. We compare the performance of seven estimators: 1) $\widehat{\boldsymbol{\theta}}_{\sf LS}$ using LS, 2) $\widehat{\boldsymbol{\theta}}_{\sf yw}$ using the Yule-Walker equations, 3) $\widehat{\boldsymbol{\theta}}_{\ell_1}$ from $\ell_1$-regularized LS, 4) $\widehat{\boldsymbol{\theta}}_{\sf OMP}$ using OMP,  5) $\widehat{\boldsymbol{\theta}}_{\sf yw, \ell_{2,1}}$ using Eq. (\ref{eq:ar_ywl21}), 6) $\widehat{\boldsymbol{\theta}}_{\sf yw, \ell_{1,1}}$ using Eq. (\ref{eq:ar_ywl11}), and 7) $\widehat{\boldsymbol{\theta}}_{\sf ywOMP}$ using the cost function $\mathfrak{J}(\boldsymbol{\theta})$ in the generalized OMP. {Note that for the LS and Yule-Walker estimates, we have relaxed the condition of $\| \boldsymbol{\theta} \|_1 < 1$, to be consistent with the common usage of these methods. The Yule-Walker estimate is guaranteed to result in a stable AR process, whereas the LS estimate is not \cite{percival1993spectral}.} Figure \ref{fig:ar_param} shows the estimated parameter vectors using these algorithms. It can be visually observed that $\ell_1$-regularized and greedy estimators (shown in {purple}) significantly outperform the Yule-Walker-based estimates (shown in {orange}).

\begin{figure}[h]
\begin{center}
\noindent
\includegraphics[width=.9\columnwidth]{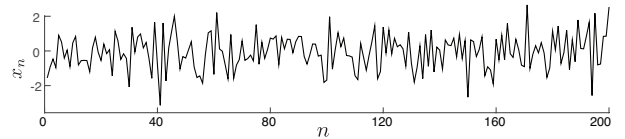}
\caption{Samples of the simulated AR process.}\label{fig:sample_ar}
\end{center}
\vspace{-5mm}
\end{figure}

\begin{figure}[h]
\begin{center}
\noindent
\includegraphics[width=.9\columnwidth]{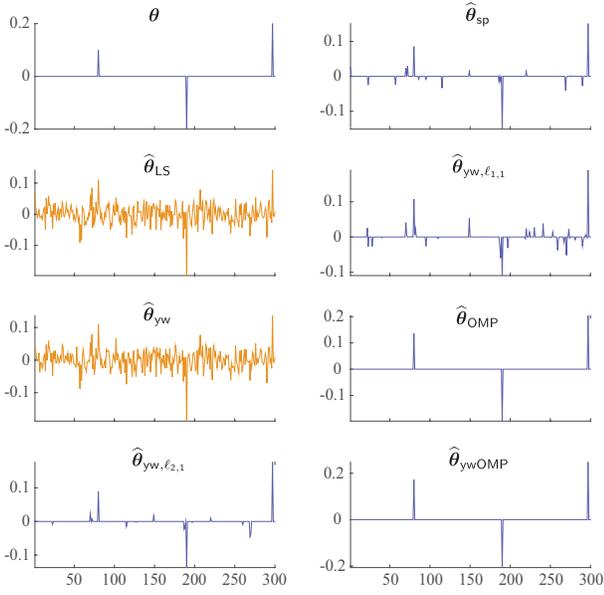}
\caption{Estimates of $\boldsymbol{\theta}$ for $n=1500$, $p =300$, and $s=3$ (These results are best viewed in the color version).}\label{fig:ar_param}
\end{center}
\vspace{-5mm}
\end{figure}

In order to quantify the latter observation precisely, we repeated the same experiment for $p=300, s=3$ and $10 \leq n \leq 10^5$. A comparison of the normalized MSE of the estimators vs. $n$ is shown in Figure \ref{fig:ar_mse}. As it can be inferred from Figure \ref{fig:ar_mse}, in the region where $n$ is comparable to or less than $p$ {(shaded in light {purple})}, the sparse estimators have a systematic performance gain over the Yule-Walker based estimates, with the $\ell_1$-regularized LS and ywOMP estimates outperforming the rest.

\begin{figure}[h]
\begin{center}
\noindent
\includegraphics[width=.85\columnwidth]{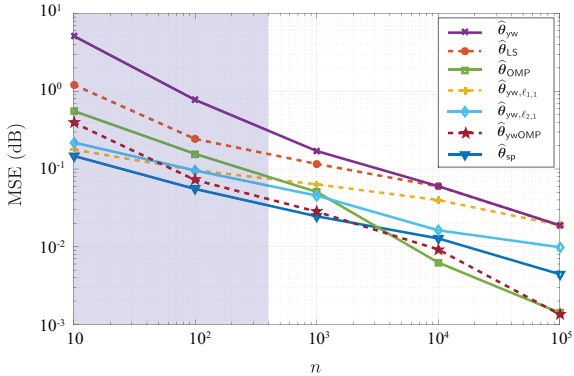}
\caption{MSE comparison of the estimators vs. the number of measurements $n$. {The shaded region corresponds to the compressive regime of $n < p$.}}\label{fig:ar_mse}
\end{center}
\vspace{-5mm}
\end{figure}

The MSE comparison in Figure \ref{fig:ar_mse} requires one to know the true parameters. In practice, the true parameters are not available for comparison purposes. In order to quantify the performance gain of these methods, we use statistical tests to assess the goodness-of-fit of the estimates. The common chi-square type statistical tests, such as the F-test, are useful when the hypothesized distribution to be tested against is discrete or categorical. For our problem setup with sub-Gaussian innovations, we will use a number of statistical tests appropriate for AR processes, namely, the Kolmogorov-Smirnov (KS) test, the Cram\'er-von Mises (CvM) criterion, {the spectral Cram\'er-von Mises (SCvM) test} and the Anderson-Darling (AD) \cite{d1986goodness,johansen1995likelihood,anderson1997goodness}. A summary of these tests is given in Appendix \ref{app:ar_tests}. Table \ref{tab:synthetic_table} summarizes the test statistics for different estimation methods. {Cells colored in orange (darker shade in grayscale) correspond to traditional AR estimation methods and those colored in blue (lighter shade in grayscale) correspond to the sparse estimator with the best performance among those considered in this work}. These tests are based on the known results on limiting distributions of error residuals. As noted from Table \ref{tab:synthetic_table}, our simulations suggest that the OMP estimate achieves the best test statistics for the CvM, AD and KS tests, whereas the $\ell_1$-regularized estimate achieves the best SCvM statistic.


\begin{table}[h!]
\centering
\caption{Goodness-of-fit tests for the simulated data}
\label{tab:synthetic_table}
\begin{tabular}{lllll}
\multicolumn{1}{l|}{\backslashbox{Estimate}{Test}} & CvM                  & AD                   & KS                  & SCvM \\ \cline{1-5} \hline
\multicolumn{1}{l|}{${\boldsymbol{\theta}}$}   &  0.31          &  1.54  &  0.031    &   0.009      \\
\multicolumn{1}{l|}{$\widehat{\boldsymbol{\theta}}_{\sf LS}$}  &   \cellcolor{orange!70}0.68         &   \cellcolor{orange!70}5.12         &  \cellcolor{orange!70}0.037      &  \cellcolor{orange!70}0.017   \\
\multicolumn{1}{l|}{$\widehat{\boldsymbol{\theta}}_{\sf yw}$}  & \cellcolor{orange!70}0.65 & \cellcolor{orange!70}4.87 &  \cellcolor{orange!70}0.034 &  \cellcolor{orange!70}0.025 \\
\multicolumn{1}{l|}{$\widehat{\boldsymbol{\theta}}_{\ell_1}$}  & 0.34 & 1.72 &  0.030  &    \cellcolor{blue!10}0.009        \\
\multicolumn{1}{l|}{$\widehat{\boldsymbol{\theta}}_{\sf OMP}$}  & \cellcolor{blue!10}0.29 &            \cellcolor{blue!10}1.45  &   \cellcolor{blue!10}0.028     &   0.009    \\
\multicolumn{1}{l|}{$\widehat{\boldsymbol{\theta}}_{{\sf yw},\ell_{2,1}}$}  &  0.35 & 1.80 &   0.032   & 0.009  \\
\multicolumn{1}{l|}{$\widehat{\boldsymbol{\theta}}_{{\sf yw},\ell_{1,1}}$}  &         0.42  &   2.33       &    0.040     &   0.008  \\
\multicolumn{1}{l|}{$\widehat{\boldsymbol{\theta}}_{\sf ywOMP}$}  &  0.29 &  1.46 &        0.030 &   0.009
\end{tabular}
\vspace{-2mm}
\end{table}

%

\subsection{Application to the analysis of \textcolor{black}{crude oil prices}}

In this and the following subsection, we consider applications with real-world data. As for the first application, we apply the {sparse} AR estimation techniques to analyze the crude oil price of the Cushing, OK WTI Spot Price FOB dataset \cite{cushing}. This dataset consists of 7429 daily values of oil prices in dollars per barrel. In order to avoid outliers, usually the dataset is filtered with a moving average filter of high order. We have skipped this procedure by visual inspection of the data and selecting $n=4000$ samples free of outliers. Such financial data sets are known for their non-stationarity and long order history dependence. In order to remove the deterministic trends in the data, one-step or two-step time differencing is typically used. We refer to \cite{robinson2003time} for a full discussion of this detrending method. We have used a first-order time differencing which resulted in a sufficient detrending of the data. Figure \ref{fig:sample_oil} shows the data used in our analysis. We have chosen $p=150$ by inspection. The histogram of first-order differences as well the estimates are shown in Figure \ref{fig:oil_estimate}.

\begin{figure}[h]
\begin{center}
\noindent
\includegraphics[width=.9\columnwidth]{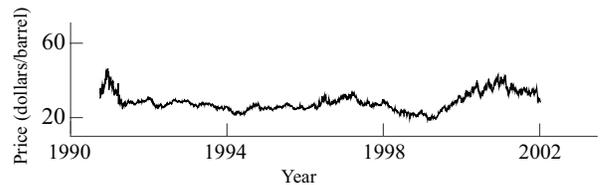}
\caption{{A sample segment of the Cushing, OK WTI Spot Price FOB data.}}\label{fig:sample_oil}
\end{center}
\vspace{-2mm}
\end{figure}

\begin{figure}[h!]
\begin{center}
\noindent
\includegraphics[width=.9\columnwidth]{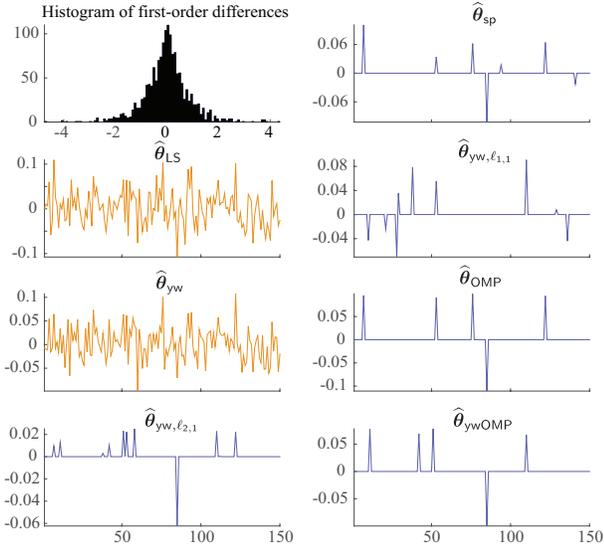}
\caption{Estimates of $\boldsymbol{\theta}$ for the second-order differences of the oil price data.}\label{fig:oil_estimate}
\end{center}
\vspace{-2mm}
\end{figure}

A visual inspection of the estimates in Figure \ref{fig:oil_estimate} shows that the $\ell_1$-regularized LS ($\widehat{\boldsymbol{\theta}}_{\ell_1}$) and OMP ($\widehat{\boldsymbol{\theta}}_{\sf OMP}$)  estimates consistently select specific time lags in the AR parameters, whereas the Yule-Walker and LS estimates seemingly overfit the data by populating the entire parameter space. In order to perform goodness-of-fit tests, we use an even/odd two-fold cross-validation. Table \ref{tab:oil_table} shows the corresponding test statistics, which reveal that indeed the $\ell_1$-regularized and OMP estimates outperform the traditional estimation techniques.

\begin{table}[h]
\centering
\caption{{Goodness-of-fit tests for the crude oil price data}}
\label{tab:oil_table}
\begin{tabular}{lllll}
\multicolumn{1}{l|}{\backslashbox{Estimate}{Test}} & CvM                  & AD                   & KS                  & SCvM \\ \cline{1-5} \hline
\multicolumn{1}{l|}{$\widehat{\boldsymbol{\theta}}_{\sf LS}$}  &   \cellcolor{orange!70}0.88         &   \cellcolor{orange!70}5.55        &  \cellcolor{orange!70}0.055      &  \cellcolor{orange!70}0.046   \\
\multicolumn{1}{l|}{$\widehat{\boldsymbol{\theta}}_{\sf yw}$}  & \cellcolor{orange!70}0.58 & \cellcolor{orange!70}3.60 &  \cellcolor{orange!70}0.043 &  \cellcolor{orange!70}0.037 \\
\multicolumn{1}{l|}{$\widehat{\boldsymbol{\theta}}_{\ell_1}$}  & 0.27 & 1.33 &  0.031  &    \cellcolor{blue!10}0.020       \\
\multicolumn{1}{l|}{$\widehat{\boldsymbol{\theta}}_{\sf OMP}$}  & \cellcolor{blue!10}0.22 &            \cellcolor{blue!10}1.18  &   \cellcolor{blue!10}0.025     &   0.022    \\
\multicolumn{1}{l|}{$\widehat{\boldsymbol{\theta}}_{{\sf yw},\ell_{2,1}}$}  &  0.28 & 1.40 &   0.027   & 0.021  \\
\multicolumn{1}{l|}{$\widehat{\boldsymbol{\theta}}_{{\sf yw},\ell_{1,1}}$}  &         0.24  &   1.26       &    0.027     &   0.022  \\
\multicolumn{1}{l|}{$\widehat{\boldsymbol{\theta}}_{\sf ywOMP}$}  &  0.23 &  \cellcolor{blue!10} 1.18 &        0.026 &   0.022
\end{tabular}
\vspace{-2mm}
\end{table}

\subsection{Application to the analysis of traffic data}
Our second real data application concerns traffic speed data. The data used in our simulations is the INRIX \textregistered\  speed data for I-495 Maryland inner loop freeway (clockwise) between US-1/Baltimore Ave/Exit 25 and Greenbelt Metro Dr/Exit 24 from 1 Jul, 2015 to 31 Oct, 2015 \cite{ritis1,ritis2}. The reference speed of 65 mph is reported. {Our aim is to analyze the long-term, large-scale periodicities manifested in these data by fitting high-order sparse AR models}. Given the huge length of the data and its high variability, the following pre-processing was made on the original data:
\begin{enumerate}
\item The data was downsampled by a factor of $4$ and averaged by the hour in order to reduce its daily variability, that is each lag corresponds to one hour.
\item The logarithm of speed was used for analysis and the mean was subtracted.  This reduces the high variability of speed due to rush hours and lower traffic during weekends and holidays.
\end{enumerate}

\begin{figure}[h]
\begin{center}
\noindent
\includegraphics[width=0.8\columnwidth]{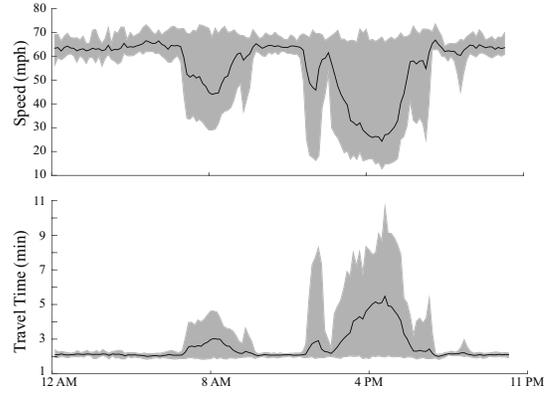}
\caption{{A sample of the speed and travel time data for I-495.}}\label{fig:ar_speed}
\end{center}
\vspace{-2mm}
\end{figure}

Figure \ref{fig:ar_speed} shows a typical average weekly speed and travel time in this dataset and the corresponding 25-75-th percentiles. As can be seen the data shows high variability around the rush hours of $8~\text{am}$ and $4~\text{pm}$. In our analysis, we used the first half of the data ($n=1500$) for fitting, from which the AR parameters and the distribution and variance of the innovations were estimated. The statistical tests were designed based on the estimated distributions, and the statistics were computed accordingly using the second half of the data. \textcolor{black}{We selected an order of $p = 200$ by inspection and noting that the data seems to have a periodicity of order $170$ samples.}

\begin{figure}[h]
\begin{center}
\noindent
\includegraphics[width=.9\columnwidth]{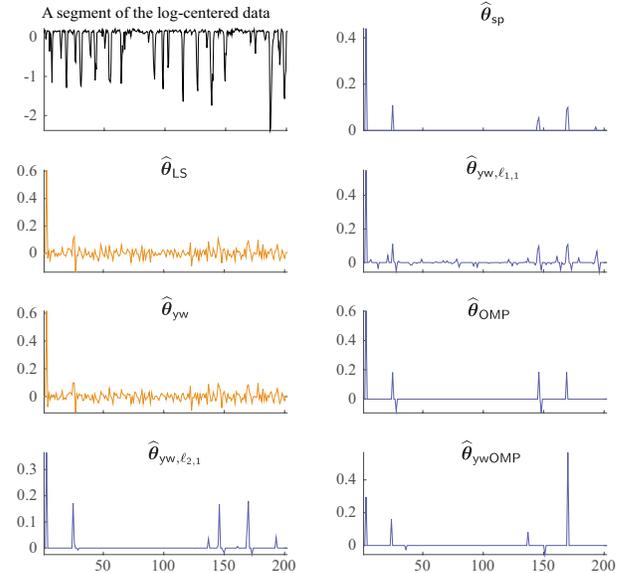}
\caption{{Estimates of $\boldsymbol{\theta}$ for the traffic speed data.}}\label{fig:traffic_estimate}
\end{center}
\vspace{-2mm}
\end{figure}

Figure \ref{fig:traffic_estimate} shows part of the data used in our analysis as well as the estimated parameters. The $\ell_1$-regularized LS ($\widehat{\boldsymbol{\theta}}_{\ell_1}$) and OMP ($\widehat{\boldsymbol{\theta}}_{\sf OMP}$) are consistent in selecting the same components of $\boldsymbol{\theta}$. These estimators pick up two major lags around which $\boldsymbol{\theta}$ has its largest components. The first lag corresponds to about $24$ hours which is mainly due to the rush hour periodicity on a daily basis. The second lag is around $150-170$ hours which corresponds to weekly changes in the speed due to lower traffic over the weekend. In contrast, the Yule-Walker and LS estimates do not recover these significant time lags. 

\begin{table}[h]
\centering
\vspace{-2mm}
\caption{Goodness-of-fit tests for the traffic speed data}
\label{tab:traffic_table}
\begin{tabular}{lllll}
\multicolumn{1}{l|}{\backslashbox{Estimate}{Test}} & CvM                  & AD                   & KS                  & SCvM \\ \cline{1-5} \hline
\multicolumn{1}{l|}{$\widehat{\boldsymbol{\theta}}_{\sf yw}$}  & \cellcolor{orange!70}0.012 & \cellcolor{orange!70}0.066 &  \cellcolor{orange!70}0.220 &  \cellcolor{orange!70} 0.05 \\
\multicolumn{1}{l|}{$\widehat{\boldsymbol{\theta}}_{\ell_1}$}  &  \cellcolor{blue!10}1.4$\times 10^{-7}$ &  \cellcolor{blue!10}2.1$\times 10^{-6}$ &   \cellcolor{blue!10}6.7$\times 10^{-4}$  &   0.25       \\
\multicolumn{1}{l|}{$\widehat{\boldsymbol{\theta}}_{\sf OMP}$}  & 0.017 &  0.082  & 0.220 &  1.49    \\
\multicolumn{1}{l|}{$\widehat{\boldsymbol{\theta}}_{\sf ywOMP}$}  &  0.025 &  0.122 &        0.270 &   0.14
\end{tabular}
\end{table}

Statistical tests for a selected subset of the estimators are shown in Table \ref{tab:traffic_table}. Interestingly, the $\ell_1$-regularized LS estimator significantly outperforms the other estimators in three of the tests. The Yule-Walker estimator, however, achieves the best SCvM test statistic.

\section{Conclusions}\label{sec:conc}
In this paper, we {have} investigated sufficient sampling requirements for stable estimation of AR models in the non-asymptotic regime using the $\ell_1$-regularized LS and greedy estimation (OMP) techniques. We {have} further established the minimax optimality of the $\ell_1$-regularized LS estimator. Compared to the existing literature, our results provide several major contributions. {First, when $s \sim p^{\frac{1}{2} + \delta}$ for some $\delta \ge 0$, our results suggest an improvement of order $\mathcal{O}(p^{\delta} (\log p)^{3/2})$ in the sampling requirements for the estimation of univariate AR models with sub-Gaussian innovations using the LASSO, over those of \cite{wong2016regularized} and \cite{wu2016performance} which require $n \sim \mathcal{O}(p^2 (\log p)^2)$ for stable AR estimation.} {When specialized to a sub-Gaussian white noise process, i.e., establishing the RE condition of i.i.d. Toeplitz matrices, our results provide an improvement of order $\mathcal{O}(s/\log p)$ over those of \cite{Toeplitz}. Second, although OMP is widely used in practice, the choice of the number of greedy iterations is often ad-hoc. In contrast, our theoretical results prescribe an analytical choices of the number of iterations required for stable estimation, thereby promoting the usage of OMP as a low-complexity algorithm for AR estimation. Third, we established the minimax optimality of the $\ell_1$-regularized LS estimator for the estimation of sparse AR parameters.

We {further} verified the validity of our theoretical results through simulation studies as well as application to real financial and traffic data. These results show that the sparse estimation methods significantly outperform the widely-used Yule-Walker based estimators in fitting AR models to the data. Although we did not theoretically analyze the performance of sparse Yule-Walker based estimators, they seem to perform on par with the $\ell_1$-regularized LS and OMP estimators based on our numerical studies. Finally, our results provide a striking connection to our recent work \cite{kazemipour, kazemipour2015robust} in estimating sparse self-exciting discrete point process models. These models regress an observed binary spike train with respect to its history via Bernoulli or Poisson statistics, and are often used in describing spontaneous activity of sensory neurons. Our results {have shown} that in order to estimate a sparse history-dependence parameter vector of length $p$ and sparsity $s$ in a stable fashion, a spike train of length $n \sim \mathcal{O}(s^{2/3}p^{2/3}\log p)$ is required. {This leads us to conjecture} that these sub-linear sampling requirements are sufficient for a larger class of autoregressive processes, beyond those characterized by linear models.} {Finally, our minimax optimality result requires the sparsity level $s$ to grow at most as fast as $\mathcal{O}(n/(p\log p)^{1/2})$. We consider further relaxation of this condition, as well as the generalization of our results to sparse MVAR processes as future work.}

\appendices

\section{Proofs of Theorems \ref{thm:ar_1} and \ref{thm_OMP}}

\subsection{The Restricted Strong Convexity of the matrix of covariates}
The first element of the proofs of both Theorems \ref{thm:ar_1} and \ref{thm_OMP} is to establish the Restricted Strong Convexity (RSC) for the matrix $\mathbf{X}$ of covariates formed from the observed data. First, we investigate the closely related {Restricted} Eigenvalue (RE) condition. Let $[\lambda_{\sf min}(s)$, $\lambda_{\sf max}(s)]$ be the smallest interval containing the singular values of $\frac{1}{n} (\mathbf{X}_S^T \mathbf{X}_S)$, where $\mathbf{X}_S$ is a sub-matrix $\mathbf{X}$ over an index set $S$ of size $s$.

\begin{definition}[Restricted Eigenvalue Condition]
\label{RE_def}
A matrix $\mathbf{X}$ is said to satisfy the RE condition of order $s$ if $\lambda_{\sf min}(s) > 0$.
\end{definition}
Although the RE condition only restricts $\lambda_{\sf min}(s)$, in the following analysis we also keep track of $\lambda_{\sf max}(s)$, which appears in some of the bounds. Establishing the RSC for $\mathbf{X}$ proceeds in a sequence of lemmas (Lemmas \ref{eig_conv}--\ref{RE_RSC} culminating in Lemma \ref{lem:rsc}). We first show that the RE condition holds for the true covariance of an AR process:
\begin{lemma}[from \cite{grenander1958toeplitz}]
\label{eig_conv}
Let $\mathbf{R} \in \mathbb{R}^{k \times k}$ be the $k \times k$ covariance matrix of a stationary process with power spectral density $S(\omega)$, and denote its maximum and minimum eigenvalues by $\phi_{\max}(k)$ and $\phi_{\sf min}(k)$, respectively. Then, $\phi_{\max}(k)$ is increasing in $k$, $\phi_{ \sf min}(k)$ is decreasing in $k$, and we have
\begin{equation}
\phi_{\sf min}(k) \downarrow \inf_{\omega}S(\omega), \quad \mbox{and} \quad \phi_{\sf max}(k) \uparrow \sup_{\omega}S(\omega).
\end{equation}
\end{lemma}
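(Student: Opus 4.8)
The plan is to pass from the matrix $\mathbf{R}$ to the spectral density $S(\omega)$ through the Wiener--Khinchin relation and then control the eigenvalues via a Rayleigh-quotient characterization in terms of trigonometric polynomials. Since $\mathbf{R}$ is the covariance matrix of a stationary process, it is Toeplitz with entries $R_{mn} = r_{m-n}$, and the autocorrelation sequence and the power spectral density form a Fourier pair, $r_m = \frac{1}{2\pi}\int_{-\pi}^{\pi} S(\omega) e^{jm\omega}\, d\omega$. Substituting this into the quadratic form associated with any $\mathbf{a} = [a_1,\cdots,a_k]^T$, I would obtain
\begin{equation}
\mathbf{a}^* \mathbf{R}\, \mathbf{a} = \frac{1}{2\pi}\int_{-\pi}^{\pi} S(\omega)\, |P_{\mathbf a}(\omega)|^2\, d\omega, \quad \text{where} \quad P_{\mathbf a}(\omega) := \sum_{m=1}^{k} a_m e^{jm\omega},
\end{equation}
and, by Parseval's identity, $\|\mathbf a\|_2^2 = \frac{1}{2\pi}\int_{-\pi}^{\pi} |P_{\mathbf a}(\omega)|^2\, d\omega$. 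Because $\mathbf{R}$ is real symmetric, its extremal Rayleigh quotients coincide over real and complex vectors, so the Rayleigh--Ritz characterization yields $\phi_{\sf min}(k) = \min_{\mathbf a \neq 0} (\int S |P_{\mathbf a}|^2)/(\int |P_{\mathbf a}|^2)$ and the analogous maximum for $\phi_{\sf max}(k)$, the extremization running over all trigonometric polynomials with at most $k$ coefficients.

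From this representation the monotonicity and the crude two-sided bounds follow at once. For monotonicity, I would observe that $\mathbf{R}_{k\times k}$ is a leading principal submatrix of $\mathbf{R}_{(k+1)\times(k+1)}$ (equivalently, a length-$k$ coefficient vector embeds into a length-$(k+1)$ one by appending a zero), so the feasible set of Rayleigh quotients only grows with $k$; by the Cauchy interlacing theorem this forces $\phi_{\sf min}(k)$ to be nonincreasing and $\phi_{\sf max}(k)$ to be nondecreasing. For the bounds, the pointwise inequalities $\inf_\omega S(\omega) \le S(\omega) \le \sup_\omega S(\omega)$ give, after dividing the sandwiched integral by $\int |P_{\mathbf a}|^2$, that $\inf_\omega S(\omega) \le \phi_{\sf min}(k) \le \phi_{\sf max}(k) \le \sup_\omega S(\omega)$ for every $k$. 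A bounded monotone sequence converges, so both $\phi_{\sf min}(k)$ and $\phi_{\sf max}(k)$ have limits inside $[\inf_\omega S,\, \sup_\omega S]$.

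The crux --- and the step I expect to be the main obstacle --- is showing that these limits are \emph{sharp}, i.e.\ that $\phi_{\sf min}(k)\downarrow \inf_\omega S(\omega)$ rather than to some strictly larger value, and dually that $\phi_{\sf max}(k)\uparrow \sup_\omega S(\omega)$. The idea is to exhibit, for each large $k$, a test vector whose spectral mass $|P_{\mathbf a}(\omega)|^2$ concentrates near a point $\omega^\star$ at which $S$ nearly attains its infimum. Choosing $a_m = k^{-1/2} e^{-jm\omega^\star}$ makes the normalized density $|P_{\mathbf a}(\omega)|^2 / \int |P_{\mathbf a}|^2$ equal to the Fej\'er kernel $F_k(\omega-\omega^\star)$, a nonnegative approximate identity concentrating at $\omega^\star$. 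Fej\'er's theorem at points of continuity then gives $(\int S |P_{\mathbf a}|^2)/(\int |P_{\mathbf a}|^2) \to S(\omega^\star)$, whence $\lim_k \phi_{\sf min}(k) \le S(\omega^\star)$; letting $S(\omega^\star)\downarrow\inf_\omega S(\omega)$ closes the gap, and the symmetric construction at a near-maximizer handles $\phi_{\sf max}$.

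The only technical care needed concerns discontinuities of $S$ or the case where the extremum is not attained pointwise, in which case $\omega^\star$ is selected so that $S$ stays within $\epsilon$ of its infimum on an interval of positive measure over which the Fej\'er averaging is controlled. Since this is precisely the content of the classical Szeg\H{o} distribution theorem for Toeplitz forms, at this final measure-theoretic stage I would invoke \cite{grenander1958toeplitz} for the remaining details rather than reproduce them.
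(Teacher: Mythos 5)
Your proof is correct and follows essentially the same route as the paper, which gives no proof of this lemma at all but simply cites Grenander and Szeg\H{o}: your Rayleigh-quotient representation of the Toeplitz form, monotonicity via interlacing of principal submatrices, and sharpness via Fej\'er-kernel test vectors is precisely the classical argument in that reference. The one caveat---immaterial here, since the AR spectral density in the paper is continuous---is that for general $S \in L^1$ the limits are the essential rather than pointwise infimum and supremum, which is exactly the measure-theoretic point you correctly defer to \cite{grenander1958toeplitz}.
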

\noindent This result gives us the following corollary:
\begin{corollary}[Singular Value Spread of $\mathbf{R}$]
\label{cor:eig_conv}
Under the {sufficient stability assumption}, the singular values  of the covariance $\mathbf{R}$ of an AR process lie in the interval $\left [\frac{\sigma^2_{\sf w}}{8 \pi}, \frac{\sigma^2_{\sf w}}{2 \pi \eta^2} \right]$.
\end{corollary}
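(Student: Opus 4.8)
The plan is to combine the Grenander--Szeg\H{o} result of Lemma \ref{eig_conv} with elementary bounds on the power spectral density afforded by the sufficient stability assumption. First I would note that $\mathbf{R}$, being a covariance matrix, is symmetric and positive semidefinite, so its singular values coincide with its eigenvalues; it therefore suffices to bound the eigenvalues. By Lemma \ref{eig_conv}, for every order $k$ the eigenvalues of $\mathbf{R}$ are sandwiched as $\inf_\omega S(\omega) \le \phi_{\sf min}(k) \le \phi_{\sf max}(k) \le \sup_\omega S(\omega)$, so the entire problem reduces to bounding the spectral density (\ref{psd}) uniformly in $\omega$.

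The next step is to control the denominator of $S(\omega)$ in (\ref{psd}) via the triangle inequality. Writing $A(\omega) := 1 - \sum_{\ell=1}^p \theta_\ell e^{-j\ell\omega}$, the forward and reverse triangle inequalities give $1 - \|\boldsymbol{\theta}\|_1 \le |A(\omega)| \le 1 + \|\boldsymbol{\theta}\|_1$ for every $\omega$, where I have used $|e^{-j\ell\omega}| = 1$. Invoking the sufficient stability assumption $\|\boldsymbol{\theta}\|_1 \le 1-\eta$, these become $\eta \le |A(\omega)| \le 2-\eta \le 2$, hence $\eta^2 \le |A(\omega)|^2 \le 4$. Substituting into (\ref{psd}) yields $\tfrac{\sigma^2_{\sf w}}{8\pi} \le S(\omega) \le \tfrac{\sigma^2_{\sf w}}{2\pi\eta^2}$ for all $\omega$, which combined with the sandwich above places all singular values of $\mathbf{R}$ in the claimed interval $\left[\tfrac{\sigma^2_{\sf w}}{8\pi}, \tfrac{\sigma^2_{\sf w}}{2\pi\eta^2}\right]$.

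I do not expect a substantial obstacle here: the corollary is essentially a one-line consequence of Lemma \ref{eig_conv} once the denominator $|A(\omega)|$ is bounded away from $0$ and from above. The only point requiring care is the normalization convention. The factor $1/2\pi$ appearing in both endpoints reflects the convention $r_k = \tfrac{1}{2\pi}\int_{-\pi}^{\pi} S(\omega)\, e^{jk\omega}\, d\omega$ under which the Grenander--Szeg\H{o} theorem confines the eigenvalues to the essential range of the density $\tfrac{1}{2\pi} S(\omega)$; I would state this explicitly so that the constants $\sigma^2_{\sf w}/8\pi$ and $\sigma^2_{\sf w}/2\pi\eta^2$ emerge with the correct prefactors. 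Finally, I would remark that both endpoints are approached as $k \to \infty$ by the monotone convergence in Lemma \ref{eig_conv}, so the interval is sharp for the given stability assumption and cannot be tightened without further structural hypotheses on $\boldsymbol{\theta}$.
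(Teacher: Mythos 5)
Your proof is correct and follows essentially the same route as the paper's (which simply writes $S(\omega) = \frac{1}{2\pi}\,\sigma^2_{\sf w}/|1-\sum_{\ell=1}^p \theta_\ell e^{-j\ell\omega}|^2$ and combines $\|\boldsymbol{\theta}\|_1 \le 1-\eta$ with Lemma \ref{eig_conv}); you have merely spelled out the triangle-inequality step and the $1/2\pi$ normalization that the paper leaves implicit. One small caveat on your closing remark: the interval is \emph{not} sharp, since the triangle inequality actually gives $|A(\omega)| \le 2-\eta < 2$, so $\inf_\omega S(\omega) \ge \sigma^2_{\sf w}/\bigl(2\pi(2-\eta)^2\bigr) > \sigma^2_{\sf w}/(8\pi)$ and the lower endpoint is never attained --- the monotone convergence in Lemma \ref{eig_conv} drives the extreme eigenvalues to $\inf_\omega S$ and $\sup_\omega S$, not to the stated (cruder) endpoints; this does not affect the validity of the corollary itself, whose claim is only containment.
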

\begin{proof}
For an AR($p$) process
\[
S(\omega) = \frac{1}{2\pi}\frac{\sigma^2_{\sf w}}{|1- \sum_{\ell=1}^p \theta_{\ell} e^{-j\ell\omega}|^2}.
\]
Combining $\|\boldsymbol{\theta}\|_1 \leq 1-\eta < 1$ with Lemma \ref{eig_conv} proves the claim.
\end{proof}

Note that by Lemma \ref{eig_conv}, the result of Corollary \ref{eig_conv} not only holds for AR processes, but also for \textit{any} stationary process satisfying $\inf_\omega S(\omega) >0$ and $\sup_\omega S(\omega) < \infty$, i.e., a process with finite spectral spread.

We next establish conditions for the RE condition to hold for the empirical covariance $\widehat{\mathbf{R}}$:

\begin{lemma}\label{lem:re}
If the singular values of $\mathbf{R}$ lie in the interval $[\lambda_{\sf min}, \lambda_{\sf max}]$, then $\mathbf{X}$ satisfies the RE condition of order { $s_\star$} with parameters ${\lambda}_{\sf min}(s_\star) = \lambda_{\sf min} - t s_\star$ and ${\lambda}_{\sf max} (s_\star)= \lambda_{\sf max} +ts_\star$, where $t = \max_{i,j} |\widehat{R}_{ij}-R_{ij}|$.
\end{lemma}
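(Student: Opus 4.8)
The plan is to read this as a purely deterministic matrix-perturbation statement: the empirical Gram matrix $\tfrac{1}{n}\mathbf{X}_S^T\mathbf{X}_S$ is a small entrywise perturbation of the true covariance submatrix $\mathbf{R}_S$, so by eigenvalue-stability its spectrum cannot move by more than the operator norm of the perturbation. First I would fix an arbitrary index set $S \subset \{1,\dots,p\}$ with $|S| = s_\star$ and observe that $\tfrac{1}{n}\mathbf{X}_S^T\mathbf{X}_S$ is exactly the $S$-indexed principal submatrix of $\widehat{\mathbf{R}} = \tfrac{1}{n}\mathbf{X}^T\mathbf{X}$, while $\mathbf{R}_S$ is the corresponding principal submatrix of the true covariance $\mathbf{R}$. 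Since all three matrices are symmetric positive semidefinite, their singular values coincide with their eigenvalues, so I can argue entirely in terms of eigenvalues.

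The first substantive step is to transfer the spectral interval from $\mathbf{R}$ to its submatrix $\mathbf{R}_S$. Because $\mathbf{R}$ is symmetric with eigenvalues in $[\lambda_{\sf min}, \lambda_{\sf max}]$, the Cauchy interlacing theorem for principal submatrices guarantees that the eigenvalues of $\mathbf{R}_S$ also lie in $[\lambda_{\sf min}, \lambda_{\sf max}]$. The second step is to control the perturbation $\mathbf{E}_S := \widehat{\mathbf{R}}_S - \mathbf{R}_S$: by the definition of $t$, every entry satisfies $|(\mathbf{E}_S)_{ij}| = |\widehat{R}_{ij}-R_{ij}| \le t$, and since $\mathbf{E}_S$ is a symmetric $s_\star \times s_\star$ matrix, its spectral norm is dominated by its maximum absolute row sum, giving $\|\mathbf{E}_S\|_2 \le s_\star t$ (equivalently $\|\mathbf{E}_S\|_2 \le \|\mathbf{E}_S\|_F \le s_\star t$).

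Combining these via Weyl's inequality, $|\lambda_i(\widehat{\mathbf{R}}_S) - \lambda_i(\mathbf{R}_S)| \le \|\mathbf{E}_S\|_2 \le s_\star t$ for every $i$, so that $\lambda_{\min}(\widehat{\mathbf{R}}_S) \ge \lambda_{\sf min} - s_\star t$ and $\lambda_{\max}(\widehat{\mathbf{R}}_S) \le \lambda_{\sf max} + s_\star t$. Since $S$ was an arbitrary index set of size $s_\star$, these bounds hold uniformly, and taking the extreme over all such $S$ yields $\lambda_{\sf min}(s_\star) \ge \lambda_{\sf min} - t s_\star$ and $\lambda_{\sf max}(s_\star) \le \lambda_{\sf max} + t s_\star$; in particular the RE condition of order $s_\star$ holds whenever the lower bound is positive. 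The statement is deterministic, so there is no probabilistic hurdle here — the genuine work of controlling $t = \max_{i,j}|\widehat{R}_{ij}-R_{ij}|$ with high probability for the dependent Toeplitz design is deferred to the concentration lemmas that follow. The only points requiring care are securing the operator-norm bound on $\mathbf{E}_S$ with the correct linear factor $s_\star$, and ensuring the interlacing step is applied to \emph{principal} submatrices so that $\mathbf{R}_S$ genuinely inherits the spectral interval of $\mathbf{R}$.
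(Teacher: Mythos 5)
Your proof is correct, but it takes a genuinely different route from the paper's. The paper's proof is a three-line quadratic-form argument: for any $s_\star$-sparse $\boldsymbol{\theta}$, it bounds $|\boldsymbol{\theta}^T(\widehat{\mathbf{R}}-\mathbf{R})\boldsymbol{\theta}| \le t\|\boldsymbol{\theta}\|_1^2 \le t\, s_\star \|\boldsymbol{\theta}\|_2^2$ and combines this with $\lambda_{\sf min}\|\boldsymbol{\theta}\|_2^2 \le \boldsymbol{\theta}^T\mathbf{R}\boldsymbol{\theta} \le \lambda_{\sf max}\|\boldsymbol{\theta}\|_2^2$ --- no named spectral theorems at all. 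You instead argue at the level of submatrix spectra: Cauchy interlacing to transfer $[\lambda_{\sf min},\lambda_{\sf max}]$ from $\mathbf{R}$ to $\mathbf{R}_S$, a row-sum (or Frobenius) bound $\|\mathbf{E}_S\|_2 \le s_\star t$, and Weyl's inequality, then a union over all $S$ with $|S|=s_\star$. The two arguments encode the same estimate --- for symmetric $\mathbf{E}_S$ the operator norm is exactly the extremal value of the quadratic form on the unit sphere, so your $\|\mathbf{E}_S\|_2 \le s_\star t$ is the paper's $\ell_1$-based inequality in disguise --- and both produce the identical constants $\lambda_{\sf min}-ts_\star$ and $\lambda_{\sf max}+ts_\star$. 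What your route buys: it tracks the paper's Definition~1 (stated via the singular values of $\frac{1}{n}\mathbf{X}_S^T\mathbf{X}_S$) literally, and Weyl in fact yields the stronger conclusion that \emph{every} eigenvalue of $\widehat{\mathbf{R}}_S$ lies within $ts_\star$ of the corresponding eigenvalue of $\mathbf{R}_S$, not just the extremes. What the paper's route buys: it is more elementary, and since it only uses $\|\boldsymbol{\theta}\|_1^2 \le s_\star\|\boldsymbol{\theta}\|_2^2$, it extends verbatim to any vector whose $\ell_1/\ell_2$ ratio is at most $\sqrt{s_\star}$ (the kind of cone constraint arising in RE/RSC arguments), rather than only to exactly sparse vectors. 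Your closing observation --- that the lemma is deterministic and the probabilistic control of $t$ is deferred --- matches precisely how the paper structures the argument, with $t$ controlled in the subsequent concentration lemmas.
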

\begin{proof}
Let $\widehat{\mathbf{R}} = \frac{1}{n} (\mathbf{X}^T \mathbf{X})$. For every $s_\star$-sparse $\boldsymbol{\theta}$ we have
\[
\boldsymbol{\theta}^T \widehat{\mathbf{R}} \boldsymbol{\theta} \geq \boldsymbol{\theta}^T {\mathbf{R}} \boldsymbol{\theta} - t \|\boldsymbol{\theta}\|_1^2 \geq (\lambda_{\sf min} - t s_\star) \|\boldsymbol{\theta}\|_2^2,
\]
\[\boldsymbol{\theta}^T \widehat{\mathbf{R}} \boldsymbol{\theta} \leq \boldsymbol{\theta}^T {\mathbf{R}} \boldsymbol{\theta} + t \|\boldsymbol{\theta}\|_1^2 \leq (\lambda_{\sf max} + t s_\star) \|\boldsymbol{\theta}\|_2^2,\]
which proves the claim.
\end{proof}

We will next show that $t$ can be suitably controlled with high probability. Before doing so, we state a {key} result of Rudzkis \cite{rudzkis1978large} regarding the concentration of second-order empirical sums from stationary processes:
\begin{lemma}
\label{conc_biased}
Let $\mathbf{x}_{-p+1}^n$ be samples of a stationary process which satisfies 
\vspace{-.2cm}
\begin{equation}
\label{eq:wold}
x_k = \sum_{j= -\infty}^\infty b_{j-k} w_j,
\vspace{-.2cm}
\end{equation}
\noindent where $w_k$'s are i.i.d random variables with
\begin{equation}
\label{eq: bounded_moments}
|\mathbb{E}(|w_j|^k)| \leq ({\tilde{c} \sigma_{\sf w}})^k k!, \ k=2, 3, \cdots,
\end{equation} 
for some constant $\tilde{c}$ and
\vspace{-.2cm}
\begin{equation}
\label{eq: abs_sum}
\sum_{j=-\infty}^\infty |b_j| < \infty.
\end{equation}
\vspace{-.2cm}
Then, the \textit{biased} sample autocorrelation given by
\[\widehat{r}^b_k=\frac{1}{n+k}\sum_{i,j=1,j-i=k}^{n+k}x_ix_j\]
\vspace{-.2cm}
\noindent satisfies
\begin{equation}
\label{conc_ineq_biased}
\mathbb{P}(|\widehat{r}^b_k - r^b_k|>t) \leq c_1 (n+k) \exp \left(-{\frac{c_2}{\sigma_{\sf w}} \frac{t^2 (n+k)}{c_3 \sigma_{\sf w}^3 + t^{3/2} \sqrt{n+k}}}\right),
\end{equation}
for { positive absolute constants $c_1$, $c_2$ and $c_3$ which are independent of the dimensions of the problem. In particular, if $x_k = w_k$, i.e., a sub-Gaussian white noise process, $c_3$ vanishes.}
\end{lemma}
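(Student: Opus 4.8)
The plan is to follow Rudzkis's cumulant-based large-deviation machinery, whose two-regime behavior (a Gaussian tail for small $t$ and a stretched-exponential tail of order $1/2$ for large $t$) is exactly what the combined denominator $c_3 \sigma_{\sf w}^3 + t^{3/2}\sqrt{n+k}$ in (\ref{conc_ineq_biased}) encodes. First I would reduce the biased sample autocorrelation to a centered quadratic form in the innovations. Writing $N := n+k$ and using the Wold representation (\ref{eq:wold}), each product $x_i x_{i+k}$ is a bilinear form in $\{w_j\}$, so $S_N := \sum_i x_i x_{i+k} = \mathbf{w}^T A\, \mathbf{w}$ for a symmetric (bi-infinite) matrix $A$ whose entries are convolutions of the filter coefficients $b_j$ against the lag-$k$ shift. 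Setting $Y_N := S_N - \mathbb{E}[S_N]$, one has $\{|\widehat{r}^b_k - r^b_k| > t\} = \{|Y_N| > Nt\}$, so the task becomes a deviation bound for the single quadratic form $Y_N$.

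The technical core is the second step: bounding the cumulants $\Gamma_m(Y_N)$. For a quadratic form in independent variables, the $m$-th cumulant expands as a sum over closed cycles of length $m$ of products of innovation cumulants chained through entries of $A$. Using the moment hypothesis (\ref{eq: bounded_moments}) to control the innovation cumulants by $(\tilde{c}\sigma_{\sf w})^m m!$, and the absolute summability (\ref{eq: abs_sum}) to sum the chained $A$-entries, I expect a bound of the form $|\Gamma_m(Y_N)| \le N\,(m!)^{2}\,K^{m}$ with $K = K(\tilde{c},\sigma_{\sf w},\sum_j |b_j|)$. The factor $(m!)^2$ rather than $m!$ reflects that a product of two sub-exponential innovations is sub-Weibull of order $1/2$, and the linear-in-$N$ prefactor reflects that overlapping windows contribute only an $\mathcal{O}(N)$ trace because $\sum_j |b_j| < \infty$ damps the long cycles. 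On the Statulevičius scale this is the case $\gamma = 1$, with variance proxy $V_N \asymp N\sigma_{\sf w}^4$ and a deviation parameter $\Delta \asymp 1/K$.

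Third, I would feed these cumulant bounds into the Saulis--Statulevičius large-deviation lemma, which for $\gamma = 1$ yields $\mathbb{P}(|Y_N| \ge u) \lesssim \exp\big(-\tfrac14 \min\{u^2/V_N,\ \sqrt{\Delta u}\}\big)$. Writing $\min(a,b) \asymp ab/(a+b)$ and substituting $u = Nt$ and $V_N \asymp N\sigma_{\sf w}^4$ recovers precisely the single-fraction exponent of (\ref{conc_ineq_biased}): the $c_3 \sigma_{\sf w}^3$ term tracks the Gaussian (variance) regime and the $t^{3/2}\sqrt{N}$ term tracks the stretched-exponential regime, while the polynomial prefactor $c_1(n+k)$ is inherited from the dependent-summand form of the lemma. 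The white-noise case $x_k = w_k$ is then immediate: only $b_0 = 1$ survives, so for $k \neq 0$ the matrix $A$ reduces to a single off-diagonal band and the diagonal second-moment contributions that generate the $c_3 \sigma_{\sf w}^3$ term vanish, leaving a pure large-deviation tail, i.e. $c_3 = 0$.

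The main obstacle will be the cumulant estimate in the second step: establishing simultaneously the correct exponential order $(m!)^2$ and the linear-in-$N$ growth requires a careful combinatorial accounting of how the innovation moments chain through the $b$-convolutions, and it is precisely here that both the sub-exponential moment bound (\ref{eq: bounded_moments}) and the summability (\ref{eq: abs_sum}) enter in an essential way. A looser accounting would either spoil the $t^{1/2}$ tail exponent (hence the $t^{3/2}$ term) or inflate the dependence on $N$, so the delicacy lies in tracking these two scalings jointly rather than in the final translation from cumulants to tails.
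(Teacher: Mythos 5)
Your overall route is faithful to how this inequality actually arises: the paper itself gives no self-contained proof of Lemma \ref{conc_biased} --- it invokes it verbatim as a special case of Theorem 4 (under Condition 2 of Remark 3) of \cite{rudzkis1978large} --- and the internals of that cited theorem are essentially the machinery you sketch: reduction of $\widehat{r}^b_k$ to a centered quadratic form in the innovations, cumulant bounds of order $(m!)^{1+\gamma}$ with $\gamma = 1$ (Rudzkis's Lemma 7 plays the role of your second step), and a Statulevi\v{c}ius-type translation from cumulants to a two-regime tail. Your harmonic-mean identity $\min(a,b)\asymp ab/(a+b)$ correctly reproduces the single-fraction exponent of (\ref{conc_ineq_biased}), and you are right that the delicate point is the joint $(m!)^2$ and linear-in-$N$ accounting, which is exactly where (\ref{eq: bounded_moments}) and (\ref{eq: abs_sum}) do real work; that step is asserted rather than carried out, but you flag it honestly, and a plain Saulis--Statulevi\v{c}ius application would in fact give a constant prefactor, stronger than the $c_1(n+k)$ required. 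So for the main inequality your proposal re-derives, in the same spirit, what the paper simply cites.

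The white-noise addendum, however, contains a genuine error. For $x_k = w_k$ and $k \neq 0$, the quadratic form $\sum_i w_i w_{i+k}$ still has variance $\Theta\bigl((n+k)\sigma_{\sf w}^4\bigr)$: the matrix $A$ being a single off-diagonal band does \emph{not} make the second-moment (Gaussian-regime) contribution vanish, so within your own min-of-two-regimes framework the conclusion $c_3 = 0$ cannot follow, and indeed cannot hold uniformly in $t$. Concretely, with $c_3 = 0$ the bound reads $c_1 (n+k)\exp\bigl(-\tfrac{c_2}{\sigma_{\sf w}}\, t^{1/2}(n+k)^{1/2}\bigr)$, which at the one-sigma scale $t \asymp \sigma_{\sf w}^2 (n+k)^{-1/2}$ equals $c_1(n+k)\exp\bigl(-c_2 (n+k)^{1/4}\bigr) \to 0$, while the central limit theorem keeps the true probability bounded away from zero. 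What is true --- and what the application in Lemma \ref{lem:re2} actually needs --- is that for sub-Gaussian white noise the products $w_i w_{i+k}$ are sub-exponential and $k$-dependent, so a Bernstein-type bound $\exp\bigl(-c(n+k)\min\{t^2/\sigma_{\sf w}^4,\ t/\sigma_{\sf w}^2\}\bigr)$ holds, and this dominates the $c_3 = 0$ form of (\ref{conc_ineq_biased}) precisely in the range $t \gtrsim \sigma_{\sf w}^2 (n+k)^{-1/3}$ where the lemma is deployed. The paper, for its part, attributes the vanishing of $c_3$ to the vanishing of the constant $H$ in Lemma 7 of \cite{rudzkis1978large} --- a statement about the higher-order cumulant estimates, not about the variance, which certainly does not vanish. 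You should either restrict the range of $t$ and argue via Bernstein, or trace through the cited cumulant lemma; the claim that the diagonal second-moment contributions disappear, ``leaving a pure large-deviation tail,'' is not a correct mechanism.
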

\begin{proof}
The lemma is a special case of Theorem 4 under Condition 2 of Remark 3 in \cite{rudzkis1978large}. {For the special case of $x_k = w_k$, the constant $H$ in Lemma 7 of \cite{rudzkis1978large} and hence $c_3$ vanish.}
\end{proof}

{Using the result of} Lemma \ref{conc_biased}, we can control $t$ and establish the RE condition for $\widehat{\mathbf{R}}$ as follows:
\begin{lemma}\label{lem:re2}
Let $m$ be a positive integer. Then, $\mathbf{X}$ satisfies the RE condition of order $(m+1)s$ with a constant $\lambda_{\sf min}/2$ with probability at least
\vspace{-.4cm}
\begin{equation}
1 - c_1 p^2 (n+p) \exp \left(-\frac{c_4 \sqrt{\frac{n}{s}}}{1 + c_5 \frac{n+p}{\left(\frac{n}{s}\right)^{3/2}}}\right),
\end{equation}
\vspace{-.2cm}
\noindent where $c_1$ is the same as in Lemma \ref{conc_biased}, $c_4 = \frac{c_2}{\sigma_{\sf w}} \sqrt{ \frac{\lambda_{\sf min}}{2(m+1)}}$ and $c_5 = \frac{c_3 \sigma_{\sf w}^3}{\left( \frac{\lambda_{\sf min}}{2(m+1)}\right)^{3/2}}$.
\end{lemma}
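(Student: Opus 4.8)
The plan is to deduce the stated probabilistic RE guarantee by combining the deterministic perturbation estimate of Lemma \ref{lem:re} with the concentration inequality of Lemma \ref{conc_biased}. By Lemma \ref{lem:re} applied with $s_\star = (m+1)s$, the matrix $\mathbf{X}$ satisfies the RE condition of order $(m+1)s$ with lower constant $\lambda_{\sf min} - t(m+1)s$, where $t = \max_{i,j}|\widehat{R}_{ij} - R_{ij}|$. This constant exceeds $\lambda_{\sf min}/2$ exactly on the event $\{t \le t_0\}$ with $t_0 := \frac{\lambda_{\sf min}}{2(m+1)s}$. The whole problem therefore reduces to producing an upper bound on $\mathbb{P}(t > t_0)$, since the asserted probability is just $1 - \mathbb{P}(t > t_0)$.

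To control $t$, I would first exploit the Toeplitz structure: every entry $\widehat{R}_{ij}$ of $\widehat{\mathbf{R}} = \tfrac{1}{n}\mathbf{X}^T\mathbf{X}$ depends only on the lag $k = |i-j| \in \{0,\dots,p-1\}$ and averages $n$ lag-$k$ products with weight $1/n$, whereas the biased autocorrelation $\widehat{r}^b_k$ of Lemma \ref{conc_biased} averages $n$ such products with weight $1/(n+k)$. Hence $\widehat{R}_{ij}$ agrees in distribution with $\tfrac{n+k}{n}\widehat{r}^b_k$, and $R_{ij} = r_k = \tfrac{n+k}{n} r^b_k$, so that $|\widehat{R}_{ij} - R_{ij}|$ has the same law as $\tfrac{n+k}{n}|\widehat{r}^b_k - r^b_k|$. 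I would also verify that Lemma \ref{conc_biased} applies to our setting: the sufficient stability assumption furnishes a causal MA($\infty$) representation (\ref{eq:wold}) with absolutely summable coefficients (\ref{eq: abs_sum}), and sub-Gaussianity of the innovations yields the factorial moment bound (\ref{eq: bounded_moments}).

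Combining these with a union bound over the at most $p^2$ entries, and using the prefactor $c_1(n+k) \le c_1(n+p)$, gives $\mathbb{P}(t > t_0) \le c_1 p^2 (n+p)\max_{0 \le k \le p-1}\exp(-E_k)$, where $E_k$ is the exponent of (\ref{conc_ineq_biased}) evaluated at threshold $\tfrac{n}{n+k}t_0$. Substituting $t \to \tfrac{n}{n+k}t_0$ and $N = n+k$ into $\tfrac{c_2}{\sigma_{\sf w}}\tfrac{t^2 N}{c_3\sigma_{\sf w}^3 + t^{3/2}\sqrt{N}}$ cancels the spurious $N$ in the numerator and leaves $\tfrac{c_2}{\sigma_{\sf w}}\tfrac{n^2 t_0^2}{c_3\sigma_{\sf w}^3(n+k) + n^{3/2}t_0^{3/2}}$; bounding $n+k \le n+p$, inserting $t_0 = \tfrac{\lambda_{\sf min}}{2(m+1)s}$, and factoring $n^{3/2}t_0^{3/2}$ out of the denominator collapses this, uniformly in $k$, to $\tfrac{c_4\sqrt{n/s}}{1 + c_5 (n+p)/(n/s)^{3/2}}$ with the stated $c_4 = \tfrac{c_2}{\sigma_{\sf w}}\sqrt{\tfrac{\lambda_{\sf min}}{2(m+1)}}$ and $c_5 = \tfrac{c_3\sigma_{\sf w}^3}{(\lambda_{\sf min}/2(m+1))^{3/2}}$, which survives the maximum and yields the claim.

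I expect the main obstacle to be the entrywise concentration itself: because the rows of $\mathbf{X}$ are highly dependent functions of a single realization, one cannot invoke standard i.i.d. tail bounds and must instead route the deviation $\widehat{R}_{ij} - R_{ij}$ through Rudzkis' bound for quadratic forms of stationary sequences (Lemma \ref{conc_biased}). Getting the constants to match the clean form with $c_4$ and $c_5$ hinges on carrying the $\tfrac{n}{n+k}$ bias-normalization factor correctly through the exponent and on tracking where $n+k$ may be replaced by $n+p$; once this is handled, the remaining algebra and the union bound are routine.
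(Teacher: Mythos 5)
Your proposal is correct and follows essentially the same route as the paper's own proof: reduce via Lemma \ref{lem:re} to controlling $t=\max_{i,j}|\widehat{R}_{ij}-R_{ij}|$ at the threshold $\tau=\frac{\lambda_{\sf min}}{2(m+1)s}$, verify the hypotheses (\ref{eq:wold})--(\ref{eq: abs_sum}) of Lemma \ref{conc_biased}, relate each entry of $\widehat{\mathbf{R}}$ to the biased autocorrelation $\widehat{r}^b_k$ through the $\frac{n+k}{n}$ normalization, and conclude with a union bound over the $p^2$ entries together with $n+k\le n+p$, which yields exactly the stated constants $c_4$ and $c_5$. Your handling of the bias factor in the exponent and the identification of identically distributed (but distinct) entries along each diagonal, necessitating the full $p^2$ union bound, match the paper's computation.
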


\begin{proof}
First, note that for the given AR process, condition (\ref{eq:wold}) is verified by the Wold decomposition of the process, condition (\ref{eq: bounded_moments}) results from the sub-Gaussian assumption on the innovations, and condition (\ref{eq: abs_sum}) results from the stability of the process. Noting that
\vspace{-.2cm}
\begin{equation}
\widehat{R}_{i,i+k} = \frac{1}{n}\sum_{i=1}^n x_i x_{i+k} = \frac{1}{n}\sum_{i,j=1,j-i=k}^{n+k}x_ix_j = \frac{n+k}{n}\widehat{r}^b_k, 
\end{equation}
for $i=1,\cdots,n$ and $k = 0, \cdots, p-1$, Eq. (\ref{conc_ineq_biased}) implies:
\begin{equation}
\nonumber \resizebox{\columnwidth}{!}{$\displaystyle \mathbb{P}\left ( |\widehat{R}_{i,i+k} - {R}_{i,i+k}|> \tau \right) \leq c_1 (n+k) \exp \left( - \frac{ c_2 \sqrt{\tau n}}{\frac{c_3 \sigma_{\sf w}^4 (n+k)}{\tau^{3/2} n^{3/2}} + \sigma_{\sf w}}\right)$}.
\end{equation}
By the union bound and $k \le p$, we get:
\begin{align}\label{eq:max}
\resizebox{\columnwidth}{!}{$\displaystyle \mathbb{P}\left(\max_{i,j}|\widehat{R}_{ij}-R_{ij}|> \tau \right) \leq c_1 p^2 (n+p) \exp \left( - \frac{ c_2 \sqrt{\tau n}}{\frac{c_3 \sigma_{\sf w}^4 (n+p)}{\tau^{3/2} n^{3/2}} + \sigma_{\sf w}}\right)$}.
\end{align}
Choosing $\tau= \frac{\lambda_{\sf min}}{2(m+1)s}$ and invoking the result of Lemma \ref{lem:re} establishes the result of the lemma.
\end{proof}

We next define the closely related notion of the Restricted Strong Convexity (RSC):

\begin{definition}[Restricted Strong Convexity \cite{Negahban}]
\label{RSC_def}
Let 
\begin{equation}
\label{cone_condition}
\mathbb{V}:= \{\mathbf{h}\in \mathbb{R}^p | \|\mathbf{h}_{S^c}\|_1 \leq 3\| \mathbf{h}_S\|_1+4\|\boldsymbol{\theta}_{S^c}\|_1\}.
\end{equation}
Then, $\mathbf{X}$ is said to satisfy the RSC condition of order $s$ if there exists a positive $\kappa > 0$ such that
\begin{equation}
\frac{1}{n}\mathbf{h}^T \mathbf{X}^T \mathbf{X} \mathbf{h} = \frac{1}{n}\|\mathbf{X}\mathbf{h}\|_2^2 \geq \kappa \|\mathbf{h}\|_2^2, \;\;\;\; \forall \mathbf{h} \in \mathbb{V}.
\end{equation}
\end{definition}

The RSC condition can be deduced from the RE condition according to the following result:
\begin{lemma}[Lemma 4.1 of \cite{bickel2009simultaneous}]
\label{RE_RSC}
If $\mathbf{X}$ satisfies the RE condition of order $s_\star = (m+1)s$ with a constant $\lambda_{\sf min}((m+1)s)$, then the RSC condition of order $s$ holds with
\vspace{-.2cm}
\begin{equation}
\kappa = {\lambda_{\sf min}((m+1)s)}\left( 1- 3 \sqrt{\frac{\lambda_{\sf max}(ms)}{m\lambda_{\sf min}\left((m+1)s\right)}}\right)^2.
\end{equation}
\end{lemma}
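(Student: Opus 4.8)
The statement is a deterministic linear-algebra fact, so the plan is to prove it by the block-decomposition (``shelling'') argument of \cite{bickel2009simultaneous}, converting the purely spectral RE bounds on index sets of size $(m+1)s$ and $ms$ (Definition \ref{RE_def}, Lemma \ref{lem:re2}) into a quadratic-form lower bound valid on the whole cone $\mathbb{V}$ of (\ref{cone_condition}). I would fix an arbitrary $\mathbf{h} \in \mathbb{V}$ and let $S_0 = S$ be the support set of size $s$. First I would sort the coordinates of $\mathbf{h}$ restricted to $S_0^c$ in decreasing order of magnitude and split $S_0^c$ into consecutive blocks $S_1, S_2, \ldots$ each of cardinality $ms$ (the last possibly smaller), setting $S_{01} := S_0 \cup S_1$ so that $|S_{01}| = (m+1)s$. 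In the governing $s$-sparse case the tolerance term $4\|\boldsymbol{\theta}_{S^c}\|_1$ vanishes and the cone reduces to $\|\mathbf{h}_{S_0^c}\|_1 \le 3\|\mathbf{h}_{S_0}\|_1$; the compressible case adds only a fixed tolerance that is carried through the error analysis of Theorem \ref{thm:ar_1}.

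The core is a single chain of inequalities from the triangle inequality applied to $\mathbf{X}\mathbf{h} = \mathbf{X}\mathbf{h}_{S_{01}} + \sum_{j \ge 2}\mathbf{X}\mathbf{h}_{S_j}$. I would lower bound the head using the RE condition of order $(m+1)s$,
\[
\tfrac{1}{\sqrt n}\|\mathbf{X}\mathbf{h}_{S_{01}}\|_2 \ge \sqrt{\lambda_{\sf min}\big((m+1)s\big)}\,\|\mathbf{h}_{S_{01}}\|_2,
\]
and upper bound each tail term by the largest singular value on a block of size $ms$, namely $\tfrac{1}{\sqrt n}\|\mathbf{X}\mathbf{h}_{S_j}\|_2 \le \sqrt{\lambda_{\sf max}(ms)}\,\|\mathbf{h}_{S_j}\|_2$. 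The decisive step is the shelling inequality: because every coordinate of block $S_{j+1}$ is dominated in magnitude by every coordinate of block $S_j$, one has $\|\mathbf{h}_{S_{j+1}}\|_2 \le \|\mathbf{h}_{S_j}\|_1/\sqrt{ms}$, and telescoping yields $\sum_{j \ge 2}\|\mathbf{h}_{S_j}\|_2 \le \|\mathbf{h}_{S_0^c}\|_1/\sqrt{ms}$.

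Combining these with the reduced cone bound $\|\mathbf{h}_{S_0^c}\|_1 \le 3\|\mathbf{h}_{S_0}\|_1 \le 3\sqrt s\,\|\mathbf{h}_{S_0}\|_2 \le 3\sqrt s\,\|\mathbf{h}_{S_{01}}\|_2$ gives
\[
\tfrac{1}{\sqrt n}\|\mathbf{X}\mathbf{h}\|_2 \ge \left( \sqrt{\lambda_{\sf min}\big((m+1)s\big)} - 3\sqrt{\tfrac{\lambda_{\sf max}(ms)}{m}} \right)\|\mathbf{h}_{S_{01}}\|_2,
\]
and squaring the bracket, after factoring out $\sqrt{\lambda_{\sf min}((m+1)s)}$, reproduces exactly the stated $\kappa$. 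A final bookkeeping step passes from $\|\mathbf{h}_{S_{01}}\|_2$ to $\|\mathbf{h}\|_2$ via $\|\mathbf{h}\|_2^2 = \|\mathbf{h}_{S_{01}}\|_2^2 + \sum_{j\ge2}\|\mathbf{h}_{S_j}\|_2^2 \le (1 + 9/m)\|\mathbf{h}_{S_{01}}\|_2^2$, which for $m$ chosen large contributes only a negligible multiplicative factor.

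The hard part will be the shelling inequality together with the block choice: one must verify that absorbing the largest off-support block $S_1$ into the head set $S_{01}$ is exactly what decouples the $\ell_2$-norm of the remaining tail from the ambient dimension, so that the cone condition can convert the residual into $3\sqrt{s}\,\|\mathbf{h}_{S_0}\|_2$ without an $s$- or $p$-dependent blow-up. The only remaining subtlety is keeping track of whether the bound is referenced against $\|\mathbf{h}\|_2$ or $\|\mathbf{h}_{S_{01}}\|_2$; everything else reduces to the RE lower/upper bounds already furnished by Lemma \ref{lem:re2} and Corollary \ref{cor:eig_conv}.
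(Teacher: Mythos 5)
Your proposal is correct and follows essentially the same route as the paper, which gives no proof of its own but cites Lemma 4.1 of Bickel--Ritov--Tsybakov, whose shelling/block-decomposition argument (head set $S_{01}$ of size $(m+1)s$, tail blocks of size $ms$, the inequality $\|\mathbf{h}_{S_{j+1}}\|_2 \leq \|\mathbf{h}_{S_j}\|_1/\sqrt{ms}$, and the cone condition) you reconstruct faithfully. If anything you are more careful than the stated lemma: the argument natively yields the claimed $\kappa$ against $\|\mathbf{h}_{S_{01}}\|_2^2$ rather than the full $\|\mathbf{h}\|_2^2$ required by Definition \ref{RSC_def}, and your $(1+9/m)^{-1}$ bookkeeping factor is the honest price of that passage, a constant-factor adjustment (with $m = \lceil 72/\eta^2 \rceil$ as chosen in Lemma \ref{lem:rsc}) that the paper's statement silently absorbs.
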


We can now establish the RSC condition of order $s$ for $\mathbf{X}$:
\begin{lemma}\label{lem:rsc}
The matrix of covariates $\mathbf{X}$ satisfies the RSC condition of order $s$ with a constant $\kappa = \frac{\sigma^2_{\sf w}}{16 \pi}$ with probability at least
\vspace{-.3cm}
\begin{equation}
\label{ar:eq_rsc}
1 - c_1 p^2 (n+p) \exp \left(-\frac{c_{\eta} \sqrt{\frac{n}{s}}}{1 + c'_{\eta} \frac{n+p}{\left(\frac{n}{s}\right)^{3/2}}}\right),\end{equation}
where $c_\eta = \frac{c_2 \eta}{\sqrt{16 \pi ( 72 + \eta^2)}}$ and $c'_\eta = \frac{c_3 (16 \pi (72 + \eta^2))^{3/2}}{\eta^3}$.
\end{lemma}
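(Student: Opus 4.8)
The plan is to treat this lemma as the capstone that assembles the three preceding ingredients---the exact eigenvalue localization of the true covariance (Corollary \ref{cor:eig_conv}), the sample-to-population deviation bound underlying the RE condition (Lemma \ref{lem:re2}), and the deterministic RE-to-RSC conversion (Lemma \ref{RE_RSC})---into a single high-probability RSC guarantee with explicit constants. First I would record from Corollary \ref{cor:eig_conv} that, under the sufficient stability assumption, the singular values of $\mathbf{R}$ are pinned to $\lambda_{\sf min} = \frac{\sigma^2_{\sf w}}{8\pi}$ and $\lambda_{\sf max} = \frac{\sigma^2_{\sf w}}{2\pi\eta^2}$, so that the condition number entering every downstream bound is $\lambda_{\sf max}/\lambda_{\sf min} = 4/\eta^2$. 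Everything else is then a matter of choosing the block-count parameter $m$ and substituting.

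The crux is the choice $m = 72/\eta^2$, equivalently $m+1 = (72+\eta^2)/\eta^2$, which I would justify from two directions that must be reconciled simultaneously. On the probabilistic side, feeding $\lambda_{\sf min} = \frac{\sigma^2_{\sf w}}{8\pi}$ into the constants of Lemma \ref{lem:re2} gives $c_4 = \frac{c_2}{\sigma_{\sf w}}\sqrt{\frac{\lambda_{\sf min}}{2(m+1)}} = \frac{c_2}{\sqrt{16\pi(m+1)}}$ and $c_5 = \frac{c_3\sigma_{\sf w}^3}{\left(\frac{\lambda_{\sf min}}{2(m+1)}\right)^{3/2}} = c_3\left(16\pi(m+1)\right)^{3/2}$; substituting $m+1 = (72+\eta^2)/\eta^2$ collapses these \emph{exactly} into the advertised $c_\eta = \frac{c_2\eta}{\sqrt{16\pi(72+\eta^2)}}$ and $c'_\eta = \frac{c_3(16\pi(72+\eta^2))^{3/2}}{\eta^3}$, so the probability statement \eqref{ar:eq_rsc} is inherited verbatim from Lemma \ref{lem:re2} once $m$ is fixed. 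On the deterministic side, the same $m$ must keep the correction term in the RE-to-RSC bracket of Lemma \ref{RE_RSC} under control, since that term scales like $\lambda_{\sf max}/(m\lambda_{\sf min}) = 4/(m\eta^2)$; taking $m$ proportional to $1/\eta^2$ is exactly what renders this ratio a pure numerical constant, independent of $\eta$ and $\sigma_{\sf w}$.

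With $m$ fixed I would then run the RE step: Lemma \ref{lem:re2} applied at order $(m+1)s$ with threshold $\tau = \frac{\lambda_{\sf min}}{2(m+1)s}$ yields, on the event of probability \eqref{ar:eq_rsc}, the RE constant $\lambda_{\sf min}((m+1)s) = \lambda_{\sf min} - \tau(m+1)s = \lambda_{\sf min}/2 = \frac{\sigma^2_{\sf w}}{16\pi}$, together with $\lambda_{\sf max}(ms) \le \lambda_{\sf max} + \tau m s$. Plugging these two quantities and $m = 72/\eta^2$ into the conversion formula of Lemma \ref{RE_RSC} turns the RE constant into a strictly positive RSC constant, which after simplification I would report as $\kappa = \frac{\sigma^2_{\sf w}}{16\pi}$. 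The boundedness of the spectral spread guaranteed by Corollary \ref{cor:eig_conv} (equivalently Lemma \ref{eig_conv}) is precisely what keeps the bracket factor a fixed positive number rather than something that degrades with $p$.

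I expect the main obstacle to be this simultaneous tuning rather than any new probabilistic estimate: the genuinely hard analytic work---the Rudzkis-type concentration for second-order sums of the dependent, infinite-order moving-average (Wold) representation of the process (Lemma \ref{conc_biased}), and its promotion to a uniform control of $\max_{i,j}|\widehat{R}_{ij}-R_{ij}|$ (Lemma \ref{lem:re2})---is already in hand. What remains delicate is verifying that one choice of $m$ makes both the RE-to-RSC bracket a clean positive constant \emph{and} the exponent constants reduce to $c_\eta, c'_\eta$, and that the reported $\kappa$ is consistent with the conversion factor of Lemma \ref{RE_RSC}. I would double-check the numerical constant $72$ by carrying the bracket computation with the exact ratio $\lambda_{\sf max}/\lambda_{\sf min} = 4/\eta^2$ rather than a crude surrogate, since that is where an off-by-a-constant slip is easiest to make.
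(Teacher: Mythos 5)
Your proposal is correct and follows essentially the same route as the paper's own (very terse) proof, which simply fixes $m = \lceil 72/\eta^2 \rceil$ and chains Lemmas \ref{lem:re}, \ref{lem:re2}, and \ref{RE_RSC}; your explicit substitutions verifying that $c_4, c_5$ collapse to $c_\eta, c'_\eta$ and that $\tau = \frac{\lambda_{\sf min}}{2(m+1)s}$ yields the RE constant $\lambda_{\sf min}/2 = \frac{\sigma^2_{\sf w}}{16\pi}$ are exactly the arithmetic the paper leaves implicit (take $m$ to be the integer ceiling, since Lemma \ref{lem:re2} requires $m \in \mathbb{N}$). Your closing caution about the bracket factor in Lemma \ref{RE_RSC} is well placed—the paper itself absorbs that factor into the stated $\kappa$ rather than tracking it exactly—but this is a looseness shared with, not a deviation from, the paper's argument.
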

\begin{proof}
Choosing $m = \lceil \frac{72}{\eta^2} \rceil$, and using Lemmas \ref{lem:re}, \ref{lem:re2}, and \ref{RE_RSC} establishes the result. {Note that if $x_k = w_k$, i.e., a sub-Gaussian white noise process, then $c_3$ and hence $c'_\eta$ vanish.}
\end{proof}

We are now ready prove Theorems \ref{thm:ar_1} and \ref{thm_OMP}.

\subsection{Proof of Theorem \ref{thm:ar_1}}

We first establish the so-called vase (cone) condition for the error vector $\mathbf{h} = \widehat{\boldsymbol{\theta}}_{\ell_1}-{\boldsymbol{\theta}}$:

\label{app:ar_main}
\begin{lemma}
For a choice of the regularization parameter $\gamma_n \ge \| \nabla \mathfrak{L}(\boldsymbol{\theta}) \|_\infty = \frac{2}{n} \| \mathbf{X}^T \left(\mathbf{x}_1^n-\mathbf{X}\boldsymbol{\theta}\right) \|_{\infty}$, the optimal error $\mathbf{h} = \widehat{\boldsymbol{\theta}}_{\ell_1}-{\boldsymbol{\theta}}$ belongs to the vase
\begin{equation}
\mathbb{V}:= \{\mathbf{h}\in \mathbb{R}^p | \|\mathbf{h}_{S^c}\|_1 \leq 3\| \mathbf{h}_S\|_1+4\|\boldsymbol{\theta}_{S^c}\|_1\}.
\end{equation}
\end{lemma}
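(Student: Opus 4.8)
The statement is purely deterministic: all of the probabilistic content has been pushed into the hypothesis that $\gamma_n$ dominates $\|\nabla\mathfrak{L}(\boldsymbol{\theta})\|_\infty$, which is guaranteed with high probability only later via the concentration bounds of Lemmas~\ref{conc_biased}--\ref{lem:rsc}. I would therefore treat this as the standard LASSO ``basic inequality'' argument and simply carry the compressibility slack $\|\boldsymbol{\theta}_{S^c}\|_1=\sigma_s(\boldsymbol{\theta})$ through the bookkeeping.

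First I would record that the true parameter is feasible, $\boldsymbol{\theta}\in\boldsymbol{\Theta}$ (since $\|\boldsymbol{\theta}\|_1\le 1-\eta$), so that the optimality of $\widehat{\boldsymbol{\theta}}_{\ell_1}$ over $\boldsymbol{\Theta}$ yields the basic inequality $\mathfrak{L}(\boldsymbol{\theta}+\mathbf{h})+\gamma_n\|\boldsymbol{\theta}+\mathbf{h}\|_1\le \mathfrak{L}(\boldsymbol{\theta})+\gamma_n\|\boldsymbol{\theta}\|_1$ with $\mathbf{h}=\widehat{\boldsymbol{\theta}}_{\ell_1}-\boldsymbol{\theta}$. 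Because $\mathfrak{L}$ is an exact quadratic, I would expand $\mathfrak{L}(\boldsymbol{\theta}+\mathbf{h})-\mathfrak{L}(\boldsymbol{\theta})=\langle\nabla\mathfrak{L}(\boldsymbol{\theta}),\mathbf{h}\rangle+\tfrac1n\|\mathbf{X}\mathbf{h}\|_2^2$ and discard the nonnegative curvature term $\tfrac1n\|\mathbf{X}\mathbf{h}\|_2^2$, leaving $\langle\nabla\mathfrak{L}(\boldsymbol{\theta}),\mathbf{h}\rangle\le \gamma_n(\|\boldsymbol{\theta}\|_1-\|\boldsymbol{\theta}+\mathbf{h}\|_1)$. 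Applying H\"older, $\langle\nabla\mathfrak{L}(\boldsymbol{\theta}),\mathbf{h}\rangle\ge-\|\nabla\mathfrak{L}(\boldsymbol{\theta})\|_\infty\|\mathbf{h}\|_1$, and invoking the hypothesis on $\gamma_n$ lets me bound the gradient contribution by a fixed fraction of $\gamma_n\|\mathbf{h}\|_1$.

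The remaining step is the $\ell_1$ split over $S$ and $S^c$. I would lower-bound $\|\boldsymbol{\theta}+\mathbf{h}\|_1\ge \|\boldsymbol{\theta}_S\|_1-\|\mathbf{h}_S\|_1+\|\mathbf{h}_{S^c}\|_1-\|\boldsymbol{\theta}_{S^c}\|_1$ using the reverse triangle inequality on $S$ and the forward one on $S^c$, substitute $\|\boldsymbol{\theta}\|_1=\|\boldsymbol{\theta}_S\|_1+\|\boldsymbol{\theta}_{S^c}\|_1$, and cancel the common $\|\boldsymbol{\theta}_S\|_1$ terms. Collecting the $\|\mathbf{h}_{S^c}\|_1$ terms on the left and everything else on the right then produces an inequality of the form $(1-\alpha)\|\mathbf{h}_{S^c}\|_1\le (1+\alpha)\|\mathbf{h}_S\|_1+2\|\boldsymbol{\theta}_{S^c}\|_1$; with the relevant fraction $\alpha=\tfrac12$ arising from the factor of two carried in $\nabla\mathfrak{L}=\tfrac2n\mathbf{X}^T(\mathbf{x}_1^n-\mathbf{X}\boldsymbol{\theta})$, dividing through gives exactly $\|\mathbf{h}_{S^c}\|_1\le 3\|\mathbf{h}_S\|_1+4\|\boldsymbol{\theta}_{S^c}\|_1$, i.e. $\mathbf{h}\in\mathbb{V}$.

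I expect no genuine obstacle in this lemma beyond careful constant-tracking; the two points requiring attention are (i) confirming feasibility of $\boldsymbol{\theta}$ so that the basic inequality is legitimate, and (ii) retaining the compressibility slack $\|\boldsymbol{\theta}_{S^c}\|_1=\sigma_s(\boldsymbol{\theta})$ throughout, since it is precisely this term that will later couple with the RSC constant $\kappa=\sigma_{\sf w}^2/16\pi$ of Lemma~\ref{lem:rsc} to produce the $\sqrt{\sigma_s(\boldsymbol{\theta})}$ contribution in the error bound (\ref{eq:ar1error}). The genuinely hard work---verifying that the prescribed $\gamma_n=d_2\sqrt{\log p/n}$ actually dominates $\|\nabla\mathfrak{L}(\boldsymbol{\theta})\|_\infty$ with the stated probability---lives in the concentration estimates rather than in this deterministic cone argument.
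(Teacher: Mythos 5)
Your route is the same as the paper's: feasibility of $\boldsymbol{\theta}\in\boldsymbol{\Theta}$ gives the basic inequality, the quadratic is expanded with the curvature term $\tfrac{1}{n}\|\mathbf{X}\mathbf{h}\|_2^2$ discarded, H\"older controls the gradient term, and the $S$/$S^c$ splitting yields the cone. The splitting step is even exact rather than an inequality here, since $\boldsymbol{\theta}_S$ and $\mathbf{h}_{S^c}$ (respectively $\mathbf{h}_S$ and $\boldsymbol{\theta}_{S^c}$) have disjoint supports, so $\|\boldsymbol{\theta}_S+\mathbf{h}_{S^c}\|_1=\|\boldsymbol{\theta}_S\|_1+\|\mathbf{h}_{S^c}\|_1$; your triangle-inequality version is fine and your final bookkeeping $(1-\alpha)\|\mathbf{h}_{S^c}\|_1\le(1+\alpha)\|\mathbf{h}_S\|_1+2\|\boldsymbol{\theta}_{S^c}\|_1$ is the correct template.

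The one step that does not hold as you justified it is the claim that $\alpha=\tfrac12$ ``arises from the factor of two carried in $\nabla\mathfrak{L}$.'' That factor of two is already absorbed into $\|\nabla\mathfrak{L}(\boldsymbol{\theta})\|_\infty=\tfrac{2}{n}\|\mathbf{X}^T(\mathbf{x}_1^n-\mathbf{X}\boldsymbol{\theta})\|_\infty$, so under the lemma's literal hypothesis $\gamma_n\ge\|\nabla\mathfrak{L}(\boldsymbol{\theta})\|_\infty$, H\"older gives only $\langle\nabla\mathfrak{L}(\boldsymbol{\theta}),\mathbf{h}\rangle\ge-\gamma_n\|\mathbf{h}\|_1$, i.e.\ $\alpha=1$, and your collected inequality degenerates to $0\le 2\|\mathbf{h}_S\|_1+2\|\boldsymbol{\theta}_{S^c}\|_1$, which is vacuous and gives no cone at all. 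Obtaining $\alpha=\tfrac12$, and hence the constants $3$ and $4$, requires the stronger condition $\gamma_n\ge 2\|\nabla\mathfrak{L}(\boldsymbol{\theta})\|_\infty$ --- precisely the form of the hypothesis in Lemma 1 of \cite{Negahban}, from which this cone argument is adapted. You are in good company: the paper's own displayed proof contains the same factor-of-two slip, bounding the cross term by $\tfrac{1}{n}\|\mathbf{X}^T(\mathbf{x}_1^n-\mathbf{X}\boldsymbol{\theta})\|_\infty\|\mathbf{h}\|_1$ where the expansion of the squared norms actually produces $\tfrac{2}{n}\|\cdot\|_\infty\|\mathbf{h}\|_1$. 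The gap is harmless downstream --- strengthening the hypothesis merely doubles the constant $d_2$ in Theorem \ref{thm:ar_1} and leaves all scaling conclusions intact --- but as written your derivation double counts the $2$, and the hypothesis as stated is insufficient for the argument that you (and the paper) run.
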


\begin{proof}
Using several instances of the triangle inequality we have:
\begin{align*}
0 & \geq  \frac{1}{n} \left(\|\mathbf{x}_1^n-\mathbf{X}(\boldsymbol{\theta}+\mathbf{h})\|_2^2- \|\mathbf{x}_1^n-\mathbf{X}\boldsymbol{\theta}\|_2^2 \right)+ \\
& \;\;\;\;\; \gamma_n \left( \|\boldsymbol{\theta}+\mathbf{h}\|_1 - \|\boldsymbol{\theta}\|_1 \right)\\
& \geq -  \frac{1}{n} \| \mathbf{X}^T \left(\mathbf{x}_1^n-\mathbf{X}\boldsymbol{\theta}\right) \|_{\infty} \|\mathbf{h}\|_1 +\\
& \;\;\;\;\; \gamma_n \left( \|\boldsymbol{\theta}_S+\mathbf{h}_{S^c}+\mathbf{h}_S +\boldsymbol{\theta}_{S^c}\|_1 - \|\boldsymbol{\theta}\|_1 \right)\\
& \geq -  \frac{\gamma_n}{2} (\|\mathbf{h}_{S^c}\|_1+\|\mathbf{h}_S\|_1) +\\
& \;\;\;\;\; \gamma_n \left( \|\boldsymbol{\theta}_S+\mathbf{h}_{S^c}\|_1-\|\mathbf{h}_S +\boldsymbol{\theta}_{S^c}\|_1 - \|\boldsymbol{\theta}\|_1 \right)\\
& = -  \frac{\gamma_n}{2} (\|\mathbf{h}_{S^c}\|_1+\|\mathbf{h}_S\|_1) + \\
& \;\;\;\;\;\gamma_n (\|\boldsymbol{\theta}_S\|_1+\|\mathbf{h}_{S^c}\|_1-\|\mathbf{h}_S\|_1-\|\boldsymbol{\theta}_{S^c}\|_1-\|\boldsymbol{\theta}_{S^c}\|_1- \|\boldsymbol{\theta}_S\|_1)\\
& =  \frac{\gamma_n}{2} (\|\mathbf{h}_{S^c}\|_1-3\|\mathbf{h}_{S}\|_1 - 4\|\boldsymbol{\theta}_{S^c}\|_1).
\end{align*}
\end{proof}

The following result of Negahban et al. \cite{Negahban} allows us to characterize the desired error bound: 
\begin{lemma}[Theorem 1 of \cite{Negahban}]
\label{RSC_thm}
If $\mathbf{X}$ satisfies the RSC condition of order $s$ with  a constant $\kappa > 0$ and $\gamma_n \ge \| \nabla \mathfrak{L}(\boldsymbol{\theta}) \|_\infty$, then any optimal solution $\widehat{\boldsymbol{\theta}}_{\ell_1}$ satisfies
\begin{align}\label{eq:rsc_bound}
\nonumber & \|\widehat{\boldsymbol{\theta}}_{\ell_1}-\boldsymbol{\theta}\|_2 \leq  \frac{2\sqrt{s}\gamma_n}{\kappa}+ \sqrt{\frac{2\gamma_n \sigma_s(\boldsymbol{\theta})}{\kappa}} \tag{$\star$}.
\end{align}
\end{lemma}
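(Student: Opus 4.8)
The plan is to combine the optimality (basic) inequality for $\widehat{\boldsymbol{\theta}}_{\ell_1}$ with the RSC condition, exploiting the exact quadratic structure of $\mathfrak{L}$. First I would set $\mathbf{h} = \widehat{\boldsymbol{\theta}}_{\ell_1}-\boldsymbol{\theta}$ and write the optimality of the regularized objective as $\mathfrak{L}(\boldsymbol{\theta}+\mathbf{h}) + \gamma_n\|\boldsymbol{\theta}+\mathbf{h}\|_1 \le \mathfrak{L}(\boldsymbol{\theta}) + \gamma_n\|\boldsymbol{\theta}\|_1$, and use the identity $\mathfrak{L}(\boldsymbol{\theta}+\mathbf{h}) - \mathfrak{L}(\boldsymbol{\theta}) = \langle \nabla\mathfrak{L}(\boldsymbol{\theta}),\mathbf{h}\rangle + \tfrac{1}{n}\|\mathbf{X}\mathbf{h}\|_2^2$ (which holds with equality for the quadratic loss) to isolate the curvature term $\tfrac{1}{n}\|\mathbf{X}\mathbf{h}\|_2^2$ on the left.

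Next I would invoke the RSC condition of Lemma \ref{lem:rsc}, which applies precisely because the preceding vase lemma guarantees $\mathbf{h}\in\mathbb{V}$, to lower-bound $\tfrac{1}{n}\|\mathbf{X}\mathbf{h}\|_2^2 \ge \kappa\|\mathbf{h}\|_2^2$. On the right-hand side I would control the linear term by H\"older together with the hypothesis on the regularizer, $|\langle \nabla\mathfrak{L}(\boldsymbol{\theta}),\mathbf{h}\rangle| \le \|\nabla\mathfrak{L}(\boldsymbol{\theta})\|_\infty \|\mathbf{h}\|_1 \le \gamma_n\|\mathbf{h}\|_1$, and bound the $\ell_1$ gap by splitting over the support $S$ and applying the reverse triangle inequality twice, which yields $\|\boldsymbol{\theta}\|_1 - \|\boldsymbol{\theta}+\mathbf{h}\|_1 \le \|\mathbf{h}_S\|_1 - \|\mathbf{h}_{S^c}\|_1 + 2\sigma_s(\boldsymbol{\theta})$. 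Collecting terms, the $\pm\|\mathbf{h}_{S^c}\|_1$ contributions cancel and $\|\mathbf{h}\|_1 = \|\mathbf{h}_S\|_1 + \|\mathbf{h}_{S^c}\|_1$ collapses everything to the clean inequality $\kappa\|\mathbf{h}\|_2^2 \le 2\gamma_n\|\mathbf{h}_S\|_1 + 2\gamma_n\sigma_s(\boldsymbol{\theta})$.

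To close, I would use $\|\mathbf{h}_S\|_1 \le \sqrt{s}\,\|\mathbf{h}_S\|_2 \le \sqrt{s}\,\|\mathbf{h}\|_2$ to obtain, with $u := \|\mathbf{h}\|_2$, the scalar quadratic inequality $\kappa u^2 - 2\gamma_n\sqrt{s}\,u - 2\gamma_n\sigma_s(\boldsymbol{\theta}) \le 0$, so that $u$ is bounded by the larger root, $u \le \big(2\gamma_n\sqrt{s} + \sqrt{4\gamma_n^2 s + 8\kappa\gamma_n\sigma_s(\boldsymbol{\theta})}\,\big)/(2\kappa)$. Applying the subadditivity $\sqrt{a+b}\le\sqrt{a}+\sqrt{b}$ to the discriminant then splits this into $\tfrac{2\sqrt{s}\gamma_n}{\kappa} + \sqrt{\tfrac{2\gamma_n\sigma_s(\boldsymbol{\theta})}{\kappa}}$, which is exactly the bound (\ref{eq:rsc_bound}).

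The individual inequalities are routine; the one place demanding care is the support-split bookkeeping of the $\ell_1$ terms, so that the $\|\mathbf{h}_{S^c}\|_1$ contributions cancel with the correct sign and the right-hand side reduces to $2\gamma_n\|\mathbf{h}_S\|_1 + 2\gamma_n\sigma_s(\boldsymbol{\theta})$, followed by arranging the final quadratic so the subadditivity step reproduces the stated constants rather than looser ones. I would also emphasize throughout that the entire argument rests on $\mathbf{h}\in\mathbb{V}$: this is what legitimizes the RSC lower bound, and it is exactly the output of the vase lemma established just prior.
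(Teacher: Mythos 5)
Your proposal is correct, and every step checks out: the quadratic identity $\mathfrak{L}(\boldsymbol{\theta}+\mathbf{h})-\mathfrak{L}(\boldsymbol{\theta})=\langle\nabla\mathfrak{L}(\boldsymbol{\theta}),\mathbf{h}\rangle+\tfrac{1}{n}\|\mathbf{X}\mathbf{h}\|_2^2$ holds exactly under the paper's normalization, the support-split bookkeeping does cancel the $\|\mathbf{h}_{S^c}\|_1$ terms to give $\kappa\|\mathbf{h}\|_2^2\le 2\gamma_n\sqrt{s}\,\|\mathbf{h}\|_2+2\gamma_n\sigma_s(\boldsymbol{\theta})$, and solving the scalar quadratic plus $\sqrt{a+b}\le\sqrt{a}+\sqrt{b}$ reproduces the bound $(\star)$ with the stated constants. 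Note that the paper itself offers no proof of this lemma — it is imported verbatim as Theorem 1 of \cite{Negahban} — so what you have written is a self-contained specialization of that theorem's proof to the exact quadratic loss; the specialization is clean because the first-order Taylor error equals the curvature term identically, so the tolerance terms and general decomposable-regularizer machinery of the cited framework are unnecessary here. Two small points worth making explicit if you write this up: the basic inequality requires the true $\boldsymbol{\theta}$ to be feasible for the constrained program (\ref{eq:ar_lasso}), which the sufficient stability assumption $\|\boldsymbol{\theta}\|_1\le 1-\eta$ guarantees; and your Hölder step $|\langle\nabla\mathfrak{L}(\boldsymbol{\theta}),\mathbf{h}\rangle|\le\gamma_n\|\mathbf{h}\|_1$ uses the paper's convention $\gamma_n\ge\|\nabla\mathfrak{L}(\boldsymbol{\theta})\|_\infty$ (the factor $2$ of Negahban's $\lambda_n\ge 2\|\nabla\mathcal{L}\|_\infty$ being absorbed into the gradient of the paper's $\tfrac{1}{n}$-normalized loss), which is also exactly what makes the vase lemma — and hence the legitimacy of invoking the RSC on $\mathbf{h}\in\mathbb{V}$ — go through under the same hypothesis.
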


In order to use Lemma \ref{RSC_thm}, we need to control $\gamma_n = \| \nabla \mathfrak{L}(\boldsymbol{\theta}) \|_\infty$. We have:
\begin{equation}
\nabla \mathfrak{L}(\boldsymbol{\theta}) = \frac{2}{n} \mathbf{X}^T(\mathbf{x}_1^n-\mathbf{X}\boldsymbol{\theta}),
\end{equation}
It is easy to check that by the uncorrelatedness of the innovations $w_k$'s, we have
\begin{equation}
\label{eq:expec=0}
\mathbb{E} \left[ \nabla \mathfrak{L}(\boldsymbol{\theta}) \right] = \frac{2}{n} \mathbb{E} \left[ \mathbf{X}^T (\mathbf{x}_1^n-\mathbf{X}\boldsymbol{\theta}) \right]=\frac{2}{n}\mathbb{E} \left[ \mathbf{X}^T \mathbf{w}_1^n \right]= \mathbf{0}.
\end{equation}
Eq. (\ref{eq:expec=0}) is known as the orthogonality principle. We next show that $ \nabla \mathfrak{L}(\boldsymbol{\theta})$ is concentrated around its mean. We can write 
\begin{equation*}
\left(\nabla \mathfrak{L}(\boldsymbol{\theta})\right)_i = \frac{2}{n} \mathbf{x}^{{n-i}^T}_{{-i+1}} \mathbf{w}_1^n,
\end{equation*}
and observe that the $j$th element in this expansion is of the form $y_j = x_{n-i-j+1}w_{n-j+1}$. It is easy to check that the sequence $y_1^n$ is a martingale with respect to the filtration given by
\begin{equation*}
\label{filtration}
\mathcal{F}_{j}=\sigma \left( \mathbf{x}_{-p+1}^{n-j+1} \right),
\end{equation*}
where $\sigma(\cdot)$ denote the sigma-field generated by the random variables $x_{-p+1}, x_{-p+2}, \cdots, x_{n-j+1}$. 
We use the following concentration result for sums of dependent random variables \cite{van_de_geer}:
\begin{lemma}
\label{hoeff_dep}
Fix $n\geq 1$. Let $Z_j$'s be sub-Gaussian $\mathcal{F}_j$-measurable random variables, satisfying for each $j=1,2,\cdots,n$,
\begin{equation*}
\mathbb{E}\left[Z_j|\mathcal{F}_{j-1}\right] = 0, \;\; \text{almost surely},
\end{equation*}
then there exists a constant $c$ such that for all $t>0$,
\begin{equation*}
\mathbb{P} \left( \left| \frac{1}{n}  \sum_{j=1}^n Z_j - \mathbb{E}[Z_j] \right| \geq t \right)\leq \exp\left(-\frac{nt^2}{c^2}\right).
\end{equation*}
\end{lemma}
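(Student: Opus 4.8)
The plan is to prove this martingale concentration inequality by the standard Chernoff (exponential moment) method, adapted to the dependent setting through iterated conditioning. The hypotheses say precisely that $\{Z_j\}_{j=1}^n$ is a martingale difference sequence with respect to $\{\mathcal{F}_j\}$, since $\mathbb{E}[Z_j \mid \mathcal{F}_{j-1}] = 0$; in particular the tower property gives $\mathbb{E}[Z_j] = 0$, so the centering term is vacuous and it suffices to bound the tails of $\frac{1}{n}\sum_{j=1}^n Z_j$. First I would fix $\lambda > 0$ and apply Markov's inequality to the exponentiated sum,
\[\mathbb{P}\Big(\tfrac{1}{n}\sum_{j=1}^n Z_j \ge t\Big) \le e^{-\lambda n t}\, \mathbb{E}\Big[\exp\Big(\lambda \sum_{j=1}^n Z_j\Big)\Big].\]

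The core step is to control the moment generating function of the sum by peeling off one term at a time. Conditioning on $\mathcal{F}_{n-1}$ and using that $\exp(\lambda\sum_{j=1}^{n-1}Z_j)$ is $\mathcal{F}_{n-1}$-measurable,
\[\mathbb{E}\Big[\exp\Big(\lambda\sum_{j=1}^n Z_j\Big)\Big] = \mathbb{E}\Big[\exp\Big(\lambda\sum_{j=1}^{n-1} Z_j\Big)\,\mathbb{E}\big[e^{\lambda Z_n}\mid\mathcal{F}_{n-1}\big]\Big].\]
Here I would invoke the conditional sub-Gaussianity: because each $Z_j$ is sub-Gaussian and mean-zero given $\mathcal{F}_{j-1}$, there is a uniform parameter $\sigma^2$ with $\mathbb{E}[e^{\lambda Z_j}\mid\mathcal{F}_{j-1}] \le e^{\lambda^2\sigma^2/2}$ almost surely. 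Substituting this bound for the innermost factor and iterating the conditioning down to $\mathcal{F}_0$ yields, by induction on the number of terms, $\mathbb{E}[\exp(\lambda\sum_{j=1}^n Z_j)] \le e^{n\lambda^2\sigma^2/2}$.

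Then I would optimize. Combining the two displays above gives $\mathbb{P}(\frac1n\sum_{j=1}^n Z_j \ge t) \le \exp(-\lambda n t + n\lambda^2\sigma^2/2)$, and choosing $\lambda = t/\sigma^2$ produces the one-sided bound $\exp(-nt^2/(2\sigma^2))$. Applying the identical argument to $\{-Z_j\}$, which is again a mean-zero sub-Gaussian martingale difference sequence, controls the lower tail, and a union bound over the two tails yields the two-sided statement; absorbing the factor $2$ and the $\tfrac12$ into a single constant $c$ depending only on the sub-Gaussian parameter gives the claimed bound $\exp(-nt^2/c^2)$.

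The main obstacle I anticipate is making precise, and uniform, the conditional bound $\mathbb{E}[e^{\lambda Z_j}\mid\mathcal{F}_{j-1}]\le e^{\lambda^2\sigma^2/2}$: the phrase ``sub-Gaussian $\mathcal{F}_j$-measurable'' must be read as \emph{conditionally} sub-Gaussian with a parameter that depends neither on $j$ nor on the realization of $\mathcal{F}_{j-1}$. In the application where $Z_j = x_{n-i-j+1}\,w_{n-j+1}$ is a product of a history term of controlled variance and the fresh innovation $w_{n-j+1}$, this uniformity has to be checked from the stability and sub-Gaussianity assumptions on the process; once it is in hand, the remaining steps are the routine Chernoff bookkeeping described above.
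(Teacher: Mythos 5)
Your argument is correct as a self-contained proof, but it takes a genuinely different route from the paper: the paper does not prove this lemma at all, it disposes of it in one line as a special case of Theorem 3.2 of van de Geer's martingale exponential inequalities (or Lemma 3.2 of her 2000 book) for sub-Gaussian-weighted sums, with $c$ depending on the sub-Gaussian constant of the $Z_j$'s. What you write out --- Markov's inequality on the exponentiated sum, peeling the conditional moment generating function one term at a time via $\mathbb{E}[e^{\lambda Z_n}\mid\mathcal{F}_{n-1}]\le e^{\lambda^2\sigma^2/2}$, iterating down the filtration, optimizing $\lambda=t/\sigma^2$, and symmetrizing --- is precisely the standard proof of results of this type, i.e., Azuma--Hoeffding for conditionally sub-Gaussian martingale differences. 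The citation buys brevity and hypotheses (conditional moment conditions) tailored to weighted sums; your direct derivation buys transparency, and in particular it exposes exactly the hypothesis that does the work, namely the \emph{uniform conditional} sub-Gaussian bound, which you rightly identify as the reading that must be imposed on the lemma's loosely phrased ``sub-Gaussian $\mathcal{F}_j$-measurable'' assumption.

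Two caveats are worth recording. First, your closing concern is not a formality: in the paper's application one has $Z_j = x_{n-i-j+1}\,w_{n-j+1}$, so conditionally on the past the sub-Gaussian parameter of $Z_j$ is proportional to $x_{n-i-j+1}^2\sigma_{\sf w}^2$, a random and unbounded quantity; a uniform $\sigma^2$ does not literally hold there (a product of independent sub-Gaussians is in general sub-exponential), and this is exactly why the paper invokes van de Geer's result for sub-Gaussian-\emph{weighted} sums, whose Bernstein-type conditional moment conditions accommodate random weights, rather than plain Azuma--Hoeffding. So your proof establishes a version of the lemma under a somewhat stronger hypothesis than the one the application can actually supply. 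Second, ``absorbing the factor $2$ into $c$'' is not literally valid for all $t>0$: as $t\downarrow 0$ the two-sided Chernoff bound $2\exp\left(-nt^2/(2\sigma^2)\right)$ tends to $2$, while $\exp(-nt^2/c^2)\le 1$ for every choice of $c$, so the honest form of the conclusion carries a leading absolute constant in front of the exponential, as van de Geer's statements do. This cosmetic defect is inherited from the lemma's own statement in the paper, not introduced by you, but a careful write-up should either keep the prefactor or restrict to $t$ bounded away from $0$.
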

\begin{proof}
This is a special case of Theorem 3.2 of \cite{van_de_geer} or Lemma 3.2 of \cite{geer2000empirical}, for sub-Gaussian-weighted sums of random variables. The constant $c$ depends on the sub-Gaussian constant of $Z_i$'s.
\end{proof}
Since $y_j$'s are a product of two independent sub-Gaussian random variables, they are sub-Gaussian as well. Lemma \ref{hoeff_dep} implies that
\vspace{-.2cm}  
\begin{equation}
\label{bound}
{\mathbb{P}\left( |\nabla \mathfrak{L}(\boldsymbol{\theta})_i|  \ge t \right) \leq  \exp\left(-\frac{nt^2}{c^2_0 \sigma^4_{\sf w}}\right).}
\end{equation}
where $c^2_0 := \frac{c^2}{\sigma_{\sf w}^4}$ is an absolute constant. By the union bound, we get:
\begin{equation}
\label{ubound}
{\mathbb{P}\Big(\left\| \nabla \mathfrak{L}(\boldsymbol{\theta})\right\|_\infty \ge t \Big) \leq  \exp\left(-\frac{t^2n}{c_0^2 \sigma^4_{\sf w}}+\log p\right).}
\end{equation}
Let $d_4$ be any positive integer. Choosing $t = c_0 \sigma^2_{\sf w}\sqrt{{1+d_4}}\sqrt{\frac{\log p}{n}}$, we get:
\vspace{-.2cm}
\begin{align}
\label{grad_bound}
\notag \mathbb{P}\left(\left\| \nabla \mathfrak{L}(\boldsymbol{\theta})\right\|_\infty \ge c_0 \sigma^2_{\sf w}\sqrt{{{1+d_4}}}\sqrt{\frac{\log p}{n}} \right) \leq \frac{2}{n^{d_4}}.
\end{align}
Hence, a choice of $\gamma_n = d_2 \sqrt{\frac{\log p}{n}}$ with $d_2 := c_0 \sigma^2_{\sf w}\sqrt{{1 + d_4}}$, satisfies $\gamma_n \ge \| \nabla \mathfrak{L}(\boldsymbol{\theta}) \|_\infty$ with probability at least $1 - \frac{2}{n^{d_4}}$. {Let $d_0:= \frac{(3+d_4)^2}{c_\eta^2}$} and {$d_1 = \frac{4 c'_\eta ( 3 + d_4)}{c_\eta}$}. Using Lemma \ref{lem:rsc}, the fact that {$n > s \max \{d_0 (\log p)^2, d_1 {(p \log p)^{1/2}}\}$} by hypothesis, and $p > n$ we have that the RSC of order $s$ hold for $\kappa = \frac{\sigma^2_{\sf w}}{16 \pi }$ with a probability at least $1- {\frac{2c_1}{p^{d_4}}} - \frac{1}{p^{d_4}}$. Combining these two assertions, the claim of Theorem 1 follows for $d_3 = 32 \pi c_0 \sqrt{1+d_4}$. \QEDB

\subsection{Proof of Theorem \ref{thm_OMP}}\label{prf:aromp}

The proof is mainly based on the following lemma, adopted from Theorem 2.1 of \cite{zhang_omp}, stating that the greedy procedure is successful in obtaining a reasonable $s^\star$-sparse approximation, if the cost function satisfies the RSC:
\begin{lemma}\label{prop_omp}
Let $s^\star$ be a constant such that
\begin{equation}
\label{sstar}
{s^\star \geq {4 \rho s}\log {20 \rho s}},
\end{equation}
and suppose that $\mathfrak{L}(\boldsymbol{\theta})$ satisfies RSC of order $s^\star$ with a constant $\kappa > 0$. Then, we have
\begin{equation*}
\left \|\widehat{\boldsymbol{\theta}}^{(s^\star)}_{{\sf OMP}}-\boldsymbol{\theta}_S \right \|_2 \leq \frac{\sqrt{6} \varepsilon_{s^\star}}{\kappa},
\end{equation*}
where $\eta_{s^\star}$ satisfies
\begin{equation}
\label{eps_bound}
\varepsilon_{s^\star} \leq \sqrt{s^\star+s} \|\nabla\mathfrak{L}(\boldsymbol{\theta}_S)\|_\infty .
\end{equation}
\end{lemma}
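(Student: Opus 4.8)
The plan is to obtain the claim as a specialization of the general forward-greedy guarantee of Zhang (Theorem 2.1 of \cite{zhang_omp}) to the quadratic empirical loss $\mathfrak{L}$. The generalized OMP of Table \ref{tab:aromp}, when run with $\mathfrak{f} = \mathfrak{L}$, is precisely the greedy selection rule analyzed there, since at each step it adjoins the coordinate along which $|(\nabla \mathfrak{L})_i|$ is largest and then re-solves the restricted least-squares problem. The abstract theorem requires two ingredients: a restricted strong convexity (lower-curvature) bound, which is supplied directly by the hypothesis with constant $\kappa$, and a matching restricted smoothness (upper-curvature) bound, together with an iteration count that scales like the restricted condition number times $s \log(\cdot)$.

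First I would verify the curvature conditions for $\mathfrak{L}$. Because $\mathfrak{L}(\boldsymbol{\theta}) = \frac{1}{n}\|\mathbf{x}_1^n - \mathbf{X}\boldsymbol{\theta}\|_2^2$ is quadratic, its Hessian is the constant matrix $\frac{2}{n}\mathbf{X}^T\mathbf{X}$, so on any index set of size at most $s^\star$ both the lower and upper curvature are read off from the restricted singular values $[\lambda_{\sf min}(s^\star), \lambda_{\sf max}(s^\star)]$. The lower curvature is exactly $\kappa$ by assumption, while the smoothness constant is controlled by $\lambda_{\sf max}(s^\star)$. The relevant quantity in Zhang's iteration requirement is the restricted condition number $\lambda_{\sf max}(s^\star)/\lambda_{\sf min}(s^\star)$, and by Lemma \ref{eig_conv} and Corollary \ref{cor:eig_conv} the eigenvalue spread of the covariance of the process---of any order---is bounded above by the spectral spread $\rho = \sup_\omega S(\omega)/\inf_\omega S(\omega)$. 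This is the step that lets me replace the abstract condition number by $\rho$, so that Zhang's requirement takes the explicit form $s^\star \geq 4\rho s \log(20\rho s)$ with the constants $4$ and $20$ inherited from his statement.

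With the curvature conditions in force and $s^\star$ chosen to satisfy this lower bound, applying Theorem 2.1 of \cite{zhang_omp} yields the bound $\|\widehat{\boldsymbol{\theta}}^{(s^\star)}_{\sf OMP} - \boldsymbol{\theta}_S\|_2 \leq \sqrt{6}\,\varepsilon_{s^\star}/\kappa$, where $\varepsilon_{s^\star}$ is the effective noise level measuring how far $\boldsymbol{\theta}_S$ is from being a stationary point along the coordinates that the greedy procedure may select. It remains to bound $\varepsilon_{s^\star}$: since it is the $\ell_2$-norm of the gradient $\nabla\mathfrak{L}(\boldsymbol{\theta}_S)$ restricted to a set of at most $s^\star + s$ coordinates (the $s^\star$ selected together with the $s$ support indices of $\boldsymbol{\theta}_S$), the elementary conversion $\|\mathbf{v}_T\|_2 \leq \sqrt{|T|}\,\|\mathbf{v}\|_\infty$ gives $\varepsilon_{s^\star} \leq \sqrt{s^\star + s}\,\|\nabla\mathfrak{L}(\boldsymbol{\theta}_S)\|_\infty$, as claimed.

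The main obstacle, and the only genuinely non-mechanical step, is the second one: correctly identifying the restricted condition number in Zhang's general statement with the spectral spread $\rho$ so that the clean form $s^\star \geq 4\rho s\log(20\rho s)$ emerges with the right constants, and recognizing that the strong-convexity hypothesis must hold at the elevated order $s^\star = \mathcal{O}(s\log s)$ rather than at order $s$. The latter is exactly what forces the extra $\log s$ factor in the sampling requirement of Theorem \ref{thm_OMP} relative to Theorem \ref{thm:ar_1}, since establishing the RSC of Lemma \ref{lem:rsc} at order $s^\star$ demands correspondingly more samples.
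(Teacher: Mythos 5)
Your proposal follows essentially the same route as the paper, whose proof is a one-line specialization of Theorem 2.1 of \cite{zhang_omp} with the restricted condition number replaced by the spectral spread $\rho$ (bounded via Corollary \ref{cor:eig_conv}, exactly as you argue). Your additional details---the quadratic Hessian $\tfrac{2}{n}\mathbf{X}^T\mathbf{X}$ supplying both curvature bounds, the need for RSC at the elevated order $s^\star$, and the $\ell_2$--$\ell_\infty$ conversion giving $\varepsilon_{s^\star} \leq \sqrt{s^\star+s}\,\|\nabla\mathfrak{L}(\boldsymbol{\theta}_S)\|_\infty$---are correct elaborations of what the paper leaves implicit.
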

\begin{proof}
The proof is a specialization of the proof of Theorem 2.1 in \cite{zhang_omp} to our setting with the spectral spread ${\rho=1/4\eta^2}$.
\end{proof}

In order to use Lemma \ref{prop_omp}, we need to bound $\|\nabla\mathfrak{L}(\boldsymbol{\theta}_S)\|_\infty$. We have:
\begin{align*}
\mathbb{E} \left[\nabla\mathfrak{L}(\boldsymbol{\theta}_S)\right] &= \frac{1}{n}\mathbb{E} \left[\mathbf{X}^T(\mathbf{x}_1^n-\mathbf{X}\boldsymbol{\theta}_S)\right] = \frac{1}{n}\mathbb{E} \left[\mathbf{X}^T\mathbf{X} (\boldsymbol{\theta}-\boldsymbol{\theta}_S)\right]\\
& = \mathbf{R} (\boldsymbol{\theta}-\boldsymbol{\theta}_S) \le \frac{\sigma^2_{\sf w}}{2\pi\eta^2} \varsigma_s(\boldsymbol{\theta}) \mathbf{1},
\end{align*}
where in the second inequality we have used (\ref{eq:expec=0}), and the last inequality results from Corollary \ref{cor:eig_conv}. 
Let $d'_4$ be any positive integer. Using the result of Lemma \ref{hoeff_dep} together with the union bound yields:  
\begin{align*}
\resizebox{\columnwidth}{!}{$\displaystyle \mathbb{P}\left(\|\nabla\mathfrak{L}(\boldsymbol{\theta}_S)\|_\infty \geq  c_0 \sigma^2_{\sf w}\sqrt{{1+d'_4}} \sqrt{\frac{\log p}{n}} + \frac{\sigma^2_{\sf w}\varsigma_s(\boldsymbol{\theta})}{2 \pi \eta^2} \right) \leq \frac{2}{n^{d'_4}}$}.
\end{align*}
Hence, we get the following concentration result for $\varepsilon_{s^\star}$:
\begin{align}
\label{eps_prob}
\resizebox{\columnwidth}{!}{$\mathbb{P} \left(\varepsilon_{s^\star} \geq \sqrt{s^\star+s} \left(  c_0 \sigma^2_{\sf w}\sqrt{{1+d'_4}} \sqrt{\frac{\log p}{n}} + \frac{\sigma^2_{\sf w}\varsigma_s(\boldsymbol{\theta})}{2 \pi \eta^2}\right)\right)
\leq\frac{2}{n^{d'_4}}$}.
\end{align}
Noting that by (\ref{sstar}) we have $s^\star+s \le \frac{4 s \log s}{\eta^2} $. Let {$d'_0 = \frac{4(3+d'_4)^2}{\eta^2 c_\eta^2}$} and {$d'_{1} = \frac{16 c'_\eta ( 3 + d_4)}{c_\eta}$}. By the hypothesis of $\varsigma_s(\boldsymbol{\theta}) \le {A} s^{1- \frac{1}{\xi}}$ for some constant $A$, and invoking the results of Lemmas \ref{lem:rsc} and \ref{prop_omp}, we get:
\begin{align*}
\left \|\widehat{\boldsymbol{\theta}}^{(s^\star)}_{{\sf OMP}}-\boldsymbol{\theta}_S \right\|_2 &\leq d'_{2} \sqrt{\frac{s \log s \log p}{n}} + d''_{2} \sqrt{s\log s} \varsigma_s(\boldsymbol{\theta})\\
&\leq  d'_{2} \sqrt{\frac{s \log s \log p}{n}} + d''_{2} \frac{\sqrt{\log s}}{s^{\frac{1}{\xi}-\frac{3}{2}}},
\end{align*}
where $d'_{2} = \frac{16 \pi c_0 \sqrt{24 (1+d'_4)}}{\eta}$ and $d''_{2} = \frac{A }{ \pi \eta^3}$, 
with probability at least  $1- {\frac{2c_1}{p^{d'_4}}} - {\frac{1}{p^{d'_4}}} -\frac{2}{n^{d'_4}}$. Finally, we have: 
\begin{align*}
\left \|\widehat{\boldsymbol{\theta}}^{(s^\star)}_{{\sf OMP}}-\boldsymbol{\theta}\right\|_2 &= \left \|\widehat{\boldsymbol{\theta}}^{(s^\star)}_{{\sf OMP}}-\boldsymbol{\theta}_S +\boldsymbol{\theta}_S -\boldsymbol{\theta} \right \|_2 \\
& \leq \left \|\widehat{\boldsymbol{\theta}}^{(s^\star)}_{{\sf OMP}}-\boldsymbol{\theta}_S \right \|_2 + \|\boldsymbol{\theta}_S-\boldsymbol{\theta}\|_2.
\end{align*}
Choosing $d'_{3} = 2 d''_{2}$ completes the proof. \QEDB

\subsection{Proof of Proposition \ref{thm:ar_minimax}}
\label{prf:minimax}
\noindent Consider the event defined by
\begin{align*}
\mathcal{A}:=\left \{\max_{i,j}|\widehat{R}_{ij}-R_{ij}| \leq \tau \right \}.
\end{align*}
Eq. (\ref{eq:max}) in the proof of Lemma \ref{lem:re2} implies that:
\begin{align*}
\mathbb{P}(\mathcal{A}^c) \leq c_1 p^2 (n+p) \exp \left( - \frac{ c_2 \sqrt{\tau n}}{\frac{c_3 \sigma_{\sf w}^4 (n+p)}{\tau^{3/2} n^{3/2}} + \sigma_{\sf w}}\right).
\end{align*}
By choosing $\tau$ as in the proof of Theorem \ref{thm:ar_1}, we have
\begin{align*}
\mathcal{R}^2_{\sf est} (\widehat{\boldsymbol{\theta}}_{\sf minimax}) & \leq \mathcal{R}^2_{\sf est} (\widehat{\boldsymbol{\theta}}_{\ell_1}) = \sup_{\mathcal{H}} \left(\mathbb{E}\left[\|\widehat{\boldsymbol{\theta}}_{\ell_1}-\boldsymbol{\theta}\|_2^2\right]\right)\\
& \leq \mathbb{P}(\mathcal{A})  d^2_3 {\frac{s \log p}{n}} +\sup_{\mathcal{H}} \mathbb{E}_{{A}^c} \left [  \|\widehat{\boldsymbol{\theta}}_{\ell_1}-{\boldsymbol{\theta}}\|_2^2\right]\\
&\leq \resizebox{0.7\columnwidth}{!}{$d^2_3 {\frac{s \log p}{n}}+ \displaystyle 8(1-\eta)^2 c_1 \exp \left(-\frac{ c_2 \sqrt{\tau n}}{\frac{c_3 \sigma_{\sf w}^4 (n+p)}{\tau^{3/2} n^{3/2}} + \sigma_{\sf w}} + 3 \log p\right)$},
\end{align*}
where the second inequality follows from Theorem \ref{thm:ar_1}, and the third inequality follows from the fact that $\| \widehat{\boldsymbol{\theta}}_{\ell_1} - \boldsymbol{\theta} \|_2^2 \le 4 (1-\eta)^2$ by the sufficient stability assumption. For {$n > s \max \{ d_0 (\log p)^2, d_1 {(p \log p)^{1/2}\}}$}, the first term will be the dominant, and thus we get $\mathcal{R}_{\sf est} (\widehat{\boldsymbol{\theta}}_{\sf minimax}) \le 2 d_3 \sqrt{\frac{s \log p}{n}}$, for large enough $n$.

As for a lower bound on $\mathcal{R}_{\sf est} (\widehat{\boldsymbol{\theta}}_{\sf minimax})$, we take the approach of \cite{goldenshluger2001nonasymptotic} by constructing a family of AR processes with sparse parameters $\boldsymbol{\theta}$ for which the minimax risk is optimal modulo constants. In our construction, we assume that the innovations are Gaussian. The key element of the proof is the Fano's inequality:
\begin{lemma}[Fano's Inequality]
\label{lem:ar_fano_ineq}
Let $\mathcal{Z}$ be a class of densities with a subclass $\mathcal{Z}^\star$ of densities $f_{\boldsymbol{\theta}_i}$, parameterized by $\boldsymbol{\theta}_i$, for $i \in \{0,\cdots,2^M \}$. Suppose that for any two distinct $\boldsymbol{\theta}_1, \boldsymbol{\theta}_2 \in \mathcal{Z}^\star$, 
$\mathcal{D}_{\sf KL}(f_{\boldsymbol{\theta}_1}\| f_{\boldsymbol{\theta}_2}) \leq \beta$ for some constant $\beta$. Let $\widehat{\boldsymbol{\theta}}$ be an estimate of the parameters. Then
\vspace{-.2cm}
\begin{equation}
\label{eq:fano}
\sup_j \mathbb{P}(\widehat{\boldsymbol{\theta}}\neq {\boldsymbol{\theta}_j}|H_j) \geq 1 - \frac{\beta+\log2}{M},
\end{equation}
where $H_j$ denotes the hypothesis that $\boldsymbol{\theta}_j$ is the true parameter, and induces the probability measure $\mathbb{P}(.|H_j)$.
\end{lemma}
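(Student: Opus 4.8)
The plan is to prove Fano's inequality by the classical information-theoretic argument: replace the worst-case error probability over the hypotheses by the average error under a uniform prior, and then bound that average below using the entropy form of Fano's inequality together with the data-processing inequality and a mixture bound on the mutual information.

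First I would reduce to the average case. I place a uniform prior on the retained hypotheses, letting $J$ be drawn uniformly over the $N$ indices (with $N = 2^M$ after discarding at most one index), drawing $X \sim f_{\boldsymbol{\theta}_J}$, and defining a decoded index $\widehat{J}$ as a deterministic function of the estimator, e.g. $\widehat{J} = j$ exactly when $\widehat{\boldsymbol{\theta}} = \boldsymbol{\theta}_j$. Since the $\boldsymbol{\theta}_j$ are distinct, the event $\{\widehat J \neq j\}$ is contained in $\{\widehat{\boldsymbol{\theta}} \neq \boldsymbol{\theta}_j\}$, so
\[
\sup_j \mathbb{P}(\widehat{\boldsymbol{\theta}} \neq \boldsymbol{\theta}_j \mid H_j) \;\ge\; \frac{1}{N}\sum_{j} \mathbb{P}(\widehat J \neq j \mid H_j) \;=:\; \bar P_e ,
\]
and it suffices to lower bound the average error $\bar P_e$.

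Next I would apply the entropy form of Fano's inequality to the pair $(J,\widehat J)$. With $E := \mathbbm{1}[\widehat J \neq J]$, expanding $H(J,E\mid\widehat J)$ by the chain rule in two ways gives the standard bound
\[
H(J \mid \widehat J) \;\le\; H_b(\bar P_e) + \bar P_e \log(N-1) \;\le\; \log 2 + \bar P_e \log N ,
\]
where $H_b$ denotes the binary entropy. Because $J \to X \to \widehat J$ is a Markov chain, the data-processing inequality gives $I(J;\widehat J) \le I(J;X)$, and the uniform prior yields $H(J\mid\widehat J) = \log N - I(J;\widehat J) \ge \log N - I(J;X)$.

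The step carrying the real content is bounding $I(J;X)$ by the assumed pairwise divergence $\beta$. Writing the mixture $\bar f := \frac1N \sum_k f_{\boldsymbol{\theta}_k}$ and using $I(J;X) = \frac1N \sum_j \mathcal{D}_{\sf KL}(f_{\boldsymbol{\theta}_j} \,\|\, \bar f)$, convexity of the KL divergence in its second argument gives $\mathcal{D}_{\sf KL}(f_{\boldsymbol{\theta}_j}\|\bar f) \le \frac1N\sum_k \mathcal{D}_{\sf KL}(f_{\boldsymbol{\theta}_j}\|f_{\boldsymbol{\theta}_k}) \le \beta$, hence $I(J;X)\le\beta$. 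Chaining the two displays gives $\log N - \beta \le \log 2 + \bar P_e \log N$, and rearranging yields $\bar P_e \ge 1 - (\beta + \log 2)/\log N$; identifying $\log N$ with $M$ under the statement's logarithm convention and combining with the first-step reduction completes the proof. The only genuine obstacles are the double-counting in the entropy chain rule that produces the Fano bound and the mixture/convexity control of $I(J;X)$ — both classical — so the rest is bookkeeping.
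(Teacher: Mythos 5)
Your proof is correct, and it should be noted that there is no in-paper argument to compare it against: the paper states this lemma without proof, importing it as a classical result in the style of the Fano-type bounds used in the minimax literature it cites (e.g.\ \cite{goldenshluger2001nonasymptotic}). Your chain --- reduction from the supremum to the uniform-average error over the retained hypotheses, the entropy form of Fano obtained by expanding $H(J,E\mid\widehat J)$ by the chain rule, data processing along the Markov chain $J \to X \to \widehat J$, and the mixture/convexity bound $I(J;X) \le \frac{1}{N}\sum_j \mathcal{D}_{\sf KL}\bigl(f_{\boldsymbol{\theta}_j}\,\big\|\,\bar f\bigr) \le \beta$ --- is the standard route, and each step is sound; the decoder $\widehat J$ is well defined once you fix an arbitrary assignment when $\widehat{\boldsymbol{\theta}}$ equals no $\boldsymbol{\theta}_j$, and the containment $\{\widehat J \neq j\}\subseteq\{\widehat{\boldsymbol{\theta}}\neq\boldsymbol{\theta}_j\}$ then holds exactly as you claim. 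The one point deserving explicit care is the logarithm convention: with $N = 2^M$ retained hypotheses your argument yields the lower bound $1 - (\beta+\log 2)/\log N$, which coincides with the stated $1 - (\beta+\log 2)/M$ only if logarithms are base $2$; under natural logarithms $\log N = M\log 2 < M$, so you prove a marginally weaker constant than written. The paper is itself loose on this point --- in its application of the lemma it takes the denominator to be $\tfrac{1}{2}s\log p$, i.e.\ the natural logarithm of the cardinality of $\mathcal{Z}^\star$ --- so flagging the convention, as you did, is the correct resolution rather than a genuine gap.
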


{Consider a class $\mathcal{Z}$ of AR processes with $s$-sparse parameters over any subset $S \subset \{1,2,\cdots,p\}$ satisfying $|S| =s$,} with parameters given by
\begin{equation}
\label{eq:ar_minimax_params}
\theta_\ell = \pm e^{-m} \mathbbm{1}_S(\ell),
\end{equation}
where $m$ remains to be chosen. We also add the all zero vector $\boldsymbol{\theta}$ to $\mathcal{Z}$. For a fixed $S$,  we have $2^s + 1$ such parameters forming a subfamily $\mathcal{Z}_S$. Consider the maximal collection of ${p \choose s}$ subsets $S$ for which any two subsets differ in at least $s/4$ indices. The size of this collection can be identified by $A(p, \frac{s}{4}, s)$ in coding theory, where $A(n,d,w)$ represents the maximum size of a binary code of length $n$ with minimum distance $d$ and constant weight $w$ \cite{macwilliams1977theory}. We have
\[
A(p, {\textstyle \frac{s}{4} }, s) \ge \frac{p^{\frac{7}{8}s - 1}}{s!},
\]
for large enough $p$ (See Theorem 6 in \cite{graham1980lower}). Also, by the Gilbert-Varshamov bound \cite{macwilliams1977theory}, there exists a subfamily $\mathcal{Z}_S^\star \subset \mathcal{Z}_S$, of cardinality $|\mathcal{Z}_S^\star| \geq 2^{\lfloor s/8 \rfloor}+1$, such that any two distinct $\boldsymbol{\theta}_1,\boldsymbol{\theta}_2 \in \mathcal{Z}_S^\star$ differ at least in $s/16$ components. Thus  for $\boldsymbol{\theta}_1, \boldsymbol{\theta}_2 \in \mathcal{Z}^\star :=  \resizebox{!}{0.3cm}{$\displaystyle \bigcup_S$} \mathcal{Z}^\star_S$, we have
\vspace{-.2cm}
\begin{equation}
\label{eq:ar_theta_lowerbd}
\|\boldsymbol{\theta}_1-\boldsymbol{\theta}_2\|_2 \geq \frac{1}{4} \sqrt{s} e^{-m} =: \alpha,
\end{equation}
and $| \mathcal{Z}^\star | \ge \frac{p^{\frac{7}{8}s - 1}}{s!} 2^{\lfloor s/8 \rfloor}$. For an arbitrary estimate $\widehat{\boldsymbol{\theta}}$, consider the testing problem between the $\frac{p^{\frac{7}{8}s - 1}}{s!} 2^{\lfloor s/8 \rfloor}$ hypotheses $H_j: \boldsymbol{\theta}=\boldsymbol{\theta}_j \in \mathcal{Z}^\star$, using the minimum distance decoding strategy. Using Markov's inequality we have
\begin{align}
\label{eq:ar_sup_lowerbd}
\notag \sup_{\mathcal{Z}} \mathbb{E}\left[\|\widehat{\boldsymbol{\theta}}-{\boldsymbol{\theta}}\|_2\right] &\geq \sup_{\mathcal{Z}^\star} \mathbb{E}\left [\|\widehat{\boldsymbol{\theta}}-{\boldsymbol{\theta}}\|_2\right]\\
\notag &\geq \frac{\alpha}{2} \sup_{\mathcal{Z}^\star}\mathbb{P}\left(\|\widehat{\boldsymbol{\theta}}-{\boldsymbol{\theta}}\|_2 \geq \frac{\alpha}{2} \right)\\
& = \frac{\alpha}{2}\sup_{j} \mathbb{P}\left(\widehat{\boldsymbol{\theta}}\neq {\boldsymbol{\theta}_j}|H_j \right).
\end{align}
Let $f_{\boldsymbol{\theta}_j}$ denote joint probability distribution of $\{x_k\}_{k=1}^n$ conditioned on $\{x_k\}_{k=-p+1}^0$ under the hypothesis $H_j$. Using the Gaussian assumption on the innovations, for $i \neq j$, we have
\begin{align}
\label{eq:ar_kl_bound}
\notag
& \mathcal{D}_{\sf KL}(f_{\boldsymbol{\theta}_i}\|f_{\boldsymbol{\theta}_j}) \leq \sup_{i\neq j} \mathbb{E}\left[\log \frac{f_{\boldsymbol{\theta}_i}}{f_{\boldsymbol{\theta}_j}} |H_i\right]\\
\notag & \leq \sup_{i \neq j}\resizebox{.89\columnwidth}{!}{$\displaystyle \mathbb{E}\left[-\frac{1}{2\sigma^2_{\sf w}} \sum_{k=1}^n \left( \left(x_k-\boldsymbol{\theta}_i'\mathbf{x}_{k-p}^{k-1} \right)^2 - \left(x_k-\boldsymbol{\theta}_j'\mathbf{x}_{k-p}^{k-1} \right)^2 \right)  \Big| H_i\right]$}\\
\notag & \leq \sup_{i \neq j} \frac{n}{2\sigma^2_{\sf w}} \mathbb{E}\left[ \left((\boldsymbol{\theta}_i-\boldsymbol{\theta}_j)'\mathbf{x}_{k-p}^{k-1}\right)^2 \Big| H_i \right]\\
\notag & = \frac{n}{2\sigma^2_{\sf w}}\sup_{i \neq j} (\boldsymbol{\theta}_i-\boldsymbol{\theta}_j)' \mathbf{R} (\boldsymbol{\theta}_i-\boldsymbol{\theta}_j)\\
& \leq \frac{n \lambda_{\sf max}}{2\sigma^2_{\sf w}}\sup_{i \neq j} \|\boldsymbol{\theta}_i-\boldsymbol{\theta}_j\|_2^2 \leq \frac{n s e^{-2m}}{64 \pi \eta^2} =: \beta.
\end{align}

Using Lemma \ref{lem:ar_fano_ineq}, (\ref{eq:ar_theta_lowerbd}), (\ref{eq:ar_sup_lowerbd}) and (\ref{eq:ar_kl_bound}) yield:
\begin{equation*}
\sup_{\mathcal{Z}} \mathbb{E}\left[\|\widehat{\boldsymbol{\theta}}-{\boldsymbol{\theta}}\|_2 \right] \geq \frac{\sqrt{s}  e^{-m}}{8}\left(1-\frac{2\left( \frac{  n s e^{-2m}}{64 \pi \eta^2}+\log 2\right)}{s \log p}\right).
\end{equation*}
for $p$ large enough so that $\log p \ge \frac{\log s - \frac{9}{8}}{\frac{3}{8} - \frac{1}{s}}$. Choosing $m = \frac{1}{2} \log \left ( \frac{n}{8 \pi \eta^2 \log p} \right )$ gives us the claim of Proposition \ref{thm:ar_minimax} with $L = \frac{d_3}{\eta \sqrt{2\pi}}$ for large enough $s$ and $p$ such that $s \log p \ge \log (256)$. The hypothesis of $s  \le \frac{1- \eta}{\sqrt{8\pi} \eta} \sqrt{\frac{n}{\log p}}$ guarantees that for all $\boldsymbol{\theta}\in \mathcal{Z}^\star$, we have $\|\boldsymbol{\theta}\|_1 \leq 1-\eta$. \QEDB

%
%


\vspace{-3mm}
\subsection{Generalization to stable AR processes}
We consider relaxing the sufficient stability assumption of $\| \boldsymbol{\theta} \|_1 \le 1- \eta < 1$ to $\boldsymbol{\theta}$ being in the set of stable AR processes. Given that the set of all stable AR processes is not necessarily convex, the LASSO and OMP estimates cannot be obtained by convex optimization techniques. Nevertheless, the results of Theorems \ref{thm:ar_1} and \ref{thm_OMP} can be generalized to the case of stable AR models:

\begin{corollary}\label{cor:ar_1} The claims of Theorems \ref{thm:ar_1} and \ref{thm_OMP} hold when $\boldsymbol{\Theta}$ is replaced by the set of stable AR processes, except for possibly slightly different constants.
\end{corollary}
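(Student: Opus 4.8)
The plan is to trace the handful of places where the sufficient stability assumption $\|\boldsymbol{\theta}\|_1 \le 1-\eta$ actually enters the proofs of Theorems \ref{thm:ar_1} and \ref{thm_OMP}, and to show that each such use can be replaced by the weaker hypothesis that the spectral spread $\rho = \sup_\omega S(\omega)/\inf_\omega S(\omega)$ be finite, a property shared by every stable AR process. Indeed, if all poles of $H(z)$ lie strictly inside the unit circle, then $S(\omega)$ is continuous and strictly positive on the compact set $[-\pi,\pi]$, so $0 < \inf_\omega S(\omega) \le \sup_\omega S(\omega) < \infty$ and $\rho < \infty$. First I would revisit Corollary \ref{cor:eig_conv}: the explicit interval $[\sigma_{\sf w}^2/8\pi,\ \sigma_{\sf w}^2/2\pi\eta^2]$ for the singular values of $\mathbf{R}$ was obtained solely by substituting $\|\boldsymbol{\theta}\|_1 \le 1-\eta$ into Lemma \ref{eig_conv}. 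As already observed in the remark following that corollary, Lemma \ref{eig_conv} holds verbatim for any stationary process with finite spectral spread; hence for a general stable process the interval is replaced by $[\inf_\omega S(\omega),\ \sup_\omega S(\omega)]$, and the ratio $\lambda_{\sf max}/\lambda_{\sf min}$ is controlled by $\rho$ rather than by $4/\eta^2$.

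The second step is to propagate this replacement through the RE/RSC chain (Lemmas \ref{lem:re}--\ref{lem:rsc}). The hypotheses of the concentration result Lemma \ref{conc_biased} --- the Wold representation (\ref{eq:wold}), the sub-Gaussian moment bound (\ref{eq: bounded_moments}), and the absolute summability (\ref{eq: abs_sum}) --- follow, respectively, from stationarity, the retained sub-Gaussian innovations, and stability of the filter, none of which invokes the $\ell_1$ constraint. Thus Lemmas \ref{lem:re2} and \ref{lem:rsc} go through after setting $\lambda_{\sf min} = \inf_\omega S(\omega)$ and $\lambda_{\sf max} = \sup_\omega S(\omega)$, choosing $m$ to be a suitable multiple of $\rho$ in place of $\lceil 72/\eta^2 \rceil$, and replacing $\kappa = \sigma_{\sf w}^2/16\pi$ by a constant proportional to $\inf_\omega S(\omega)$. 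The spectral constants $c_\eta, c'_\eta$, together with the spread factor $\rho$ entering $s^\star$ in Lemma \ref{prop_omp}, thereby become functions of $\rho$ instead of $\eta$, yet remain finite and positive precisely because $\rho<\infty$; the sampling requirement $n > s\,\max\{d_0(\log p)^2, d_1 (p\log p)^{1/2}\}$ and the error-bound forms of (\ref{eq:ar1error}) and (\ref{eq:aromp}) are unchanged up to these constants.

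The remaining, and genuinely new, point is that the feasible set $\boldsymbol{\Theta}$ must be enlarged to the set of stable AR parameters, which is in general non-convex. I would argue that convexity of $\boldsymbol{\Theta}$ was used only to guarantee that $\widehat{\boldsymbol{\theta}}_{\ell_1}$ and $\widehat{\boldsymbol{\theta}}_{\sf OMP}$ are computable by convex solvers, and plays no role in the error analysis. Concretely, the derivation of the vase membership $\mathbf{h} \in \mathbb{V}$ rests on the single optimality inequality $\mathfrak{L}(\widehat{\boldsymbol{\theta}}_{\ell_1}) + \gamma_n\|\widehat{\boldsymbol{\theta}}_{\ell_1}\|_1 \le \mathfrak{L}(\boldsymbol{\theta}) + \gamma_n\|\boldsymbol{\theta}\|_1$, which continues to hold over any feasible region containing the true $\boldsymbol{\theta}$, regardless of convexity, since the true stable parameter is always feasible. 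Likewise, Negahban's bound (Lemma \ref{RSC_thm}), the gradient concentration via the orthogonality principle (\ref{eq:expec=0}) and Lemma \ref{hoeff_dep}, and the greedy guarantee of Lemma \ref{prop_omp} all take the RSC condition and the gradient bound as their only inputs; none of them requires $\boldsymbol{\Theta}$ to be convex.

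The hard part will be bookkeeping rather than conceptual: one must verify that every constant previously written in closed form through $\eta$ (the $d_0,\dots,d_4$ and their primed counterparts, the choice of $m$, and $\kappa = \sigma_{\sf w}^2/16\pi$) admits a finite analog expressed through $\inf_\omega S(\omega)$, $\sup_\omega S(\omega)$ and $\rho$, and that the RSC constant $\kappa$ stays strictly positive so that (\ref{eq:ar1error}) and (\ref{eq:aromp}) remain meaningful. Because a stable AR process has strictly positive $\inf_\omega S(\omega)$ and finite $\sup_\omega S(\omega)$, all such analogs exist, which establishes the corollary with possibly different --- but still explicit --- constants.
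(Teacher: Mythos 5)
Your proposal is correct and follows essentially the same route as the paper, whose proof consists precisely of replacing the singular-value interval $\left[\frac{\sigma^2_{\sf w}}{8\pi}, \frac{\sigma^2_{\sf w}}{2\pi\eta^2}\right]$ from Corollary \ref{cor:eig_conv} by $[\inf_\omega S(\omega), \sup_\omega S(\omega)]$ and propagating the resulting constants. Your additional verifications --- that the hypotheses of Lemma \ref{conc_biased} need only stationarity, sub-Gaussianity and stability, and that convexity of $\boldsymbol{\Theta}$ enters only the computation, not the error analysis (a point the paper relegates to Remark 1) --- merely spell out details the paper leaves implicit.
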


\begin{proof} Note that the stability of the process guarantees boundedness of the power spectral density. The result follows by simply replacing the bounds $\left [\frac{\sigma^2_{\sf w}}{8 \pi}, \frac{\sigma^2_{\sf w}}{2 \pi \eta^2} \right]$ on the singular values of the covariance matrix $\mathbf{R}$ in Corollary \ref{cor:eig_conv} by $[\inf_\omega S(\omega), \sup_\omega S(\omega)]$.
\end{proof}

\vspace{-3mm}
\section{Statistical Tests for Goodness-of-Fit}
\label{app:ar_tests}

In this appendix, we will give an overview of the statistical goodness-of-fit tests for assessing the accuracy of the AR model estimates. A detailed treatment can be found in  \cite{lehmann1986testing}.
\subsection{Residue-based tests}
Let $\widehat{\boldsymbol{\theta}}$ be an estimate of the parameters of the process. The residues (estimated innovations) of the process based on $\widehat{\boldsymbol{\theta}}$ are given by
\vspace{-.2cm}
\begin{equation*}
e_k = x_k - \widehat{\boldsymbol{\theta}}\mathbf{x}_{k-p}^{k-1}, \quad\quad i=1,2,\cdots,n. 
\end{equation*}
The main idea behind most of the available statistical tests is to quantify how close the sequence $\{e_i\}_{i=1}^{n}$ is to an i.i.d. realization of a known distribution $F_0$ which is most likely absolutely continuous . Let us denote the empirical distribution of the $n$-samples by $\widehat{F}_n$. If the samples are generated from $F_0$ the Glivenko-Cantelli theorem suggests that:
\begin{equation*}
\sup_t \;|\widehat{F}_n(t)-F_0(t)| \stackrel{\sf a.s.}\longrightarrow 0.
\vspace{-.2cm}
\end{equation*}
\noindent That is, for large $n$ the empirical distribution $\widehat{F}_n$ is uniformly close to $F_0$. The Kolmogorov-Smirnov (KS) test, Cram\'{e}r-von Mises (CvM) criterion and the Anderson-Darling (AD) test are three measures of discrepancy between $\widehat{F}_n$ and $F_0$ which are easy to compute and are sufficiently discriminant against alternative distributions. More specifically, the limiting distribution of the following three random variables are known:
\noindent The KS test statistic
\begin{equation*}
K_n:=\sup_t \; |\widehat{F}_n(t)-F_0(t)|,
\end{equation*}
the CvM statistic
\begin{equation*}
C_n:= \int \big(\widehat{F}_n(t)-F_0(t)\big)^2 dF_0(t),
\end{equation*}
and the AD statistic
\begin{equation*}
A_n:= \int \frac{\big(\widehat{F}_n(t)-F_0(t)\big)^2}{F_0(t)\left(1-F_0(t)\right)}dF_0(t).
\end{equation*}
For large values of $n$, the Glivenko-Cantelli theorem also suggests that these statistics should be small. A simple calculation leads to the following equivalent for the statistics:
\[
K_n = \max_{1\leq i \leq n} \max \left\{\left|\frac{i}{n}-F_0(e_i) \right|, \left|\frac{i-1}{n}-F_0(e_i) \right| \right\},
\]
\[
nC_n = \frac{1}{12n}+\sum_{i=1}^n \left(F_0(e_i)-\frac{2i-1}{2n} \right)^2,
\]
and
\vspace{-.2cm}
\[
nA_n = -n-\frac{1}{n}\sum_{i=1}^n\left(2i-1\right)\Big( \log F_0(e_i)+ \log\left(1-F_0(e_i)\Big)\right).
\]
\vspace{-.7cm}

\subsection{\textcolor{black}{Spectral domain} tests for Gaussian AR processes}
The aforementioned KS, CvM and AD tests all depend on the distribution of the innovations. For Gaussian AR processes, the spectral  versions of these tests are introduced in \cite{anderson1997goodness}. These tests are based on the similarities of the periodogram of the data and the estimated power-spectral density of the process. The key idea is summarized in the following lemma:
\begin{lemma}
\label{lemma:ar_spec_CvM}
Let $S(\omega)$ be the (normalized) power-spectral density of stationary process with bounded spectral spread, and $\widehat{S}_n(\omega)$ be the periodogram of the $n$ samples of a realization of such a process, then for all $\omega$ we have:
\begin{equation}
\label{eq:ar_spec_gaussian}
\sqrt{n}\left(2\int_{0}^{\omega}\left(\widehat{S}_n(\lambda)- S(\lambda)\right) d\lambda \right) \stackrel{\sf d.}\longrightarrow \mathcal{Z}(\omega),
\end{equation}
where $\mathcal{Z}(\omega)$ is a zero-mean Gaussian process.
\end{lemma}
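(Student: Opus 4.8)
The plan is to establish a functional central limit theorem for the empirical spectral process $J_n(\omega) := \sqrt{n}\big(2\int_0^\omega(\widehat{S}_n(\lambda)-S(\lambda))\,d\lambda\big)$ by reducing it to the joint asymptotic normality of the sample autocovariances and then upgrading pointwise convergence to convergence of the entire process. Writing $\widehat{c}_k := \frac{1}{n}\sum_t x_t x_{t+k}$ and using the representation $\widehat{S}_n(\lambda)=\frac{1}{2\pi}\big(\widehat{c}_0+2\sum_{k=1}^{n-1}\widehat{c}_k\cos k\lambda\big)$ together with $\int_0^\omega\cos k\lambda\,d\lambda=\sin(k\omega)/k$, term-by-term integration yields
\[
2\int_0^\omega \widehat{S}_n(\lambda)\,d\lambda = \frac{1}{\pi}\Big(\widehat{c}_0\,\omega + 2\sum_{k=1}^{n-1}\widehat{c}_k\,\frac{\sin k\omega}{k}\Big),
\]
and the analogous identity holds for $S$ with the true autocovariances $c_k$. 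Subtracting, the target process becomes a weighted sum of centered sample autocovariances,
\[
J_n(\omega) = \frac{\sqrt{n}}{\pi}\sum_{k=0}^{n-1}(\widehat{c}_k-c_k)\,g_k(\omega)\;-\;\frac{2\sqrt{n}}{\pi}\sum_{k\ge n} c_k\,\frac{\sin k\omega}{k},
\]
with $g_0(\omega)=\omega$ and $g_k(\omega)=2\sin(k\omega)/k$ for $k\ge 1$. Since bounded spectral spread forces the $c_k$ to decay exponentially (cf. Corollary \ref{cor:eig_conv} and Lemma 1 of \cite{goldenshluger2001nonasymptotic}), the tail sum is $o(n^{-1/2})$ uniformly in $\omega$, so the second term vanishes after the $\sqrt{n}$ scaling and $J_n$ is asymptotically equal to the finite weighted autocovariance sum.

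The next step establishes convergence of the finite-dimensional distributions. Fixing frequencies $\omega_1,\dots,\omega_r$, every linear combination $\sum_j a_j J_n(\omega_j)$ is a fixed linear combination of the $\sqrt{n}(\widehat{c}_k-c_k)$, hence a quadratic form in the observation vector. Under the Gaussianity assumption of this subsection, Isserlis' formula makes the limiting second-order structure explicit: Bartlett's formula gives
\[
n\,\mathrm{Cov}(\widehat{c}_j,\widehat{c}_k)\longrightarrow \sum_{h=-\infty}^{\infty}\big(c_h c_{h+k-j}+c_{h+k}c_{h-j}\big),
\]
where absolute summability of the right-hand side is guaranteed by the exponential decay of $c_h$. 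A Cram\'er--Wold argument together with a standard CLT for quadratic forms of Gaussian stationary sequences then yields joint asymptotic normality of $(J_n(\omega_1),\dots,J_n(\omega_r))$; this simultaneously identifies $\mathcal{Z}(\omega)$ as a zero-mean Gaussian process (zero mean because the periodogram is asymptotically unbiased) and determines its covariance kernel as the corresponding bilinear form in the weights $g_k$.

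The remaining and most delicate step is tightness of $J_n(\cdot)$ on $[0,\pi]$, which upgrades the finite-dimensional convergence to convergence in $C[0,\pi]$ and thereby to the process-level statement. I would control the increments through a fourth-moment bound of the form
\[
\mathbb{E}\big|J_n(\omega_2)-J_n(\omega_1)\big|^4 \le C\,|\omega_2-\omega_1|^2,
\]
which delivers tightness via the standard moment criterion. Establishing this bound is the main obstacle: the increment is again a weighted autocovariance sum, now with weights $g_k(\omega_2)-g_k(\omega_1)$, and its fourth moment expands into sums of high-order joint cumulants whose summability must be controlled \emph{uniformly} in the growing number of lags $k\le n$. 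Here I would rely on the exponential decay of the true autocovariances to tame the cumulant sums, in combination with the concentration estimate of Lemma \ref{conc_biased}, which bounds the fluctuation of each $\widehat{c}_k$ around $c_k$ and limits the aggregate contribution of the high-lag terms. Combining the finite-dimensional convergence with this tightness yields weak convergence of $J_n$ to the Gaussian process $\mathcal{Z}$, completing the proof.
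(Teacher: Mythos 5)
The paper does not actually prove this lemma: it is imported from \cite{anderson1997goodness}, with the remark that the covariance function of $\mathcal{Z}(\cdot)$ is computed there. So there is no in-paper argument to compare against; what you have done is reconstruct, correctly in outline, the classical Grenander--Rosenblatt/Anderson proof of weak convergence of the empirical spectral distribution process, which is essentially the argument behind the cited result: reduce the integrated periodogram to a weighted sum of sample autocovariances with weights $g_0(\omega)=\omega$, $g_k(\omega)=2\sin(k\omega)/k$; obtain finite-dimensional Gaussian limits via Bartlett's covariance formula (whose fourth-cumulant correction vanishes under the Gaussianity assumption in force in this subsection) and a CLT for quadratic forms of Gaussian stationary sequences via Cram\'er--Wold; and upgrade to process convergence through a fourth-moment increment bound and the standard tightness criterion. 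Your identification of the tightness step as the genuinely laborious part is accurate --- that is exactly where the cumulant bookkeeping lives in the literature.

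Two details need repair. First, your claim that bounded spectral spread ``forces the $c_k$ to decay exponentially'' is false for general stationary processes: $0<\inf_\omega S(\omega)\le\sup_\omega S(\omega)<\infty$ imposes no smoothness on $S$, so autocovariances can decay polynomially. Exponential decay is available here only because the processes of interest are stable AR($p$) under the sufficient stability assumption (cf.~Remark 5 of the paper); and in fact your tail estimate needs much less, since $|g_k(\omega)|\le 2/k$ gives $\sqrt{n}\,\bigl|\sum_{k\ge n} c_k g_k(\omega)\bigr|\le \frac{2}{\sqrt{n}}\sum_{k\ge n}|c_k|\to 0$ under mere absolute summability. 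Second, your decomposition silently centers $\widehat{c}_k$ at $c_k$, but the biased estimator implicit in the periodogram satisfies $\mathbb{E}[\widehat{c}_k]=(1-k/n)c_k$, so a deterministic drift $-\frac{2}{\pi\sqrt{n}}\sum_k c_k\sin(k\omega)$ appears in $J_n$; it is $O(n^{-1/2})$ uniformly under summable covariances, but this deserves a line rather than the gloss ``asymptotically unbiased.'' Finally, Lemma \ref{conc_biased} is not really the right tool for the moment criterion --- a per-lag concentration bound union-bounded over $k\le n$ does not deliver $\mathbb{E}|J_n(\omega_2)-J_n(\omega_1)|^4\le C|\omega_2-\omega_1|^2$; for Gaussian data the direct Isserlis/cumulant computation for quadratic forms is both necessary and sufficient, and is how \cite{anderson1997goodness} proceeds. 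With these corrections your sketch is a faithful stand-in for the omitted proof.
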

The explicit formula for the covariance function of $\mathcal{Z}(.)$ is calculated in \cite{anderson1997goodness}. Lemma \ref{lemma:ar_spec_CvM} suggests that for a good estimate $\widehat{\boldsymbol{\theta}}$ which admits a power spectral density $S(\omega;\widehat{\boldsymbol{\theta}})$,  one should get a (\textit{close} to) Gaussian process replacing $S(\omega)$ with $S(\omega;\widehat{\boldsymbol{\theta}})$ in (\ref{eq:ar_spec_gaussian}). The spectral form of the CvM, KS and AD statistics can thus be characterized given an estimate $\widehat{\boldsymbol{\theta}}$.

\vspace{-.2cm}
\section*{Acknowledgment}
This material is based upon work supported in part by the National Science Foundation under Grant No. 1552946.

{
\small
\bibliographystyle{IEEEtran}
\bibliography{Sparse_AR}
}

\end{document}